\newtheorem{theorem}{Theorem}[section]
\newtheorem{lemma}[theorem]{Lemma}
\newtheorem{proposition}{Proposition}[section]
\newtheorem{corollary}{Corollary}[section]
\theoremstyle{definition}
\newtheorem{definition}[theorem]{Definition}
\newtheorem{example}[theorem]{Example}
\theoremstyle{remark}
\newtheorem{remark}[theorem]{Remark}
\newtheorem{pf4.7}[theorem]{proof of theorem 4.7}
\numberwithin{equation}{section}
\def\Z{\mathbb{Z}}
\def\LKD{\Lambda_{K_{d}(A)}}
\def\LTD{\Lambda_{T_{d}(A)}}
\def \Kd{K_{d}(A)}
\def \Ld{L_{d}(N)}
\def \DT{DT_{d}(N)}
\def\Tr{\text{Tr}}
\def\d{\mathrm{d}}
\def\dt{\mathrm{d}t}
\DeclareMathOperator*{\LIM}{LIM}
\def\avint{\mathop{\,\rlap{-}\!\!\int}\limits}
\begin{document}

\title{Asymptotic analysis of determinant of discrete Laplacian}

\author{Yuhang Hou}
\author{Santosh Kandel}

\begin{abstract}
In this paper, we study the relation between the partition function of the free scalar field theory on hypercubes with boundary conditions and asymptotics of discrete partition functions on a sequence of ``lattices'' which approximate the hypercube as the mesh approaches to zero. More precisely, we show that the logarithm of the zeta regularized determinant of Laplacian on the hypercube with Dirichlet boundary condition appears as the constant term in the asymptotic expansion of the log-determinant of the discrete Laplacian up to an explicitly computable constant. We also investigate similar problems for the massive Laplacian on tori.

\end{abstract}
\maketitle

\section{Introduction}

Lattice field theories have been very successful to study non-perturbative problems in Quantum Field Theories (QFT). The results they produce not only agrees with the experiments, but also give insights on quantum field theories \cite{rothe2012lattice}. For example, Ising models at the critical temperature has been used to probe conformal field theories, in particular, the scaling limit of a Ising model at the critical temperature captures certain aspects of conformal field theories \cite{hongler2013energy, dubedat2009sle}. Despite the notion of scaling limit for discrete Gaussian QFTs is not well understood in general, these studies suggest that the asymptotic behavior of discrete Gaussian QFT associated to triangulations of a compact Riemannian manifold may be used to study the free Bosonic QFT on the manifold as the mesh of the triangulation approaches to zero. In particular, it may be used to construct a free Bosonic QFT on a compact Riemannian manifold as a ``scaling limit" of the discrete Gaussian QFT as the mesh becomes smaller and smaller. This motivates the main goal of this work, which is to analyze the asymptotic behavior of the discrete partition functions and to study whether the asymptotic expansion contains any information about the partition function of the continuum theory. In the discrete case, the space of fields associated to a finite lattice is a finite dimensional vector space. Hence, for a discrete Gaussian QFT on a finite ``lattice'', the partition function can be expressed in terms of the determinant of an operator on a finite dimensional vector space. However, the space of fields for the free Bosonic QFT on a Riemannian manifold
is an infinite dimensional vector space and more importantly, the Laplacian, which is used to define action functional, is unbounded operator. Hence, we need a notion of determinant which generalizes the usual notion of determinant in order to define the partition function of the theory. In
\cite{hawking1977zeta}, Hawking suggests that the notion of zeta regularized determinant can be used to define the partition function of a free Bosonic QFT. If we use the zeta regularized determinant of Laplacian on a compact Riemannian manifold to define partition function of free Bosonic QFT on the manifold, it is reasonable to expect that the asymptotic expansion of (log) determinant of the discrete Laplacian (with respect to the mesh) may contain some information about the (log) zeta regularized determinant of the Laplacian.

The study of asymptotic for determinants of discrete Laplacians has a long history \cite{kirchhoff1847ueber,kasleleyen1963,duplanticr1988exact} and there are several partial results. Kasteleyn \cite{kasleleyen1963} computed a partial asymptotic expansion on a two dimensional torus. Duplantier-David \cite{duplanticr1988exact} derived a partial asymptotic expansion for the log of the determinant of corresponding discrete Laplacian on rectangular domains in $\mathbb{R}^2$ and two dimensional torus and showed that the zeta regularized determinant of the Laplacian (on rectangular domains and on two dimensional tori) appears in the constant term of the asymptotic expansion. Kenyon \cite{kenyon2000asymptotic} derived a partial asymptotic expansion for the determinant of the corresponding discrete Laplacian on rectilinear polygonal domains,
however, he did not calculate the constant term of the expansion. For a general Riemannian manifold, the existence of a constant term in that partial expansion, let alone its identification with the zeta-regularized determinant, remains an open problem. If the spectrums of the discrete Laplacian and Hodge Laplacians are explicitly known, then it might be possible to get finer asymptotic results. This is the case for tori, which has been studied by Kasteleyn \cite{kasleleyen1963} and Duplantier-David \cite{duplanticr1988exact} in dimension two and by
Chinta, Jorgenson and Karlsson \cite{chinta2010zeta} in the general case. It is also shown in \cite{chinta2010zeta} that the constant term in the asymptotic for the determinant of discrete Laplacians is the logarithm of the zeta-determinant generalizing previous results by Duplantier-David \cite{duplanticr1988exact}. A similar problem, the analysis of an asymptotic expansion of log determinant of discrete Laplacians with free boundary condition, is studied by Louis \cite{louis2017asymptotics} based on generalization of the method used by Chinta, Jorgenson and Karlsson \cite{chinta2010zeta}. A different method, which is based on polyhomogeneous expansion of resolvent trace and regularized limit, developed by Vertman \cite{vertman2015regularized} calculates the constant term of the asymptotic for log of the determinant of discrete Laplacian on tori. Furthermore, Vertman's answer agrees with the result of Chinta, Jorgenson and Karlsson \cite{chinta2010zeta} on tori.

In this paper, we first consider the discrete Laplacian on the hypercube with Dirichlet boundary condition and give an asymptotic expansion of the determinant of the discrete Laplacian. We show that the log of the zeta regularized determinant of the continuous Laplacian appears in constant term of the expansion. Furthermore, we modify Vertman's approach to compute the constant term of the asymptotic expansion for the hypercube case with both free and Dirichlet boundary condition. Finally, we investigate a similar problem for massive Laplacian.

\subsection{Summary of the main results}
First, let us first introduce some notation which will appear in the discussion of the main results in this section of this paper. Let $a_1,...a_d>0$, $A:=(a_1,...a_d)$.
$$K_{d}{(A)}:=K_{d}{(a_1,...a_d)}=[0,a_1]\times...\times[0,a_d]$$
denote the $d$-dimensional hypercube. Let $n_i(u), i=1,\dots,d$ be positive integers such that $$\lim_{u\rightarrow\infty}\dfrac{n_i(u)}{u}=a_i$$ and $N$ denote the $d$-tuple $(n_1,\dots, n_d).$ We note that $N$ is a function of $u$ but we will not mention it explicitly unless it is needed.

Let $L_{d}(N)$ denote the $d$-dimensional orthotope (discrete hypercube) which is the product of $d$ path graphs $P_{n_i(u)}$, $i=1,\dots, d$. Also, Define $V_{k}^{d,N}$ for $k=0,1\dots,d$ by 
$$V^{d,N}_k:=\frac{1}{2^{d-k}}\sum_{0< i_1<\cdots<i_k\leq d}\prod^k_{j=1}n_{i_j}$$ 
and 
$$V^d_k:=\frac{1}{2^{d-k}}\sum_{0< i_1<\cdots<i_k\leq d}\prod^k_{j=1}a_{i_j}.$$
We note that $2^{d-k}V^d_k$ is the sum of the volumes of $k$-dimensional faces of $K_d(A)$ \cite{louis2017asymptotics}. 

The discrete Laplacian $\Delta_{L_d(N)}$ on $L_d(N)$ is defined as follows: Let $f$ be a function defined on the discrete hypercube $L_d(N)$, more precisely on the set of vertices, then 
$$
\Delta_{L_d(N)} f(x):=\sum_{y\sim x}(f(x)-f(y)).
$$
If we consider $L_d(N)$ as a graph, then $\Delta_{L_d(N)}$ is the graph Laplacian. In this paper, we are interested in the restriction of $\Delta_{L_d(N)}$ on the functions which vanish on the boundary of ${L_d(N)}.$ We will emphasize such a restriction by calling it the discrete Laplacian with Dirichlet boundary condition. 

The following theorem is one of the main results of the paper:
\begin{theorem}\label{theorem1.1}
Let $\Delta_{L_{d}(N)}$ be the discrete Laplacian on the $d$-dimensional discrete hypercube $L_{d}(N)$ with Dirichlet boundary condition. Let $\Delta_{K_{d}(A)}$ be the Laplacian of the hypercube $K_{d}(A)$ with Dirichlet boundary condition. Then,
\begin{equation}\label{mainteq1}
\begin{aligned}
  \log \det\Delta_{L_{d}(N)}&=\sum^d_{i=1}V^{d,N}_{i}\mathcal{L}^d_{i}(0)-\frac{(-1)^d}{2^d}\log(u^2)+\log {\det}_{\zeta}\Delta_{K_{d}(A)}\\
  &+\frac{(-1)^d}{2^d}\sum^d_{i=1}\log(4i)\dbinom{d}{i}+o(1)
  \end{aligned}
\end{equation}
as $u\rightarrow\infty$, where
$$
\mathcal{L}^d_{i}(0)=-\int ^{\infty}_0\left((-1-e^{-4t})^{d-i} e^{-s^2t}e^{-2it}(I_0(2t))^i-(-2)^{d-i}e^{-t}\right)\frac{\mathrm{d}t}{t}\quad 0<i\leq d,
$$ \text{and} $\log {\det}_{\zeta}\Delta_{K_{d}(A)}$ is the zeta-regularized determinant.
\end{theorem}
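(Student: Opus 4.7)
The plan is to start from the explicit spectral description of $\Delta_{L_{d}(N)}$. Since $L_{d}(N)$ is the product of $d$ path graphs, the Dirichlet eigenvalues take the product-of-path form
$$\lambda_{k_1,\ldots,k_d} = \sum_{j=1}^d 4\sin^2\!\left(\frac{k_j \pi}{2 n_j}\right),$$
with each $k_j$ ranging over the interior index set for that factor. I would then rewrite the log-determinant using the Frullani-type integral
$$\log \lambda = \int_0^\infty \bigl(e^{-t}-e^{-\lambda t}\bigr)\,\frac{dt}{t},$$
so that
$$\log\det\Delta_{L_{d}(N)} = \int_0^\infty \left(\mathcal{N}\, e^{-t} - \sum_{k_1,\ldots,k_d} e^{-\lambda_{k_1,\ldots,k_d}\, t}\right)\frac{dt}{t},$$
where $\mathcal{N}$ is the total number of interior vertices of $L_{d}(N)$. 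This reduces the problem to the asymptotic analysis of the discrete heat trace.

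Next, I would exploit the product structure: because $e^{-\lambda t}$ factors, the discrete heat trace factors into one-dimensional sums. The one-dimensional ingredient is a method-of-images identity --- a Dirichlet sum over $\{1,\ldots,n-1\}$ of an even trigonometric function equals one half of the corresponding periodic sum over $\{0,\ldots,2n-1\}$ minus the boundary terms at $k=0$ and $k=n$. Combined with the Bessel identity $e^{2t\cos\theta}=\sum_\ell I_\ell(2t)e^{i\ell\theta}$, this writes each one-dimensional sum as a ``bulk'' piece $n\,I_0(2t)e^{-2t}$, plus a boundary contribution $\tfrac{-1-e^{-4t}}{2}$, plus an exponentially small Poisson-summation remainder. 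Taking the $d$-fold product, expanding, and collecting terms by the number $i$ of ``bulk'' directions yields factors $(I_0(2t)e^{-2t})^i$ multiplying $\bigl(\tfrac{-1-e^{-4t}}{2}\bigr)^{d-i}$; summed over the choice of $i$ bulk directions, the factors of $2^{-(d-i)}$ combine with $\sum\prod n_{i_j}$ to produce precisely $V^{d,N}_i(-1-e^{-4t})^{d-i}$, recovering the integrand that defines $\mathcal{L}^d_i(0)$.

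The next step is the asymptotic analysis in $u$ of the resulting $t$-integrals. For each $1\leq i\leq d$, the coefficient $V^{d,N}_i$ grows polynomially in $u$ and the accompanying $t$-integral converges once the small-$t$ counterterm $(-2)^{d-i}e^{-t}/t$ is added and subtracted, producing the bulk contribution $V^{d,N}_i\mathcal{L}^d_i(0)$. The $i=0$ piece together with the singular parts absorbed by the counterterms combine to give the explicit $-\frac{(-1)^d}{2^d}\log(u^2)$ logarithm and the combinatorial constant $\frac{(-1)^d}{2^d}\sum_i\binom{d}{i}\log(4i)$. The Poisson-summation remainders are exponentially small in $u$ and are absorbed into the $o(1)$ error.

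The main obstacle is the identification of the remaining $u$-independent constant with $\log\det_\zeta\Delta_{K_{d}(A)}$. To accomplish this I would Mellin-transform the continuum heat trace on $K_{d}(A)$, whose Dirichlet spectrum is the familiar $\pi^2\sum(m_j/a_j)^2$ product, and match, face by face of the hypercube, with the $u$-independent remnants of the discrete computation. The required equality then follows from $\log\det_\zeta\Delta_{K_{d}(A)}=-\zeta'_{\Delta_{K_{d}(A)}}(0)$ after carefully synchronizing the two regularization schemes, so that the small-$t$ counterterms subtracted above reproduce, dimension by dimension, the analytic continuation that defines $\zeta_{\Delta_{K_{d}(A)}}(s)$ at $s=0$. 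Executing this matching cleanly, in particular tracking the compatibility between the ``discrete'' counterterms and the Mellin analytic continuation uniformly in all lower-dimensional faces, is the technical heart of the argument.
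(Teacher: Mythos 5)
Your first half is sound and is essentially the paper's own route: the Frullani representation plus the method-of-images/Bessel splitting of each one-dimensional factor into a bulk piece $n_je^{-2t}I_0(2t)$, a boundary piece $(-1-e^{-4t})/2$ and image terms reproduces the identity $\log\det\Delta_{L_d(N)}=\sum_{i=1}^dV^{d,N}_i\mathcal{L}^d_i(0)+\mathcal{H}_N(0)$, which the paper obtains instead through theta-function relations and an ODE-uniqueness argument in an auxiliary variable $s$. The genuine gap is in the second half, where you claim (a) that the image (``Poisson-summation'') remainders are exponentially small in $u$ and absorbed into $o(1)$, and (b) that the $i=0$ corner piece together with the counterterms produces $-\frac{(-1)^d}{2^d}\log(u^2)$ and the combinatorial constant, leaving only a $u$-independent constant to be matched with $\log{\det}_{\zeta}\Delta_{K_d(A)}$. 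Both claims fail. In your unrescaled decomposition the corner integral $\int_0^\infty\bigl(\bigl(\tfrac{-1-e^{-4t}}{2}\bigr)^d-(-1)^de^{-t}\bigr)\tfrac{\mathrm{d}t}{t}$ and all the counterterm bookkeeping are independent of $u$, so they cannot generate a $\log(u^2)$; and the image terms are not uniformly exponentially small: at the scale $t\sim u^2$ one has $e^{-2t}I_{2n_jl}(2t)\approx(4\pi t)^{-1/2}e^{-(n_jl)^2/t}$, comparable to the bulk term, and after multiplication by the accompanying volume factors their contribution from $t\gtrsim u^2$ is $O(1)$, not $o(1)$. If your bookkeeping were literally correct, the asymptotics would contain neither the $\log(u^2)$ term nor $\log{\det}_{\zeta}\Delta_{K_d(A)}$ at all.

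The missing mechanism is precisely the large-$t$ analysis of this remainder, which is where both the logarithm and the continuum determinant live. In the paper this is done by rescaling $t\mapsto u^2t$ inside $\mathcal{H}_N(0)$: one proves the pointwise convergence $\Theta_{L_d(N)}(u^2t)\to\theta_{K_d(A)}(t)$, splits the integral at $t=1$, and evaluates explicitly $\int_0^1\bigl((1+e^{-4u^2t})^d-2^de^{-u^2t}\bigr)\tfrac{\mathrm{d}t}{t}=\log(u^2)-\sum_{i=1}^d\binom{d}{i}\log(4i)-\Gamma'(1)+o(1)$; the surviving integrals then reassemble, via the split Mellin representation of $\zeta_{K_d(A)}$, into $-\zeta'_{K_d(A)}(0)=\log{\det}_{\zeta}\Delta_{K_d(A)}$. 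Your closing paragraph gestures at a Mellin-transform matching ``face by face,'' but since it is premised on ``$u$-independent remnants'' and exponentially small remainders, it does not supply this step; to complete the proof you must redo the remainder estimate and carry out the $t\sim u^2$ rescaling that extracts the $\log(u^2)$ term and the zeta-regularized determinant from the image terms.
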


\begin{remark}\label{Remark1.1}
The first term of (\ref{mainteq1}) has contributions from the bulk and the boundary, whereas the term in the second line is the contribution from the corner.
\end{remark}

\begin{remark}\label{Remark1.2} In \cite{louis2017asymptotics}, Louis proves a similar result for the discrete Laplacian with the free boundary condition. However, our approach in this paper differs from that of \cite{louis2017asymptotics} (see remark 2.7). 
\end{remark}

Let us consider the special case when $n_i(n)=n$ for all $i=1,2,\dots, d$ where $n\in \mathbb{N}\setminus\{0\}$. In this case, $N(n)$ is the $d$-tuple $(n,\dots, n)$. Let us consider the rescaled discrete Laplacian $\tilde{\Delta}_{L_d(N)}=n^2 \Delta_{L_{d}(N)}$ with the Dirichlet boundary condition. Then, $\log \det\tilde{\Delta}_{L_{d}(N)}$ is a function of $n$. Modifying arguments in \cite{vertman2015regularized}, we show that $\log \det\tilde{\Delta}_{L_{d}(N)}$ has ``regularized limit", denoted by $\LIM$, as $n$ approaches infinity. Moreover, we will also prove the following result.

\begin{theorem}\label{Th1.2}
The logarithmic determinant $\log \det\tilde{\Delta}_{L_{d}(N)}$ has a regularized limit as $n\rightarrow\infty$ and 
\[\log \det\nolimits_{\zeta}\Delta_{K_d((1,1,\dots,1))}=\LIM_{n\rightarrow\infty}\log \det\tilde{\Delta}_{L_{d}(N)} -\frac{1}{2^d}\sum^d_{i=1}\log(4i)(-1)^i\dbinom{d}{i}.\]
\end{theorem}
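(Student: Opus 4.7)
The plan is to adapt the polyhomogeneous/regularized-limit machinery of \cite{vertman2015regularized} to the Dirichlet hypercube, in order to establish simultaneously the existence and the explicit value of $\LIM_{n\to\infty}\log\det\tilde\Delta_{L_d(N)}$. The starting point is the Frullani-type representation
\[
\log\det\tilde\Delta_{L_d(N)} = -\int_0^{\infty}\!\bigl(\Tr e^{-t\tilde\Delta_{L_d(N)}} - D_n e^{-t}\bigr)\,\frac{\dt}{t},
\]
where $D_n$ is the dimension of the Dirichlet space (a polynomial in $n$ of degree $d$), together with the fact that the heat trace factorises as a $d$-fold product because $L_d(N)$ is a product of path graphs with explicit Dirichlet spectrum $4n^2\sin^2\!\bigl(k\pi/(2n)\bigr)$.

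First I would establish existence of the regularized limit by producing a joint expansion of $\log\det\tilde\Delta_{L_d(N)}$ in powers of $n$ and $\log n$. Splitting the $t$-integral at a fixed threshold, the small-$t$ regime is analysed via the integral representation of each one-dimensional summand in terms of the modified Bessel function $I_0$ --- the same ingredient that enters $\mathcal L^d_i(0)$ in Theorem \ref{theorem1.1} --- and yields a polyhomogeneous structure whose coefficients are organised by the face-volume quantities $V^{d,N}_k = \frac{1}{2^{d-k}}\binom{d}{k}n^k$. The large-$t$ regime is compared with the continuum Dirichlet heat trace of $\Delta_{K_d((1,\dots,1))}$, whose Mellin transform at $s=0$ delivers $\log\det_\zeta\Delta_{K_d((1,\dots,1))}$. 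Together these produce an expansion of the form $\sum_{j,k}c_{j,k}n^j(\log n)^k + o(1)$, and $\LIM$ is then defined as the coefficient $c_{0,0}$.

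Second I would identify $\LIM\log\det\tilde\Delta_{L_d(N)}$ explicitly. All $n^k$ and $n^k\log n$ contributions vanish under $\LIM$, so only the continuum zeta-determinant together with a sum of one-dimensional Mellin constants remain. Expanding the product heat trace by separating each factor into a ``bulk'' and a ``boundary'' piece amounts to an inclusion-exclusion over the $(d-i)$-dimensional faces of $K_d((1,\dots,1))$; each $i$-dimensional stratum contributes a fixed one-dimensional Mellin integral that I expect to evaluate to $\log(4i)$, with $\binom{d}{i}$ counting the number of faces of that dimension and the alternating sign $(-1)^i$ generated by the inclusion-exclusion. Assembling these stratum contributions produces the constant $\frac{1}{2^d}\sum_{i=1}^d (-1)^i\log(4i)\binom{d}{i}$, and rearranging gives the claimed identity.

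The main obstacle is the stratum-by-stratum bookkeeping. On the torus treated in \cite{vertman2015regularized} there is only a single ``interior'' stratum, whereas under Dirichlet boundary conditions every lower-dimensional face of $K_d((1,\dots,1))$ feeds its own polyhomogeneous expansion into the small-$t$ analysis. Verifying that, after inclusion-exclusion and after absorbing the $\log n$ divergence generated by the rescaling $\tilde\Delta = n^2\Delta$ (which enters through $D_n\log n^2$) into the polyhomogeneous structure, precisely the claimed constant survives is the delicate step; this is the part of the argument that genuinely modifies Vertman's method rather than merely citing it.
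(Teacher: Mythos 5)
Your plan for the existence part is fine, and your route is genuinely different from the paper's: the paper proves this theorem in Section 3 by Vertman's method --- a polyhomogeneous expansion of the resolvent trace $\Tr(\tilde{\Delta}^F_{L_d(N)}+z^2)^{-d}$ obtained from Euler--Maclaurin summation for the \emph{free} boundary condition, the regularized-integral identity $-2\avint^{\infty}_{0} z^{2d-1}(z^2+\lambda)^{-d}\,\d z=\log\lambda$, and then a transfer to the Dirichlet case through the eigenvalue relation $\log\det\Delta^D_{L_d(N)}=\sum_{i=1}^d(-1)^{d-i}\binom{d}{i}\log\det\Delta^F_{L_i(N)}$ plus a binomial resummation. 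What you propose instead is the heat-trace (Chinta--Jorgenson--Karlsson) route of Section 2: in effect, specialize Theorem \ref{theorem1.1} to $a_i=1$, add $(n-1)^d\log(n^2)$ from the rescaling, and note that every term other than the constant is a pure $n^k\log^j n$ with $(k,j)\neq(0,0)$ and hence is annihilated by $\LIM$. That is a legitimate, and arguably shorter, route; the difference of method is not the problem.

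The genuine gap is in the step you yourself flag as delicate: the identification of the constant. Your mechanism --- each $i$-dimensional face contributing a Mellin constant $\log(4i)$, with $\dbinom{d}{i}$ counting such faces and the sign $(-1)^i$ coming from inclusion--exclusion --- is not how the constant arises. The cube has $2^{d-i}\dbinom{d}{i}$ faces of dimension $i$, not $\dbinom{d}{i}$; more importantly, the strata with $i\geq 1$ contribute only the extensive terms $V^{d,N}_i\mathcal{L}^d_i(0)\propto n^i$, which vanish under $\LIM$ and give nothing to the constant. The entire finite constant comes from the zero-dimensional (corner) term $2^{-d}(-1-e^{-4t})^d$ in the theta-function decomposition, combined with the $e^{-t}$ subtraction and the rescaling: expanding $(1+e^{-4u^2t})^d$ binomially, the term $\dbinom{d}{i}e^{-4iu^2t}$ produces $\log(4i)$ via $\int_0^1(e^{-4iu^2t}-1)\,\dt/t=\Gamma'(1)-\log(4iu^2)+o(1)$ (equivalently via $-2\avint^{\infty}_{0} z^{2d-1}(z^2+4i)^{-d}\,\d z=\log(4i)$ in the resolvent picture), so $\dbinom{d}{i}$ is a binomial-expansion coefficient rather than a face count, and the sign enters as a single overall factor $(-1)^d$ from $(-1-e^{-4t})^d$. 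Carrying your plan out carefully therefore yields the constant $\frac{(-1)^d}{2^d}\sum_{i=1}^d\dbinom{d}{i}\log(4i)$ --- exactly what the constant term of Theorem \ref{theorem1.1} and the final theorem of Section 3 both give --- and not the $(-1)^i$-weighted sum you assert; the two already disagree for $d=2$. (The displayed form of Theorem \ref{Th1.2}, with $(-1)^i$ inside the sum, is itself at odds with the Section 3 result on this point.) Since your proposal contains no computation that would settle this sign and coefficient bookkeeping, the decisive step of the proof is missing.
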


Let $DT_{d}(N)$ denote the $d$-dimensional discrete torus $\prod_{j=1}^{d}\mathbb{Z}/n_j\mathbb{Z}$ and $T_d(A)$ denote the $d$-dimensional torus $\mathbb{R}^d/\text{diag}(a_1,\dots,a_d)\mathbb{Z}^d$. Let $\Delta_{DT_{d}(N)}^{m}=\Delta_{DT_{d}(N)}+m^2$ and $\Delta_{T_d(A)}^m=\Delta_{T_{d}(A)}+m^2$ be the massive Laplacians on $DT_{d}(N)$ and $T_{d}(A)$ respectively, where $\Delta_{DT_{d}(N)}$ and $\Delta_{T_{d}}(A)$ are Laplacians on $DT_{d}(N)$ and $T_{d}(A)$ respectively. We also prove the following result.

\begin{theorem}\label{Th1.3}
Let $ \log \det{\Delta^{\tilde{m}}_{DT_{d}(N)}}$ be the log-determinant of the massive Laplacian on the $d$-dimensional discrete torus and $\log \det(\Delta_{T_d(A)}+m^2)$ be the log of the zeta regularized determinant of the massive Laplacian the torus. Then,
\begin{eqnarray*}
  \log \det{\Delta^{\tilde{m}}_{DT_{d}(N)}}=V^{d,N}_d\mathcal{L}_{\tilde{m}}(0)+\mathcal{H}_{N(u)}(0),
\end{eqnarray*}
where 
$$
\begin{aligned}
&\mathcal{H}_{N(u)}(0)\\
&=\log \det(\Delta_{T_d(A)}+m^2)\\
&\ \ \ +\left\{
\begin{aligned}
&\frac{V^d_d}{(4\pi/m^2)^{d/2}}\Gamma(-d/2)~&d~odd\\
&(-1)^{d/2}
\frac{2/d+2/(d-2)\cdots+1-\log(m^2)}{(d/2)!}\frac{V^d_d}{(4\pi)^{d/2}}m^d~&d~even\\
\end{aligned}\right. \\ 
&+o(1) 
\end{aligned}
$$
as $u\rightarrow\infty$,
and 
$$\mathcal{L}_{\tilde{m}}(0)=-\int^{\infty}_0(e^{-\tilde{m}^2t}e^{-2dt}I_0(2t)^d-e^{-t})\frac{\mathrm{d}t}{t}.$$

\end{theorem}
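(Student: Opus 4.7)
The plan is to invoke the Frullani identity $\log\lambda = -\int_0^\infty(e^{-\lambda t} - e^{-t})\frac{dt}{t}$ applied to each eigenvalue of $\Delta^{\tilde m}_{DT_d(N)}$. Because the discrete torus is a Cartesian product, its Laplacian eigenvalues are $\tilde m^2 + \sum_{j=1}^d 2(1-\cos(2\pi k_j/n_j))$ for $k\in\prod_j\Z/n_j\Z$, so the heat trace factorises as
$$
\Tr e^{-t\Delta_{DT_d(N)}} = \prod_{j=1}^d e^{-2t}\sum_{k_j=0}^{n_j-1}e^{2t\cos(2\pi k_j/n_j)}.
$$
Each inner sum is a Riemann-type approximation of $n_j I_0(2t)$, and the product of the leading parts contributes $V^{d,N}_d e^{-2dt}I_0(2t)^d$. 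Inserting the mass factor $e^{-\tilde m^2 t}$ and subtracting the Frullani counter-term $e^{-t}$ once per lattice site (total weight $V^{d,N}_d$) produces exactly $V^{d,N}_d\mathcal L_{\tilde m}(0)$. I would then \emph{define} $\mathcal H_{N(u)}(0)$ as the residual integral, so that the identity claimed in the theorem holds by construction.

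The substantial content is the asymptotic expansion of $\mathcal H_{N(u)}(0)$ as $u\to\infty$. I would apply Poisson summation in each coordinate to convert the discrete sums into sums over the dual lattice, then rescale time by $t\mapsto t/u^2$. Under this rescaling $\tilde m^2 = m^2/u^2$ becomes the physical mass $m^2$ in the new variable, and the rescaled lattice eigenvalues $2u^2(1-\cos(2\pi k_j/n_j))$ converge to the continuum eigenvalues $(2\pi k_j/a_j)^2$ with $O(1/u^2)$ corrections. The transformed integral matches the Mellin/heat-kernel representation of $\log\det(\Delta_{T_d(A)}+m^2)$, modulo the short-time Weyl divergence that the zeta regularisation subtracts.

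The parity dichotomy in the theorem originates from the pole structure of that Weyl subtraction. Using $\Tr e^{-t(\Delta_{T_d(A)}+m^2)}\sim V^d_d e^{-m^2 t}/(4\pi t)^{d/2}$ as $t\to 0^+$, the Mellin transform of this leading piece is $V^d_d m^{d-2s}\Gamma(s-d/2)/(4\pi)^{d/2}$. For $d$ odd, $\Gamma(-d/2)$ is finite and contributes the explicit term $V^d_d\Gamma(-d/2)/(4\pi/m^2)^{d/2}$ stated in the theorem. For $d$ even, $\Gamma(s-d/2)$ has a simple pole at $s=0$ with residue $(-1)^{d/2}/(d/2)!$; Laurent-expanding $m^{-2s}\Gamma(s-d/2)$ and combining with the zeta-regularisation factor $1/\Gamma(s)$ yields a finite contribution proportional to $H_{d/2}-\log m^2$, where $H_{d/2}=2/d+2/(d-2)+\cdots+1$, which after multiplication by $(-1)^{d/2}V^d_d m^d/((d/2)!(4\pi)^{d/2})$ reproduces the even-$d$ formula.

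The main technical obstacle is controlling the error terms uniformly across all $t\in(0,\infty)$. Large $t$ is tamed by exponential decay from the positive mass; the delicate regime is small $t$, where the discrete-continuum Poisson error and the Weyl singular part of the continuum heat trace must be subtracted with the correct sign book-keeping, and one must verify that the $O(1/u^2)$ lattice-cosine corrections integrate to a genuine $o(1)$ contribution. Once this uniform control is established, assembling the bulk term $V^{d,N}_d\mathcal L_{\tilde m}(0)$, the continuum zeta determinant, the parity-dependent finite correction, and the $o(1)$ remainder completes the proof of Theorem \ref{Th1.3}.
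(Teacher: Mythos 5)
Your proposal is correct in substance but reaches the result by a partly different route than the paper. For the exact identity $\log\det\Delta^{\tilde m}_{DT_d(N)}=V^{d,N}_d\mathcal{L}_{\tilde m}(0)+\mathcal{H}_{N}(0)$ you apply the Frullani formula $\log\lambda=-\int_0^\infty(e^{-\lambda t}-e^{-t})\,\mathrm{d}t/t$ eigenvalue by eigenvalue (legitimate here because the massive spectrum is finite and strictly positive, and the number of sites is exactly $V^{d,N}_d=\prod_j n_j$), then split off the bulk term $V^{d,N}_d e^{-\tilde m^2 t}e^{-2dt}I_0(2t)^d$ and call the residual $\mathcal{H}_{N}(0)$; this coincides with the paper's definition of $\mathcal{H}_N(0)$, but the paper instead proves the splitting by the Chinta--Jorgenson--Karlsson device of characterizing $\sum_\lambda\log(\lambda+s^2)$, $\mathcal{L}_{\tilde m}(s)$ and $\mathcal{H}_N(s)$ as solutions of the same first-order ODE in $s$ with matching $s\to\infty$ asymptotics and then letting $s\to0$ (Lemmas 4.1--4.2, Propositions 4.3--4.4, Theorem 4.5). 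Your direct route is more elementary and buys you the identity ``by construction''; the paper's route buys uniformity with its treatment of the massless hypercube case. For the $u\to\infty$ asymptotics of $\mathcal{H}_{N(u)}(0)$, you propose Poisson summation plus the rescaling $t\mapsto u^2t$, whereas the paper rescales, splits the integral at $t=1$, and simply invokes the known convergence $\Theta^{\tilde m}_{DT_d(N)}(u^2t)\to\theta_{T_d(A,m)}(t)$ together with $e^{-2du^2t}I_0(2u^2t)^d\to(4\pi t)^{-d/2}$ (Propositions 4.6--4.9), comparing the outcome with the zeta representation (4.2) of $\log\det(\Delta_{T_d(A)}+m^2)$; Poisson summation is essentially the mechanism behind those cited convergence facts, so this is a difference of packaging and of how much quantitative control you get. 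Your derivation of the parity-dependent constant is exactly the paper's Proposition on the Mellin derivative: $\Gamma(-d/2)$ finite for $d$ odd, and for $d$ even the simple pole of $\Gamma(s-d/2)$ at $s=0$ played against the factor $1/\Gamma(s)$ and $(m^2)^{-s}$, with the Euler--Mascheroni contributions cancelling to leave $H_{d/2}-\log m^2$ (the paper avoids this cancellation by rewriting $\Gamma(s-d/2)/\Gamma(s)$ as a finite product, which is slightly cleaner). The one place where your write-up is only a plan is the uniform control of the small-$t$ discrete-versus-continuum error after rescaling; that is indeed where the remaining work lies, though the paper itself disposes of it by citing the torus results of Chinta--Jorgenson--Karlsson rather than proving it afresh.
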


Let us point out a key difference between Theorem \ref{Th1.1} and Theorem \ref{Th1.3} due to the presence of the mass term. Unlike the situation in Theorem \ref{Th1.1}, in Theorem \ref{Th1.3} $\mathcal{L}_{\tilde{m}}(0)$ is a function of $u$ because of the presence of $\tilde{m}$ which is the ``mass term". The next result gives some information about $\mathcal{L}_{\tilde{m}}(0)$. 
\begin{theorem} Let $\tilde{m}=m/u$ and $d\geq 2$. Then,
$$
\begin{aligned}
\mathcal{L}_{\tilde{m}}(0)=&-\int^{\infty}_0(e^{-2dt}I_0(2t)^d-e^{-t})\frac{\mathrm{d}t}{t}\\
&-\tilde{m}\int_0^{\infty}\dfrac{\partial f}{\partial\tilde{m}}(0,t)\,\mathrm{d}t\hskip1mm-,\hskip1mm\dots\hskip1mm,\\
&-\tilde{m}^{d-1}\int_0^{\infty}\dfrac{\partial^{d-1} f}{\partial\tilde{m}^{d-1}}(0,t)\,\mathrm{d}t+o(\tilde{m}^{d-1}).
\end{aligned}
$$
where $f(t, \tilde{m})=(e^{-\tilde{m}^2t}e^{-2dt}I_0(2t)^d-e^{-t})/{t}$. Moreover, the terms with odd order derivative approach to zero as $\tilde{m}\to 0$.
\end{theorem}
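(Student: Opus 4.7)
My plan is to derive the claimed expansion by Taylor-expanding $f(t,\tilde{m})$ in $\tilde{m}$ about $\tilde{m}=0$, integrating term by term, and then controlling the remainder carefully. The decisive preliminary observation is that $f(t,\tilde{m})$ depends on $\tilde{m}$ only through the factor $e^{-\tilde{m}^{2}t}$, so for every fixed $t>0$ the function $\tilde{m}\mapsto f(t,\tilde{m})$ is entire and even. In particular every odd-order derivative $\partial_{\tilde{m}}^{2k+1}f(t,0)$ vanishes identically in $t$, which immediately proves the second assertion about odd-order terms.

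For the main expansion I would write Taylor's formula with integral remainder,
\[
f(t,\tilde{m}) = \sum_{k=0}^{d-1}\frac{\tilde{m}^{k}}{k!}\,\partial_{\tilde{m}}^{k}f(t,0) + \frac{\tilde{m}^{d}}{(d-1)!}\int_{0}^{1}(1-s)^{d-1}\,\partial_{\tilde{m}}^{d}f(t,s\tilde{m})\,\mathrm{d}s,
\]
and verify convergence of each coefficient integral $\int_{0}^{\infty}\partial_{\tilde{m}}^{k}f(t,0)\,\mathrm{d}t$ for $k=0,1,\dots,d-1$. Via the Hermite-polynomial identity $\partial_{\tilde{m}}^{n}e^{-\tilde{m}^{2}t}=(-1)^{n}t^{n/2}H_{n}(\tilde{m}\sqrt{t})e^{-\tilde{m}^{2}t}$, each even-order integrand at $\tilde{m}=0$ (say $k=2j$) reduces to a constant multiple of $t^{j-1}e^{-2dt}I_{0}(2t)^{d}$; the odd-order ones vanish identically. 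The classical asymptotic $I_{0}(2t)\sim e^{2t}/\sqrt{4\pi t}$ gives $e^{-2dt}I_{0}(2t)^{d}=O(t^{-d/2})$ as $t\to\infty$, so the integrand decays like $t^{j-1-d/2}$, which is integrable at infinity exactly when $j<d/2$---a condition satisfied for every $j$ with $2j\le d-1$. The small-$t$ endpoint is handled by the $-e^{-t}/t$ subtraction when $k=0$ and by the prefactor $t^{j-1}$ when $j\ge 1$. Integrating the Taylor polynomial termwise over $t$ then produces the explicit part of the expansion.

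The heart of the argument is to bound the remainder. After Fubini,
\[
R(\tilde{m}) = \frac{\tilde{m}^{d}}{(d-1)!}\int_{0}^{1}(1-s)^{d-1}\,g(s\tilde{m})\,\mathrm{d}s, \qquad g(\eta):=\int_{0}^{\infty}\partial_{\tilde{m}}^{d}f(t,\eta)\,\mathrm{d}t.
\]
For $\eta>0$ the factor $e^{-\eta^{2}t}$ tames the integrand at infinity, but $g(0)=+\infty$. Using $e^{-2dt}I_{0}(2t)^{d}\sim(4\pi t)^{-d/2}$, the leading contribution to $g(\eta)$ at large $t$ is $(4\pi)^{-d/2}\int_{1}^{\infty}t^{-1}H_{d}(\eta\sqrt{t})e^{-\eta^{2}t}\,\mathrm{d}t$; expanding $H_{d}$ monomial by monomial, each term $(\eta\sqrt{t})^{k}$ with $k\ge 1$ gives (up to a constant) $\eta^{k}\int_{1}^{\infty}t^{k/2-1}e^{-\eta^{2}t}\,\mathrm{d}t=\int_{\eta^{2}}^{\infty}u^{k/2-1}e^{-u}\,\mathrm{d}u\to\Gamma(k/2)$, hence stays bounded as $\eta\to 0$, while the only potentially divergent term is the constant $H_{d}(0)$, producing $H_{d}(0)(4\pi)^{-d/2}E_{1}(\eta^{2})=O(|\log\eta|)$. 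Thus $g(\eta)=O(1+|\log\eta|)$, and combining with $\int_{0}^{1}(1-s)^{d-1}|\log(s\tilde{m})|\,\mathrm{d}s=O(1+|\log\tilde{m}|)$ yields $R(\tilde{m})=O(\tilde{m}^{d}|\log\tilde{m}|)=o(\tilde{m}^{d-1})$ as $\tilde{m}\to 0^{+}$.

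The main obstacle is precisely this logarithmic blow-up of $g$ at $\eta=0$: the Lagrange form of the Taylor remainder would force an estimate of $\sup_{\xi\in(0,\tilde{m})}|g(\xi)|$, which is infinite. Switching to the integral form replaces the supremum with an average against $(1-s)^{d-1}$, and the integrability of $|\log s|$ on $(0,1)$ converts the pointwise divergence of $g$ into a mild $|\log\tilde{m}|$ factor that is still absorbed by the extra power of $\tilde{m}^{d}$ in front of the remainder.
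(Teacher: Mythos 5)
Your argument is correct, but it follows a genuinely different route from the paper's. The paper never expands the integrand: it first proves a uniform-convergence lemma for the integrals $\int_0^\infty \tilde{m}^{a-b}e^{-\tilde{m}^2t}e^{-2dt}I_0(2t)^d t^{a-1}\,\mathrm{d}t$ with $a+b<d$, uses it (via a differentiation-under-the-integral lemma from Lang) to show inductively that $\tilde{m}\mapsto\mathcal{L}_{\tilde{m}}(0)$ is $(d-1)$-times differentiable up to $\tilde{m}=0$, and then invokes Taylor's theorem with Peano remainder for the function $\mathcal{L}_{\tilde{m}}(0)$ itself, which yields exactly $o(\tilde{m}^{d-1})$ and sidesteps the order-$d$ derivative altogether. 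You instead Taylor-expand $f(t,\cdot)$ with the integral-form remainder, integrate termwise, and confront the borderline order-$d$ term head on, showing via the Hermite identity and $e^{-2dt}I_0(2t)^d=O(t^{-d/2})$ that $g(\eta)=\int_0^\infty\partial^d_{\tilde{m}}f(t,\eta)\,\mathrm{d}t$ blows up at most like $\log(1/\eta)$, which the averaging against $(1-s)^{d-1}$ converts into $R(\tilde{m})=O(\tilde{m}^d\log(1/\tilde{m}))$ --- a slightly sharper remainder than the paper obtains, and your evenness observation disposes of the odd-order terms more cleanly than the paper's phrasing. The paper's route buys economy (it reuses the same uniform-convergence machinery and never has to estimate a divergent integral), while yours buys an explicit, quantitative error bound; to be fully rigorous you would still want to record that the $t\in(0,1]$ portion and the $O(1/t)$ correction to the Bessel asymptotics contribute only $O(1)$ to $g(\eta)$, and that the same bounds with absolute values justify the Fubini step, but these are routine. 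One cosmetic point: your expansion carries the correct Taylor factors $1/k!$, whereas the statement as printed omits them; this is an inaccuracy in the paper's formulation (its own Taylor-based proof would produce the same factors), not a defect of your argument.
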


When $d=2$, there is a very special relationship between determinants of the massive discrete Laplacian on the torus and on the hypercube, which is the following.

\begin{theorem}
Given a discrete torus of size $N=(2n_1,2n_2)$ and a discrete hypercube of size $N'=(n_1,n_2)$, we have for the determinant of the discrete massive Laplacian with Dirichlet boundary condition,
$$\begin{aligned}
&\frac{\det\Delta_{DT_2(N)}^m}{(\det\Delta_{L_2(N')}^m)^4}=(8+m^2)(4+m^2)^2\times\\
&\prod_{m_1=1}^{m_1=n_1-1}\left(6+m^2-2\cos\left(\frac{m_1\pi}{n_1}\right)\right)^2\prod_{m_2=1}^{m_2=n_2-1}\left(6+m^2-2\cos\left(\frac{m_2\pi}{n_2}\right)\right)^2\times\\
&\prod_{m_1=1}^{m_1=n_1-1}\left(2+m^2-2\cos\left(\frac{m_1\pi}{n_1}\right)\right)^2\prod_{m_2=1}^{m_2=n_2-1}\left(2+m^2-2\cos\left(\frac{m_2\pi}{n_2}\right)\right)^2
\end{aligned}
$$
\end{theorem}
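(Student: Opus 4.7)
The plan is to establish the identity by direct eigenvalue computation: expand both determinants as explicit finite products and then reorganize the torus product by exploiting a doubling symmetry on each factor $\mathbb{Z}/2n_i\mathbb{Z}$. No analytic machinery is needed, only careful bookkeeping of multiplicities.

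First, I record the spectra on both sides. Fourier analysis on the finite abelian group $\mathbb{Z}/2n_1\mathbb{Z}\times\mathbb{Z}/2n_2\mathbb{Z}$ shows that $\Delta^m_{DT_2(N)}$ has eigenvalues
\[
m^2 + 4 - 2\cos(\pi k_1/n_1) - 2\cos(\pi k_2/n_2), \qquad 0 \leq k_i \leq 2n_i-1,
\]
while the product sine basis on the orthotope $L_2(N') = P_{n_1}\times P_{n_2}$ gives, for the Dirichlet massive Laplacian,
\[
m^2 + 4 - 2\cos(\pi j_1/n_1) - 2\cos(\pi j_2/n_2), \qquad 1 \leq j_i \leq n_i-1.
\]
Both determinants thereby become explicit products over these index ranges.

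The key reorganization is the observation that the involution $k \mapsto 2n_i - k$ preserves $\cos(\pi k/n_i)$. Hence, as $k_i$ varies over $\{0, 1, \dots, 2n_i-1\}$, the cosine takes the exceptional values $+1$ and $-1$ once each (at $k_i = 0$ and $k_i = n_i$) and each interior value $\cos(\pi j/n_i)$ for $1 \leq j \leq n_i-1$ exactly twice. I split the range of each $k_i$ as $\{0\}\sqcup\{1,\dots,n_i-1\}\sqcup\{n_i\}\sqcup\{n_i+1,\dots,2n_i-1\}$ and expand the double product into four regions: a \emph{bulk} region (both $k_i$ interior), two \emph{edge} regions (exactly one $k_i \in \{0, n_i\}$), and a \emph{corner} region (both $k_i \in \{0, n_i\}$). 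The bulk region contributes exactly $\bigl(\det \Delta^m_{L_2(N')}\bigr)^4$, since each of the $(n_1-1)(n_2-1)$ Dirichlet eigenvalues appears with multiplicity $2 \cdot 2 = 4$; this cancels the denominator on the left-hand side. The edge regions, obtained by freezing $\cos(\pi k_i/n_i) = \pm 1$ in one direction while the other ranges over its doubled interior, produce the four single-variable products in the statement: the value $+1$ turns the $4$ in the eigenvalue into $2$ and yields the $(2 + m^2 - 2\cos(\cdot))^2$ factors, while $-1$ turns it into $6$ and yields the $(6 + m^2 - 2\cos(\cdot))^2$ factors. Finally, the four corners $(k_1, k_2) \in \{0, n_1\} \times \{0, n_2\}$ produce the eigenvalues $m^2, 4 + m^2, 4 + m^2, 8 + m^2$, whose product is the explicit prefactor on the right-hand side.

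There is no serious obstacle here: the proof is a finite algebraic identity, and the only care required is tracking how many times each interior cosine value contributes so that the bulk exactly produces the fourth power of the hypercube determinant, and verifying that the corner modes (in particular the zero mode at $k_1 = k_2 = 0$) are accounted for consistently with the stated prefactor. Once this bookkeeping is finished, dividing through by $(\det \Delta^m_{L_2(N')})^4$ yields the claimed identity directly.
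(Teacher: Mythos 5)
Your overall strategy is the same as the paper's: both proofs write the two determinants as explicit eigenvalue products and reorganize the torus product via the symmetry $\cos(\pi k/n_i)=\cos(\pi(2n_i-k)/n_i)$, splitting the index range into bulk, edge and corner contributions, so that the bulk yields $(\det\Delta^m_{L_2(N')})^4$ and the edges yield the four single-variable products with $2+m^2$ and $6+m^2$. Up to that point your bookkeeping is correct.

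The gap is in the corner step. The four corner modes $(k_1,k_2)\in\{0,n_1\}\times\{0,n_2\}$ have eigenvalues $m^2$, $4+m^2$, $4+m^2$, $8+m^2$, and their product is $m^2(4+m^2)^2(8+m^2)$, which is \emph{not} the stated prefactor $(8+m^2)(4+m^2)^2$ (they agree only when $m^2=1$). So if $\det\Delta^m_{DT_2(N)}$ means the product over the full spectrum of the (invertible) massive Laplacian, your computation produces an extra factor $m^2$ and does not give the identity as stated. The theorem matches the paper's convention, visible in its product formula for the torus determinant ("$m_i$ runs from $0$ to $2n_i-1$, and $m_i$ can not all be zero"), in which the $(0,0)$ Fourier mode is omitted; then the corner block contributes only the three factors $(4+m^2)^2(8+m^2)$, which is exactly the prefactor. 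To repair your argument you must either state and use this convention explicitly (drop the $(0,0)$ mode, leaving three corner eigenvalues), or keep the full determinant and change the prefactor to $m^2(8+m^2)(4+m^2)^2$; as written, the assertion that the four corner eigenvalues multiply to the stated prefactor is false, and the hedge about "accounting for the zero mode consistently" is precisely the unresolved point.
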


\section{Determinants of Laplacian on hypercube and tori}
In this section, we will prove Theorem \ref{theorem1.1}. In \ref{section 2.1}, we will study the heat kernel of the Laplacian on the hypercube with Dirichlet boundary condition, and then use it to calculate the log of zeta regularized determinant. In \ref{section2.2}, we will establish a relation between the heat kernel of discrete Laplacian on a discrete torus and the heat of discrete Laplacian on a discrete hypercube. In \ref{subsection2.2.1}, we will analyze the behavior of the heat kernel and rewrite the log of determinant of the discrete Laplacian as a sum of $V^{d,N}_i\mathcal{L}^d_i$ and $\mathcal{H}_N$. In the next section, we will show how these terms contribute to the asymptotic expansion, in particular, we will show that $\mathcal{H}_N$ can be expressed as a sum of the log of the zeta regularized determinant of the Laplacian and an explicitly computable constant. 

\subsection{\textbf{Theta functions of Laplacian on $K_d(A)$ and $T_d(A)$}}\label{section 2.1}
Recall that the theta function of an operator is defined as the the trace of the heat kernel the operator and the trace of the heat kernel can be obtained from the spectrum of the operator in nice situations \cite{chinta2010zeta}. 

The spectrum $\LKD$ of the Laplacian on $K_d(A)$ with Dirichlet boundary condition is well known \cite{louis2017asymptotics}: $\LKD=\{{q_1^2\pi^2}/{a_1^2}+\cdots+{q_d^2\pi^2}/{a_d^2}|(q_1,\dots,q_d)\in(\mathbb{N}\setminus\{0\})^{d}\}$.
Let $\theta_{K_d(A)}(t)$ denote the theta function of the Laplacian on $K_d(A)$ with Dirichlet boundary condition, then
$$\theta_{K_d(A)}(t)=\sum_{\lambda\in\LKD}e^{-\lambda t}$$

The spectrum $\LTD$ of the Laplacian on $T_d(A)$ is also well known \cite{chinta2010zeta}: $\LTD=\{{(2\pi q_1)^2}/{a_1^2}+\cdots+{(2\pi q_d)^2}/{a_d^2}|(q_1,\dots,q_d)\in\mathbb{Z}^d\}.$ Hence, the theta function of the Laplacian on the torus $T_d(A)$ is given by
$$\theta_{T_d(A)}(t)=\sum_{\lambda\in\LTD}e^{-\lambda t}.$$

From the explicit description of $\LKD$ and $\LTD$, we can see that these two sets are related. We will use this relationship to relate their corresponding theta functions which is the content of the following proposition.

\begin{proposition}\label{proposition2.1} Let $A=(a_1,\dots, a_d)$, then
\begin{equation}\label{eq2.1}
\theta_{K_d(A)}(t)=\frac{1}{2^d}\sum^d_{m=0}(-1)^{d-m}\sum_{0<i_1 <\cdots<i_m\leq d}\theta_{T_m(2a_{i_1}, \dots,2a_{i_m} )}(t)
\end{equation}
where we use the convention that $\theta_{K_{0}}(t)=\theta_{T_{0}}(t)=1$. 
\begin{proof}
We will use induction on $d$ to prove the proposition. For $d=0$, (\ref{eq2.1}) holds trivially.

Let us rewrite $\theta_{K_{d}(A)}(t)$ as
 $$\theta_{K_{d}(A)}(t)=\sum_{n_i\in \mathbb{N}\setminus\{0\}}e^{-\sum_{i=1}^dn_i^2(2\pi)^2/(2a_i)^2t}.$$
When $d=1$, 
$$\sum_{ n_i\in \mathbb{N}\setminus\{0\}}e^{-n_i^2(2\pi)^2/(2a_i)^2t}=\frac{1}{2}\left(\sum_{n_i\in \mathbb{Z}}e^{-n_i^2(2\pi)^2/(2a_i)^2t}-1\right).$$
Hence, (\ref{eq2.1}) holds in this case as well. Now, assume that (\ref{eq2.1}) holds for all for $k\leq d-1$. Note that
 \begin{eqnarray*}
 \begin{array}{lll}\theta_{K_{d}(A)}(t)=\left(\sum_{n_d\in \mathbb{N}\setminus\{0\}}e^{-n_d^2(2\pi)^2/(2a_d)^2t}\right)\theta_{K_{d-1}(a_1,...,a_{d-1})}(t).
 \end{array}
 \end{eqnarray*}
Now using the induction hypothesis, we see that 
$$
\begin{aligned}
  \theta_{K_d(a_1,...,a_d)}(t)&=\frac{1}{2}\left(\sum_{n_d\in \mathbb{Z}}e^{-n_d^2(2\pi)^2/(2a_d)^2t}\theta_{K_{d-1}(a_1,...,a_{d-1})}(t)-\theta_{K_{d-1}(a_1,...,a_{d-1})}(t)\right)\\
  &=\frac{1}{2}\left(\theta_{T_1(a_{d})}(t)\theta_{K_{d-1}(a_1,...,a_{d-1})}(t)-\theta_{K_{d-1}(a_1,...,a_{d-1})}(t)\right)\\
  &=\frac{1}{2^d}\left(\sum^d_{p=1}(-1)^{d-p}\sum_{0< i_1 <\cdots<i_p=d}\theta_{T_{p}(2a_{i_1},...,2a_{i_p})}(t)\right.\\
  &-\left.\sum^{d-1}_{p=0}(-1)^{d-p}\sum_{0< i_1 <\cdots<i_p\leq d-1}\theta_{T_{p}(2a_{i_1},...,2a_{i_p})}(t)
\right)\\
  &=\frac{1}{2^d}\sum^d_{p=0}(-1)^{d-p}\sum_{0< i_1 <\cdots<i_p\leq d}\theta_{T_{p}(2a_{i_1},...,2a_{i_p})}(t),
\end{aligned}
$$ and this completes the proof.
\end{proof}
\end{proposition}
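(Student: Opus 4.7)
The plan is to exploit the product structure of both sides. Since the Dirichlet spectrum $\Lambda_{K_d(A)}$ is a product set (the indices $q_1,\dots,q_d$ run independently over $\mathbb{N}\setminus\{0\}$), the theta function factors as a product of one-dimensional theta functions. The same holds for the torus side: $\theta_{T_m(b_1,\dots,b_m)}(t) = \prod_{j=1}^{m} \theta_{T_1(b_j)}(t)$ because the torus spectrum is also a product set. So the proof reduces to a one-dimensional identity plus a purely combinatorial expansion.

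First I would establish the one-dimensional identity. Rewriting the Dirichlet eigenvalue $q^2\pi^2/a^2$ as $(2\pi q)^2/(2a)^2$ and splitting the integer sum into positive, zero, and negative parts gives
$$\theta_{K_1(a)}(t)=\sum_{q=1}^{\infty} e^{-q^2\pi^2 t/a^2}=\frac{1}{2}\left(\sum_{q\in\mathbb{Z}} e^{-(2\pi q)^2 t/(2a)^2}-1\right)=\frac{1}{2}\bigl(\theta_{T_1(2a)}(t)-1\bigr).$$
This is really the only analytic input; everything else is algebra.

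Next I would expand the $d$-fold product:
$$\theta_{K_d(A)}(t)=\prod_{i=1}^{d}\theta_{K_1(a_i)}(t)=\frac{1}{2^d}\prod_{i=1}^{d}\bigl(\theta_{T_1(2a_i)}(t)-1\bigr)=\frac{1}{2^d}\sum_{S\subseteq\{1,\dots,d\}}(-1)^{d-|S|}\prod_{i\in S}\theta_{T_1(2a_i)}(t).$$
Grouping subsets $S$ by cardinality $m=|S|$, and replacing $\prod_{i\in S}\theta_{T_1(2a_i)}(t)$ by $\theta_{T_m(2a_{i_1},\dots,2a_{i_m})}(t)$ for $S=\{i_1<\cdots<i_m\}$, yields exactly \eqref{eq2.1}. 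The $m=0$ term contributes $(-1)^d/2^d$ from the empty product, consistent with the stated convention $\theta_{T_0}(t)=1$.

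There is really no hard step here: the one-dimensional identity is elementary, and the multinomial-type expansion of the product is routine. An alternative would be the induction on $d$ that peels off one factor at a time, using the inductive formula for $\theta_{K_{d-1}}$ in the middle; this is slightly more bookkeeping because one has to reindex the subsets from $\{1,\dots,d-1\}$ to $\{1,\dots,d\}$ and separately absorb the terms that include the new index $d$. The product-expansion approach I proposed sidesteps that. The only real thing to check carefully is the sign convention and that subsets containing index $d$ match up with the two summands produced by the $\theta_{T_1(2a_d)}-1$ factor.
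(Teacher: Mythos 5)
Your proof is correct, and it takes a slightly different route than the paper. The paper proves the identity by induction on $d$: it establishes the same one-dimensional splitting $\theta_{K_1(a)}(t)=\tfrac12\bigl(\theta_{T_1(2a)}(t)-1\bigr)$, then peels off the factor corresponding to the last coordinate, applies the inductive hypothesis to $\theta_{K_{d-1}(a_1,\dots,a_{d-1})}(t)$, and reorganizes the resulting sums into those index tuples ending in $d$ and those contained in $\{1,\dots,d-1\}$. You instead use the full product structure in one shot: both $\theta_{K_d(A)}$ and $\theta_{T_m(2a_{i_1},\dots,2a_{i_m})}$ factor as products of one-dimensional theta functions (all series involved are positive and absolutely convergent for $t>0$, so this is harmless), and expanding $\prod_{i=1}^d\bigl(\theta_{T_1(2a_i)}(t)-1\bigr)$ over subsets and grouping by cardinality yields \eqref{eq2.1} directly, with the empty set giving the $(-1)^d$ term consistent with the convention $\theta_{T_0}(t)=1$. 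The two arguments share the same analytic input (the one-dimensional identity and multiplicativity of the theta functions over product spectra) and differ only in how the combinatorics is organized; your direct subset expansion makes the sign and index bookkeeping explicit at once and avoids the reindexing step in the paper's inductive argument, while the paper's induction mirrors the discrete analogue (Proposition \ref{prop22.2}) where the same inductive scheme is reused with the factor $(-1-e^{-4t})$ in place of $-1$.
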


The relation (\ref{eq2.1}) is very useful to study the asymptotic behavior of $\theta_{K_{d}(a_1,...,a_d)}(t)$, more precisely, the asymptotic behavior can be easily analyzed using the asymptotic behavior of the $\theta_{T(2a_{i_1},...,2a_{i_p})}(t)$ which is already studied in detail in \cite{chinta2010zeta}. We recall from \cite{chinta2010zeta} that 
\begin{eqnarray}\label{eq2.2}
\begin{array}{lll}
\hskip5mm \theta_{T(2a_{i_1},...,2a_{i_p})}(t)=
 \left\{
  \begin{array}{cc}
    \prod_{q=1}^p(2a_{i_q})(4\pi t)^{-p/2}+\mathcal{O}(e^{-c/t}) & t\rightarrow0 \\
    1+\mathcal{O}(e^{-ct}) &t\rightarrow \infty\\
   \end{array}
  \right.
  \end{array}
\end{eqnarray}  

As a corollary of Proposition \ref{proposition2.1} and \ref{eq2.2}, we get the following.

\begin{corollary}\label{corollary2.1} The following holds:
\begin{eqnarray}
\theta_{K_{d}(a_1,...,a_d)}(t)=
 \left\{
  \begin{array}{cc}
   \sum^d_{i=0}(-1)^{d-i}V^d_i(4\pi t)^{-i/2}+\mathcal{O}(e^{-c/t}) & t\rightarrow0 \\
    \mathcal{O}(e^{-ct}) & t\rightarrow \infty\\
  \end{array}
 \right.
\end{eqnarray}
\end{corollary}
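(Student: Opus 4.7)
The plan is to substitute the asymptotic expansions \eqref{eq2.2} for each torus theta function directly into the identity \eqref{eq2.1} from Proposition \ref{proposition2.1}, and then identify the resulting combinatorial factors with the quantities $V^d_i$.

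For the small-$t$ regime, I would treat the $m=0$ contribution separately (it equals $1$ by the stated convention) and, for each $1\le m\le d$ and each index set $0<i_1<\cdots<i_m\le d$, replace $\theta_{T_m(2a_{i_1},\dots,2a_{i_m})}(t)$ by its leading term $\prod_{q=1}^m (2a_{i_q})\,(4\pi t)^{-m/2}$ plus an $\mathcal{O}(e^{-c/t})$ remainder. Pulling $(4\pi t)^{-m/2}$ out of the inner sum produces the factor
\[
\frac{1}{2^d}\cdot 2^m\sum_{0<i_1<\cdots<i_m\le d}\prod_{q=1}^m a_{i_q} \;=\; \frac{1}{2^{d-m}}\sum_{0<i_1<\cdots<i_m\le d}\prod_{q=1}^m a_{i_q} \;=\; V^d_m,
\]
which is exactly the coefficient needed in the statement. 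Summing over $m$ yields $\sum_{i=0}^d (-1)^{d-i}V^d_i (4\pi t)^{-i/2}$. Since \eqref{eq2.1} is a finite sum of $2^d$ torus theta functions and each remainder is exponentially small in $1/t$, the total error is still $\mathcal{O}(e^{-c/t})$ after possibly replacing $c$ by the smallest constant appearing.

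For the large-$t$ regime, the same substitution uses $\theta_{T_m}(t)=1+\mathcal{O}(e^{-ct})$ for each torus. The contribution of the ``$1$'' terms is
\[
\frac{1}{2^d}\sum_{m=0}^d (-1)^{d-m}\binom{d}{m} \;=\; \frac{1}{2^d}(1-1)^d \;=\; 0
\]
by the binomial theorem (for $d\ge 1$), so the only surviving piece is the sum of the exponentially small remainders, giving $\theta_{K_d(A)}(t)=\mathcal{O}(e^{-ct})$ as required.

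The only real obstacle is careful bookkeeping: matching the factor $\frac{1}{2^d}\cdot 2^m$ against the definition of $V^d_m$, making sure the $m=0$ term in the small-$t$ expansion agrees with $V^d_0=1/2^d$ (using the empty-product convention), and checking that a uniform exponential bound can be extracted from the finitely many torus remainders. These are all routine; no analytic input beyond Proposition \ref{proposition2.1} and \eqref{eq2.2} is needed.
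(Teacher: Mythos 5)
Your proposal is correct and is essentially the paper's own argument: the corollary is stated as an immediate consequence of Proposition \ref{proposition2.1} and the torus asymptotics \eqref{eq2.2}, obtained exactly by the substitution and bookkeeping you describe (the factor $2^m/2^d$ matching $V^d_m$, the $m=0$ convention, and the binomial cancellation $(1-1)^d=0$ for large $t$). Nothing further is needed.
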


\subsubsection{\textbf{Zeta function of Laplacian with Dirichlet boundary condition on $K_d(A)$:}}
We recall that the zeta function is the Mellin transform of the theta function \cite{chinta2010zeta}:
$$\zeta_{\Kd}(s)= \frac{1}{\Gamma(s)}\int^{\infty}_{0}\theta_{K_{d}(A)}(t)t^s\frac{\d t}{t} $$
Note that the integral is well defined whenever $Re(s)>n/2$. To analyze $\zeta_{\Kd}(s)$, it is convenient to write it as follows:
$$
\begin{aligned}
 \zeta_{\Kd}(s) = &\frac{1}{\Gamma(s)}\int^{1}_{0}(\theta_{\Kd}(t)-f(t))t^s\frac{\d t}{t}\\
   &+\frac{1}{\Gamma(s)}\int^{1}_{0}f(t)t^s\frac{\d t}{t}\\
   &+\frac{1}{\Gamma(s)}\int^{\infty}_{1}\theta_{\Kd}(t)t^s\frac{\d t}{t}
\end{aligned}
$$ where
$f(t) =\sum^d_{i=0}(-1)^{d-i}V^d_i\times(4\pi t)^{-i/2}.$

Using corollary \ref{corollary2.1}, it is clear that the integral has a meromorphic continuation in $\mathbb{C}$ and more importantly, it is holomorphic at $s=0$. Moreover, integrating the second term and then taking the derivative $s=0$ we get:
\begin{equation}\label{cont_inter1}
\begin{aligned}
 \zeta'_{\Kd}(0)= &\int^{1}_{0}(\theta_{\Kd}(t)-f(t))\frac{\d t}{t}\\
   &+\int^{\infty}_{1}(\theta_{\Kd}(t))\frac{\d t}{t}\\
     &-\frac{(-1)^d}{2^d}\Gamma'(1)- \sum^n_{i=1}(-1)^{d-i}\frac{2}{i}V^d_{i}\times(4\pi)^{-i/2}.\\
\end{aligned}
\end{equation}
Later we will see that the relation \ref{cont_inter1} will play an important role in the proof of Theorem \ref{Th1.1}.

As a direct consequence of Proposition \ref{proposition2.1}, we get the following relation between zeta functions.

\begin{corollary} Let $\zeta_{T_{d}(A)}$ be the zeta function of the Laplacian on $T_{d}(A)$. Then,
\begin{eqnarray}
\begin{array}{lll}
\zeta_{\Kd}=\sum^d_{p=0}(-1)^{d-p}\sum_{0<i_1<\cdots<i_p\leq d}\frac{1}{2^d}\zeta_{T_{p}(2a_{i_1},...,2a_{i_p})}.
\end{array}
\end{eqnarray}
\end{corollary}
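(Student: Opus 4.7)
The plan is to apply the Mellin transform to the theta-function identity given in Proposition~\ref{proposition2.1} and exploit linearity. Recall that in its region of absolute convergence the zeta function of an operator $\Delta$ is obtained as
$$\zeta_{\Delta}(s) = \frac{1}{\Gamma(s)} \int_0^\infty \theta_{\Delta}(t)\, t^{s-1}\, \mathrm{d}t,$$
with its meromorphic continuation. Since (\ref{eq2.1}) is a pointwise identity on $(0,\infty)$ between finitely many terms, a termwise Mellin transform followed by division by $\Gamma(s)$ will yield the claimed identity between zeta functions, first on a common half-plane of absolute convergence and then on all of $\mathbb{C}$ by uniqueness of meromorphic continuation.

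The main obstacle is that each torus theta function $\theta_{T_p(2a_{i_1},\ldots,2a_{i_p})}(t)$ contains the zero eigenvalue and therefore tends to $1$ as $t\to\infty$, so the individual Mellin integrals diverge at infinity; by contrast, $\theta_{K_d(A)}(t)$ decays exponentially because the Dirichlet Laplacian has no zero mode. This mismatch is reconciled by the binomial identity
$$\sum_{p=0}^{d}(-1)^{d-p}\binom{d}{p}=0 \quad (d\geq 1),$$
which shows that the contributions from the ``constant'' part of each $\theta_{T_p}$ cancel exactly in the weighted sum on the right-hand side of (\ref{eq2.1}). This is consistent with Corollary~\ref{corollary2.1}, which already records that $\theta_{K_d(A)}(t)=\mathcal{O}(e^{-ct})$ as $t\to\infty$.

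With this in hand, I would adopt the standard convention $\zeta_{T_p}(s) = \frac{1}{\Gamma(s)}\int_0^\infty (\theta_{T_p}(t)-1)\, t^{s-1}\,\mathrm{d}t$ (so that $\zeta_{T_0}=0$), split $\theta_{T_p}=(\theta_{T_p}-1)+1$ in (\ref{eq2.1}), and use the binomial cancellation above to drop the constants. The resulting identity
$$\theta_{K_d(A)}(t) = \frac{1}{2^d}\sum_{p=0}^d (-1)^{d-p}\sum_{0<i_1<\cdots<i_p\leq d}\bigl(\theta_{T_p(2a_{i_1},\ldots, 2a_{i_p})}(t)-1\bigr)$$
may then be integrated against $t^{s-1}\,\mathrm{d}t/\Gamma(s)$ in the region $\mathrm{Re}(s)>d/2$, where every integral converges absolutely at both endpoints (near $0$ by Corollary~\ref{corollary2.1} and near $\infty$ by exponential decay after subtracting the zero mode). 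Interchanging the finite sums with the integral yields the corollary in the half-plane $\mathrm{Re}(s)>d/2$, and meromorphic continuation extends it to all $s\in\mathbb{C}$.
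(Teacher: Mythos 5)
Your proposal is correct and follows essentially the route the paper intends: the corollary is stated there as a direct consequence of Proposition~\ref{proposition2.1}, obtained by taking Mellin transforms termwise. Your treatment of the torus zero modes via the binomial cancellation $\sum_{p=0}^{d}(-1)^{d-p}\binom{d}{p}=0$ and the convention $\zeta_{T_0}=0$ supplies exactly the convergence detail the paper leaves implicit, so nothing is missing.
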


\subsection{Theta function of discrete Laplacian on discrete hypercube and discrete tori}\label{section2.2}
As in the continuum case, the modified theta function of the discrete Laplacian on $L_d(N)$ with Dirichlet boundary condition is given by
$$\Theta_{\Ld}(t)=\sum_{\lambda\in\Lambda_{\Ld}}e^{-\lambda t},$$
where $\Lambda_{\Ld}$ is the spectrum of the discrete Laplacian with Dirichlet boundary condition.
It is well known that $\Lambda_{\Ld}=\{2d-2\cos(\pi q_1/n_1)-\cdots-2\cos(\pi q_d/n_d)|0< q_j<n_j,\quad q_j\in \Z\}$ \cite{duplanticr1988exact}.

Let $N=(n_1,\dots, n_d)$ and let $DT_{d}(N)$ denote the discrete torus $\prod_{j=1}^d\Z/n_j\Z.$ Then the theta function for $DT_{d}(N)$ is given by 
$$\Theta_{\DT}(t)=\sum_{\lambda\in\Lambda_{\DT}}e^{-\lambda t}$$
where $\Lambda_{\DT}=\{2d-2\cos(2\pi q_1/n_1)\cdots-2\cos(2\pi q_d/n_d)|0\leq q_j<n_j\}$ is the spectrum of the discrete Laplacian on $\DT$ \cite{chinta2010zeta}.

As discussed in Section \ref{section 2.1}, it is possible to express $\Theta_{\Ld}(t)$ in terms of $\Theta_{DT_{p}(n_{i_1},\dots, n_{i_p}))}(t)$ as shown in the following proposition.

\begin{proposition}\label{prop22.2}
We have the following relation:
\begin{equation} \label{eq2.5}
\Theta_{\Ld}(t)=\frac{1}{2^d}\sum^d_{p=0}\sum_{0<i_1 <\cdots<i_p\leq d}(-1-e^{-4t})^{d-p}\Theta_{DT_{p}(2n_{i_1},...,2n_{i_p})}(t).
\end{equation}
Here the convention is that $\Theta_{DT_{0}}(t)=\Theta_{L_{0}}(t)=1$.
\end{proposition}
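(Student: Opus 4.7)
The plan is to proceed by induction on $d$, following the same template as the proof of Proposition \ref{proposition2.1}. The base case $d=0$ holds by the stated convention. The key structural fact I will use at every step is that the discrete theta functions factor across dimensions: from the explicit descriptions of the spectra $\Lambda_{L_d(N)}$ and $\Lambda_{DT_d(N)}$ as sums of one-dimensional eigenvalues, one immediately obtains
$$\Theta_{L_d(N)}(t)=\prod_{j=1}^{d}\Theta_{L_1(n_j)}(t),\qquad \Theta_{DT_d(N)}(t)=\prod_{j=1}^{d}\Theta_{DT_1(n_j)}(t).$$

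First I would settle the one-dimensional identity
$$\Theta_{L_1(n)}(t)=\tfrac{1}{2}\bigl[\Theta_{DT_1(2n)}(t)+(-1-e^{-4t})\bigr],$$
which is the $d=1$ case of the proposition and the discrete analog of the base case appearing in Proposition \ref{proposition2.1}. Writing out $\Theta_{DT_1(2n)}(t)=\sum_{q=0}^{2n-1}e^{-(2-2\cos(\pi q/n))t}$, the terms $q=0$ and $q=n$ contribute $1$ and $e^{-4t}$; the remaining $2n-2$ indices pair off via the symmetry $\cos(\pi(2n-k)/n)=\cos(\pi k/n)$, each pair contributing twice a single term of $\Theta_{L_1(n)}(t)$. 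The presence of the $-e^{-4t}$ (rather than just the $-1$ one finds in the continuum) is precisely what will force the factor $(-1-e^{-4t})^{d-p}$ to appear in the final formula.

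For the induction step, assume the identity holds up to dimension $d-1$. Factor $\Theta_{L_d(N)}(t)=\Theta_{L_1(n_d)}(t)\,\Theta_{L_{d-1}(n_1,\ldots,n_{d-1})}(t)$, apply the base case to the first factor and the induction hypothesis to the second, and expand the product. The resulting expression organises itself naturally by whether or not the index $d$ belongs to the subset $\{i_1<\cdots<i_p\}\subseteq\{1,\ldots,d\}$: subsets containing $d$ arise from pairing the $\Theta_{DT_1(2n_d)}(t)$ piece of the base case with the $(p-1)$-subsets of $\{1,\ldots,d-1\}$ coming from the induction hypothesis, and subsets not containing $d$ arise from pairing the $(-1-e^{-4t})$ piece with the $p$-subsets of $\{1,\ldots,d-1\}$, thereby picking up one additional factor of $(-1-e^{-4t})$ in the exponent $d-p$. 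Recombining these two families into a single sum over all subsets of $\{1,\ldots,d\}$ yields the stated formula.

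The only real obstacle is bookkeeping: one must check that every subset of $\{1,\ldots,d\}$ is counted exactly once with the correct exponent on $(-1-e^{-4t})$ and the correct overall prefactor $1/2^d$. This is an entirely algebraic manipulation, parallel to the collapse of the two sums in the last display of the proof of Proposition \ref{proposition2.1}, and introduces no new analytic content.
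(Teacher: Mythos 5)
Your proposal is correct and takes essentially the same route as the paper: the paper also argues by induction on $d$, using the factorization $\Theta_{L_d(N)}(t)=\Theta_{L_1(n_d)}(t)\,\Theta_{L_{d-1}(n_1,\dots,n_{d-1})}(t)$, the one-dimensional identity $\Theta_{L_1(n_i)}(t)=\tfrac12\bigl(\Theta_{DT_1(2n_i)}(t)-(1+e^{-4t})\bigr)$ obtained from the cosine symmetry, and then the same recombination of the two resulting sums (subsets containing $d$ versus not containing $d$) into the single sum with factor $(-1-e^{-4t})^{d-p}$. No gaps; the bookkeeping you describe is exactly the collapse performed in the paper's displayed computation.
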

\begin{proof}

We will use induction on $d$ as we did in the Proposition \ref{proposition2.1}. When $d=0$, (\ref{eq2.5}) is satisfied trivially. Using the property of cosine function 
\[\cos(2\pi(m+n)/2n)=\cos(2\pi(n-m)/2n),\]
we can rewrite $\Theta_{L_1}(n_i)(t)$ as
\begin{equation}
\Theta_{L_1}(n_i)(t)=\frac{1}{2}\left(\Theta_{DT_{1}}(2n_i)(t)-(1+e^{-4t})\right)\label{interval1}
\end{equation} which is exactly (\ref{eq2.5}) for when $d=1$. Now, we assume (\ref{eq2.5}) holds for all $k\leq d-1$. When k=d, we have 
\begin{eqnarray*}
\begin{aligned}
&\Theta_{L_d(n_1,...,n_d)}(t)\\
&=\Theta_{L_{1}(n_d)}(t)\Theta_{L_{d-1}(n_1,\dots,n_{d-1})}(t)\\
&=\frac{\Theta_{L_{1}(n_d)}(t)}{2^{d-1}}\sum^{d-1}_{p=0}\sum_{0< i_1 <\cdots<i_p\leq d-1}(-1-e^{-4t})^{d-p-1}\Theta_{DT_{p}(2n_{i_1},...,2n_{i_p})}(t)\\
&=\frac{1}{2^{d}}\left(\sum^{d}_{p=1}\sum_{0< i_1 <\cdots<i_p= d}(-1-e^{-4t})^{d-p}\Theta_{DT_{p}(2n_{i_1},...,2n_{i_p})}(t)\right.\\
&\left.+\sum^{d-1}_{p=0}\sum_{0< i_1 <\cdots<i_p\leq d-1}(-1-e^{-4t})^{d-p-1}\Theta_{DT_{p}(2n_{i_1},...,2n_{i_p})}(t)\right)\\
&=\frac{1}{2^d}\sum^d_{p=0}\sum_{0< i_1 <\cdots<i_p\leq d}(-1-e^{-4t})^{d-p}\Theta_{DT_{p}(2n_{i_1},...,2n_{i_p})}(t)
\end{aligned}
\end{eqnarray*}
which completes inductive step in the proof of the proposition.
\end{proof}

\begin{remark} Using the induction as above, one can give a different proof of corollary 2.3 of \cite{louis2017asymptotics} where the free boundary condition is considered. 
\end{remark}

\subsubsection{Analysis of determinants of discrete Laplacians}\label{subsection2.2.1}

Here, we will study asymptotic behavior of determinant of the discrete Laplacian on $L_{d}(N)$. Our discussion here uses a lot of results from section 3 and section 4 of \cite{chinta2010zeta} and it is inspired by section 3 of the same paper. More precisely, we modify a lot of results from section 3 of \cite{chinta2010zeta} to address our need, in particular, to accommodate the boundary conditions in the consideration.

The following function $g(t)$ defined by
\begin{eqnarray} \label{g(t)}
\begin{aligned}
g(t)=\sum_{p=1}^dV^{d, N}_p(-1-e^{-4t})^{d-p}(e^{-2t}I_0(2t))^p
\end{aligned}
\end{eqnarray}
plays an important role in this section. In the next lemma, we analyze how $g(t)$ interacts with $\Theta_{L_{d}(N)}(t)$ as $t\to0$ and $t\to\infty.$ 

\begin{lemma} \label{analysis of g}As $t\rightarrow0$,
$$\Theta_{L_{d}(N)}(t)-g(t)-(-1)^de^{-t}=\mathcal{O}(t).$$ 

Furthermore, as $t\to\infty$, we have
and 
$$g(t)+(-1)^de^{-t}=\mathcal{O}(t^{-{1/2}}).$$

\end{lemma}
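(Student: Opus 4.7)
The plan is to decompose $\Theta_{\Ld}(t)$ via Proposition \ref{prop22.2}, recognize $g(t)$ as the dominant contribution, and show that the remainder splits into a smooth piece that matches $(-1)^d e^{-t}$ to leading order at $t=0$ plus a very small Bessel-type error.

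The key auxiliary identity is
$$
\Theta_{DT_1(n)}(t) = n e^{-2t}I_0(2t) + R_n(t),\qquad R_n(t):=2n e^{-2t}\sum_{k\geq 1}I_{kn}(2t),
$$
which follows from the Jacobi--Anger expansion $e^{2t\cos\theta}=\sum_{k\in\Z}I_k(2t)e^{ik\theta}$ and the orthogonality of characters on $\Z/n\Z$. Since the spectrum of the discrete torus Laplacian on $DT_p(2n_{i_1},\dots,2n_{i_p})$ is the sum of one-dimensional spectra, $\Theta_{DT_p(2n_{i_1},\dots,2n_{i_p})}(t)=\prod_{j=1}^p \Theta_{DT_1(2n_{i_j})}(t)$. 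Plugging the identity above into each factor and substituting into Proposition \ref{prop22.2}, the ``pure-leading'' choice $\prod_j (2n_{i_j}) (e^{-2t}I_0(2t))^p$, summed over $i_1<\cdots<i_p\leq d$ via $2^p\sum\prod n_{i_j}=2^d V^{d,N}_p$ and then over $p\geq 1$, reconstructs exactly $g(t)$. The $p=0$ contribution is $\frac{1}{2^d}(-1-e^{-4t})^d$, while every other term carries at least one factor $R_{2n_{i_j}}(t)$; denote the sum of these remaining terms by $E_N(t)$. Thus
$$
\Theta_{\Ld}(t) - g(t) - (-1)^d e^{-t} = \Bigl[\tfrac{1}{2^d}(-1-e^{-4t})^d - (-1)^d e^{-t}\Bigr] + E_N(t).
$$

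For the $t\to 0$ assertion, both summands in the bracket equal $(-1)^d$ at $t=0$, so a Taylor expansion gives $\mathcal{O}(t)$. For the Bessel error, the series $I_m(z)=\sum_{k\geq 0}\frac{(z/2)^{2k+m}}{k!(m+k)!}$ yields $I_m(2t)=\mathcal{O}(t^m)$ near zero, whence $R_n(t)=\mathcal{O}(t^n)$. Since every term contributing to $E_N$ contains at least one factor $R_{2n_{i_j}}$ with $n_{i_j}\geq 1$ (and the remaining factors are bounded as $t\to 0$), we get $E_N(t)=\mathcal{O}(t^{2n_{\min}})=\mathcal{O}(t)$, completing the first part.

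For $t\to\infty$, the term $(-1)^d e^{-t}$ is exponentially small, so it suffices to bound $g(t)$. Combining the standard asymptotic $e^{-2t}I_0(2t)\sim (4\pi t)^{-1/2}$ with $(-1-e^{-4t})^{d-p}=(-1)^{d-p}+\mathcal{O}(e^{-4t})$ yields
$$
g(t)=\sum_{p=1}^d V^{d,N}_p(-1)^{d-p}(4\pi t)^{-p/2}\bigl(1+\mathcal{O}(t^{-1})\bigr),
$$
and the slowest-decaying piece is the $p=1$ term, which is $\mathcal{O}(t^{-1/2})$. The main technical work is the combinatorial bookkeeping in the product expansion in paragraph two, where one must check that \emph{every} mixed term carrying at least one $R_{2n_{i_j}}$ factor is uniformly $\mathcal{O}(t^{2n_{\min}})$ at small $t$; the sharp order $R_n(t)=\mathcal{O}(t^n)$ makes this immediate, which is why this lemma is essentially a combinatorial unpacking of the Bessel identity above.
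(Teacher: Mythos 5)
Your proof is correct and follows essentially the same route as the paper: decompose $\Theta_{L_d(N)}$ via Proposition \ref{prop22.2}, recognize $g(t)$ as the leading part with the $p=0$ term $\frac{1}{2^d}(-1-e^{-4t})^d$ matching $(-1)^d e^{-t}$ up to $\mathcal{O}(t)$, and use $e^{-2t}I_0(2t)=\mathcal{O}(t^{-1/2})$ for the large-$t$ bound. The only difference is that where the paper simply cites the small-$t$ estimate $\theta_{DT_d(N)}-V^{d,N}_d e^{-2dt}I_0(2t)^d=\mathcal{O}(t)$ from Chinta--Jorgenson--Karlsson, you re-derive it self-containedly via the Jacobi--Anger identity and the explicit remainder $R_n(t)=\mathcal{O}(t^n)$, which is a fine (slightly more detailed) substitute.
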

\begin{proof}
Recall that 
\[
\Theta_{\Ld}(t)=\frac{1}{2^d}\sum^d_{p=0}\sum_{0<i_1 <\cdots<i_p\leq d}(-1-e^{-4t})^{d-p}\Theta_{DT_{p}(2n_{i_1},...,2n_{i_p})}(t),
\]
and by our convention $\Theta_{DT_{0}}(t)=\Theta_{L_{0}}(t)=1$.

We recall from \cite{chinta2010zeta} that
\begin{equation} \label{eqn2.9}
\theta_{DT_d(N)}-V^{d,N}_de^{-2dt}I_0(2t)^d=\mathcal{O}(t) 
\end{equation} as $t\rightarrow 0$.
Moreover, as $t\rightarrow \infty$,
\begin{equation}\label{eqn2.10}
e^{-2pt}I_0(2t)^p=\mathcal{O}(t^{-p/2}).
\end{equation}
From (\ref{eqn2.9}) and using the definition of $V^{d, N}_p$, we have 
$$\Theta_{L_{d}(N)}(t)-g(t)-\frac{1}{2^d}(-1-e^{-4t})^d=\mathcal{O}(t)$$ as $t\rightarrow 0$. We also have 
$$\frac{1}{2^d}(-1-e^{-4t})^d-(-1)^de^{-t}=\mathcal{O}(t)$$ as $t\rightarrow 0.$
Combining these proves us the first statement of the lemma.

Using (\ref{eqn2.10}), we see that as $t\rightarrow\infty$, we have
\begin{equation}
(-1-e^{-4t})^{d-p}(e^{-2t}I_0(2t))^p=\mathcal{O}(t^{-p/2}).
\end{equation}
This means that the behavior of $g(t)$ as $t\rightarrow\infty$ is governed by the term $(-1-e^{-4t})^{d-1}(e^{-2t}I_0(2t))$
which is of $\mathcal{O}(t^{-1/2})$. On the other hand $(-1)^de^{-t}=o(1)$ as $t\to\infty$. This concludes the proof of the second statement of the lemma.
\end{proof}

In the next few lemmas, we present auxiliary results which will be used later.
\begin{lemma}\label{3.1}
For all $s\in\mathbb{C}$ with $Re(s^2)>0$, we have
$$\sum_{\lambda\in\Lambda_{\Ld}}\frac{2s}{s^2+\lambda}=2s\int^{\infty}_0e^{-s^2t}g(t)dt+2s\int^{\infty}_0e^{-s^2t}[\Theta_{L_{d}(N)}(t)-g(t)]\d t.$$
\begin{proof}
By definition of $\Theta_{L_{d}(N)}(t)$, we have
$$\sum_{\lambda\in \Lambda_{\Ld}}e^{-\lambda t}=g(t)+(\Theta_{L_{d}(N)}(t)-g(t))$$
Multiplying both sides by $e^{-s^2t}$ and integrating with respect to $t$, we get
\begin{eqnarray*}
\sum_{\lambda}\frac{1}{s^2+\lambda}=\int^{\infty}_0e^{-s^2t}g(t)dt+\int^{\infty}_0e^{-s^2t}[\Theta_{L_{d}(N)}(t)-g(t)]dt.
\end{eqnarray*}
Now, multiplying both sides by $2s$ yields the desired results.
\end{proof}

\end{lemma}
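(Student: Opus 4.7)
The plan is to recognize the left-hand side as a finite sum of values of a Laplace transform and to rewrite it as a single integral using the elementary identity $\int_0^\infty e^{-(s^2+\lambda)t}\,\mathrm{d}t = 1/(s^2+\lambda)$, valid whenever $\mathrm{Re}(s^2+\lambda)>0$. Under the hypothesis $\mathrm{Re}(s^2)>0$ and because the eigenvalues $\lambda\in\Lambda_{L_d(N)}$ satisfy $\lambda\geq 0$ (with strict positivity in fact, due to the Dirichlet boundary condition), this hypothesis is automatically met.

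Concretely, I would first write, for each eigenvalue $\lambda$,
\[
\frac{2s}{s^2+\lambda}\;=\;2s\int_0^\infty e^{-(s^2+\lambda)t}\,\mathrm{d}t,
\]
and then sum over $\lambda\in\Lambda_{L_d(N)}$. Because the discrete Laplacian on $L_d(N)$ has finite-dimensional domain, the spectrum is finite, so interchanging the finite sum with the integral is immediate and requires no further justification. This yields
\[
\sum_{\lambda\in\Lambda_{L_d(N)}}\frac{2s}{s^2+\lambda}\;=\;2s\int_0^\infty e^{-s^2 t}\sum_{\lambda}e^{-\lambda t}\,\mathrm{d}t\;=\;2s\int_0^\infty e^{-s^2 t}\,\Theta_{L_d(N)}(t)\,\mathrm{d}t.
\]
Writing $\Theta_{L_d(N)}(t)=g(t)+\bigl(\Theta_{L_d(N)}(t)-g(t)\bigr)$ inside the integral and splitting the integral accordingly produces exactly the two terms on the right-hand side of the statement.

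There is no serious obstacle here: finiteness of the spectrum trivializes any Fubini-type concerns, and the factor $e^{-s^2 t}$ together with $\mathrm{Re}(s^2)>0$ guarantees absolute convergence of each piece at infinity. The content of the lemma is really the \emph{choice} of the splitting by $g(t)$, which is engineered, via Lemma \ref{analysis of g}, so that $\Theta_{L_d(N)}(t)-g(t)-(-1)^d e^{-t}=\mathcal{O}(t)$ as $t\to 0$ and $g(t)+(-1)^d e^{-t}=\mathcal{O}(t^{-1/2})$ as $t\to\infty$; these estimates are what will make the two integrals on the right amenable to the Mellin-type analysis carried out later. Thus this lemma is essentially a bookkeeping identity preparing the ground for the subsequent asymptotic expansion.
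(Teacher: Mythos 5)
Your proof is correct and follows essentially the same route as the paper: both arguments rest on the Laplace transform identity $\int_0^\infty e^{-(s^2+\lambda)t}\,\mathrm{d}t = 1/(s^2+\lambda)$ applied across the finite spectrum, followed by the splitting of $\Theta_{L_d(N)}(t)$ into $g(t)$ plus the remainder. Your added remarks on finiteness of the spectrum, positivity of the Dirichlet eigenvalues, and absolute convergence of each piece merely make explicit what the paper leaves implicit.
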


\begin{lemma}\label{3.2}
Let $f(s)$ be given by $$f(s)= \sum_{\lambda\in \Lambda_{\Ld}}\log(\lambda+s^2),$$
then $f(s)$ is uniquely determined by the differential equation
\begin{eqnarray}\label{ODEf}\partial_sf(s)=\sum_{\lambda\in \Lambda_{\Ld}}\frac{2s}{s^2+\lambda}
\end{eqnarray}
and the asymptotic behavior 
$$f(s)=\log(s^2)\prod^d_{i=1}(n_i-1)+o(1)$$
as $s\rightarrow \infty$.
\end{lemma}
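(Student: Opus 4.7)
The plan is to verify two assertions about $f(s) = \sum_{\lambda \in \Lambda_{L_d(N)}} \log(\lambda + s^2)$: first that it satisfies the ODE \eqref{ODEf}, and second that it has the claimed asymptotic behavior as $s \to \infty$. Uniqueness will then follow from a one-line ODE argument applied to the difference of any two candidates.

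For the ODE, the key observation is that $\Lambda_{L_d(N)}$ is a \emph{finite} set: by the explicit description recalled just before Lemma \ref{3.1}, its eigenvalues are indexed by tuples $(q_1,\dots,q_d)$ with $0 < q_j < n_j$, hence of cardinality $\prod_{i=1}^d (n_i-1)$. Moreover, for $0 < q_j < n_j$ each factor $2-2\cos(\pi q_j/n_j)$ is strictly positive, so every $\lambda>0$ and $\log(\lambda+s^2)$ is holomorphic in a half-plane containing $\{\operatorname{Re}(s^2)>0\}$. Since the sum is finite, termwise differentiation is immediate and yields $\partial_s\log(\lambda+s^2) = 2s/(s^2+\lambda)$; summing gives exactly \eqref{ODEf}.

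For the asymptotic behavior, I would write $\log(\lambda+s^2) = \log(s^2) + \log(1+\lambda/s^2)$, and expand each $\log(1+\lambda/s^2) = \lambda/s^2 + O(\lambda^2/s^4)$ for $|s|^2 \gg \max \Lambda_{L_d(N)}$. Summing the finitely many terms yields
\[
f(s) \;=\; \log(s^2)\prod_{i=1}^d(n_i-1) \;+\; \frac{1}{s^2}\sum_{\lambda\in \Lambda_{L_d(N)}}\lambda \;+\; O(s^{-4}),
\]
so in particular the remainder is $o(1)$, which is precisely the claimed expansion.

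Finally, for uniqueness, suppose $\tilde f$ is any other function satisfying \eqref{ODEf}. Then $\partial_s(f-\tilde f) \equiv 0$, so $f-\tilde f$ is constant in $s$; matching the expansion $\log(s^2)\prod(n_i-1)+o(1)$ at $s\to\infty$ on both sides forces this constant to vanish. There is no real obstacle here; both conditions are genuinely needed since the ODE alone pins $f$ down only up to an additive constant, which the $o(1)$ refinement in the asymptotic removes. The only thing one must take care with is correctly identifying $|\Lambda_{L_d(N)}| = \prod_{i=1}^d(n_i-1)$, which is exactly the coefficient appearing in the stated asymptotic.
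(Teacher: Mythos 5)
Your proof is correct and follows essentially the same route as the paper, which simply asserts that $f$ solves the ODE, that the stated asymptotics hold, and that first-order uniqueness (together with the asymptotics) pins $f$ down; you supply the details — finiteness and positivity of $\Lambda_{L_d(N)}$, termwise differentiation, the expansion $\log(\lambda+s^2)=\log(s^2)+\log(1+\lambda/s^2)$, and the constant-difference argument — that the paper leaves implicit.
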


\begin{proof} Obviously, $f(s)$ solves the first order ODE (\ref{ODEf}) and it is easy to see $$f(s)=\log(s^2)\prod^d_{i=1}(n_i-1)+o(1)$$
as $s\rightarrow \infty$. Since the equation (\ref{ODEf}) is a first order ODE, $f$ must be unique. 
\end{proof}

\begin{proposition}\label{3.3}
Let $\mathcal{L}^{d}_{i}(s)$ be given by
$$
\mathcal{L}^d_{i}(s)=
-\int ^{\infty}_0\left((-1-e^{-4t})^{d-i} e^{-s^2t}e^{-2it}(I_0(2t))^i-(-2)^{d-i}e^{-t}\right)\frac{\dt}{t}\quad 0<i\leq d,
$$
then $\mathcal{L}^d_{i}$ is the unique function which solves the differential equation
\begin{eqnarray}\label{ODEL}\partial_s\mathcal{L}^d_{i}(s)=2s\int^{\infty}_{0}(1-e^{-4t})^{d-i} e^{-s^2t}e^{-2it}(I_0(2t))^i\,\dt
\end{eqnarray}
and has the asymptotic behavior
$$
 \mathcal{L}^d_{i}(s)=
(-2)^{d-i}\log(s^2)+o(1)\hskip5mm\text{}~~0<i\leq d\\
$$ as $s\to\infty.$

\begin{proof}
Following the argument similar to the proof of Theorem \ref{T3.1}, we observe that $\mathcal{L}^i_d$ is differentiable. It is easy to verify that $\mathcal{L}^d_{i}(s)$ satisfies the first order ODE (\ref{ODEL}). To study the asymptotic behavior of $\mathcal{L}^d_{i}(s)$, we use the binomial expansion for $(-1-e^{-4t})^{d-i}$ and $(-1-1)^{d-i}$ to rewrite $\mathcal{L}^d_{i}(s)$ as 
$$
\begin{aligned}
\mathcal{L}^d_{i}(s)&=-\int ^{\infty}_0((-1-e^{-4t})^{d-i} e^{-s^2t}e^{-2it}(I_0(2t))^i-(-2)^{d-i}e^{-t})\frac{\dt}{t}\\
&=-(-1)^{d-i}\sum_j\int ^{\infty}_0\dbinom{d-i}{j}(e^{-4t})^je^{-s^2t}e^{-2it}(I_0(2t)^i-e^{-t})\frac{\dt}{t}\\
&=-(-1)^{d-i}\sum_j\int ^{\infty}_0\dbinom{d-i}{j}\bigg\{e^{-s^2t}(e^{-4t})^je^{-2it}((I_0(2t)^i)-1)\\
&+\left(e^{-s^2t}(e^{-4t})^je^{-2it}-e^{-t}\right)\bigg\}\frac{\dt}{t}\\
&=-(-1)^{d-i}\sum_j\int ^{\infty}_0\dbinom{d-i}{j}e^{-s^2t}(e^{-4t})^je^{-2it}((I_0(2t)^i)-1)\frac{\dt}{t}\\
&+(-1)^{d-i}\sum_j\dbinom{d-i}{j}\log(s^2+2i+4j).
\end{aligned}
$$
From this calculation, we see that $$(-1)^{d-i}\sum_j\dbinom{d-i}{j}\log(s^2+2i+4j)\rightarrow 2^{d-i}\log(s^2)$$ and 
$$\int ^{\infty}_0e^{-s^2t}(e^{-4t})^je^{-2it}((I_0(2t)^i)-1)\frac{dt}{t}\rightarrow 0$$ as $s\rightarrow\infty$. This proves the asymptotic behavior of $\mathcal{L}^d_{i}(s)$. 
\end{proof}
\end{proposition}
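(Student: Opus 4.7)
The plan is to proceed in three stages: first establish that $\mathcal{L}^d_i(s)$ is well defined and differentiable on $\operatorname{Re}(s^2)>0$, then verify the ODE by differentiating under the integral, and finally extract the logarithmic asymptotic via a binomial expansion combined with a Frullani-type integral identity. For well-definedness, I would check the integrand $(-1-e^{-4t})^{d-i}e^{-s^2 t}e^{-2it}(I_0(2t))^i - (-2)^{d-i}e^{-t}$ at both endpoints. Near $t=0$, Taylor expansions $I_0(2t) = 1 + t^2 + O(t^4)$ and $-1-e^{-4t} = -2 + 4t + O(t^2)$ show the bracket equals $\bigl[(-2)^{d-i} - (-2)^{d-i}\bigr] + O(t) = O(t)$, so dividing by $t$ is harmless. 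At $t=\infty$, the asymptotic $I_0(2t) \sim e^{2t}/\sqrt{4\pi t}$ combines with $e^{-2it}$ to give polynomial decay, while $e^{-s^2 t}$ enforces exponential decay for $\operatorname{Re}(s^2)>0$. Standard dominated-convergence arguments then justify differentiation under the integral sign, immediately producing the stated ODE (modulo what appears to be a sign typo $1-e^{-4t}$ vs $-1-e^{-4t}$).

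For the asymptotic behavior, which is the real content of the proposition, the key step is to expand $(-1-e^{-4t})^{d-i} = (-1)^{d-i}\sum_{j=0}^{d-i}\binom{d-i}{j}e^{-4jt}$ and simultaneously write $(-2)^{d-i} = (-1)^{d-i}\sum_j \binom{d-i}{j}$. Substituting both into the definition and interchanging the finite sum with the integral yields
\[
\mathcal{L}^d_i(s) = -(-1)^{d-i}\sum_{j=0}^{d-i}\binom{d-i}{j}\int_0^\infty\!\!\bigl(e^{-(s^2+4j+2i)t}(I_0(2t))^i - e^{-t}\bigr)\frac{dt}{t}.
\]
Now I would add and subtract $e^{-(s^2+4j+2i)t}$ inside the bracket to split each integral into two pieces: one involving $e^{-(s^2+4j+2i)t}\bigl((I_0(2t))^i - 1\bigr)/t$, and a Frullani-type piece $\int_0^\infty (e^{-(s^2+4j+2i)t}-e^{-t})\,dt/t = -\log(s^2+4j+2i)$.

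The Frullani pieces collectively contribute $(-1)^{d-i}\sum_j \binom{d-i}{j}\log(s^2+4j+2i)$. Since $\log(s^2+4j+2i) = \log(s^2) + o(1)$ uniformly in the finite index $j$ as $s\to\infty$, summing gives $(-1)^{d-i}2^{d-i}\log(s^2) + o(1) = (-2)^{d-i}\log(s^2)+o(1)$, which is exactly the asserted leading term. What remains is to show that the residual integrals $\int_0^\infty e^{-(s^2+4j+2i)t}((I_0(2t))^i-1)\,dt/t$ tend to zero as $s\to\infty$; this is the main technical obstacle. I would dominate the integrand uniformly in $s\geq 1$ by an integrable majorant (using the $t=O(t)$ behavior of $(I_0(2t))^i-1$ near $0$ and the polynomial-times-$e^{-t}$ behavior at infinity), then apply dominated convergence since the integrand vanishes pointwise as $s\to\infty$.

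Finally, uniqueness follows from the observation in Lemma 3.2's argument: the ODE $\partial_s\mathcal{L}^d_i = F(s)$ is first order, so two solutions differ by a constant, and the asymptotic condition $(-2)^{d-i}\log(s^2)+o(1)$ pins that constant to zero. The main obstacle, as noted, is ensuring uniform integrable control for the residual integrals at both endpoints; away from this, the proof is an exercise in rearranging a finite sum and invoking Frullani.
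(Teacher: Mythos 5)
Your proposal is correct and follows essentially the same route as the paper's proof: binomial expansion of $(-1-e^{-4t})^{d-i}$ and $(-2)^{d-i}$, separation into a Frullani-type piece giving $\sum_j\binom{d-i}{j}\log(s^2+2i+4j)$ and a residual $\bigl((I_0(2t))^i-1\bigr)$ piece shown to vanish as $s\to\infty$, with uniqueness from the first-order ODE plus the prescribed asymptotics. Your added care with well-definedness, dominated convergence, and the sign discrepancy in the stated ODE only makes explicit what the paper leaves implicit.
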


\begin{proposition}\label{3.4}
Let
$$\mathcal{H}_N(s)=-\int^{\infty}_0\left\{e^{-s^2t}(\Theta_{L_{d}(N)}(t)-g(t))-(-1)^de^{-t}\right\}\frac{\dt}{t},$$
then $\mathcal{H}_N(s)$ satisfies the differential equation
$$\partial_s\mathcal{H}_N(s)=2s\int^{\infty}_0\left\{e^{-s^2t}(\Theta_{L_{d}(N)}(t)-g(t))\right\}\dt$$
and it is uniquely determined by the asymptotic behavior 
$$\mathcal{H}_N(s)=(-1)^d\log(s^2)+{o}(1)~~~as~~s\rightarrow\infty$$
\end{proposition}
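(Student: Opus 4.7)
The plan is to mirror the structure of Proposition \ref{3.3}: first establish convergence of the defining integral, then differentiate under the integral to verify the ODE, and finally extract the logarithmic asymptotic by isolating a Frullani-type piece. Uniqueness will be automatic once the ODE and asymptotic are in hand.

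First I would check that $\mathcal{H}_N(s)$ is well-defined for $\mathrm{Re}(s^2)>0$. By Lemma \ref{analysis of g}, $\Theta_{L_{d}(N)}(t)-g(t)-(-1)^d e^{-t}=\mathcal{O}(t)$ as $t\to 0$, so the integrand of $\mathcal{H}_N(s)$ is $\mathcal{O}(1)$ near the origin. As $t\to\infty$, $\Theta_{L_d(N)}(t)$ decays exponentially (the Dirichlet spectrum is bounded below by a positive constant) while $g(t)=\mathcal{O}(t^{-1/2})$, so the factor $e^{-s^2 t}$ produces exponential decay provided $\mathrm{Re}(s^2)>0$. The same estimates, with an additional factor of $t$ coming from $\partial_s e^{-s^2 t}=-2st\, e^{-s^2 t}$, justify differentiation under the integral sign; since $(-1)^d e^{-t}$ is $s$-independent it drops out, producing exactly the ODE displayed in the statement.

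For the asymptotic behavior I would insert the anchor $(-1)^d e^{-t}$ into the exponentially weighted factor. Using the Frullani identity $\int_0^{\infty}(e^{-t}-e^{-(s^2+1)t})\,\dt/t=\log(s^2+1)$, one rewrites
$$\mathcal{H}_N(s)=-\int_0^{\infty} e^{-s^2 t}\bigl(\Theta_{L_d(N)}(t)-g(t)-(-1)^d e^{-t}\bigr)\frac{\dt}{t}+(-1)^d\log(s^2+1).$$
Lemma \ref{analysis of g} bounds the $s$-independent factor in the remaining integral by an $\mathcal{O}(1)$ function near $0$ and by $\mathcal{O}(t^{-3/2})$ near $\infty$ (the slow tail coming from $g$), and this is integrable on $(0,\infty)$. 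Combined with $|e^{-s^2 t}|\le 1$ for $\mathrm{Re}(s^2)\ge 0$, this provides an $s$-independent $L^1$ majorant, so dominated convergence sends the integral to $0$ as $s\to\infty$. Since $\log(s^2+1)=\log(s^2)+o(1)$, this yields $\mathcal{H}_N(s)=(-1)^d\log(s^2)+o(1)$ as desired.

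Finally, uniqueness is immediate from the fact that the ODE is first order: any two solutions differ by a constant, and the prescribed asymptotic pins down that constant. The main technical subtlety is the comparatively slow $\mathcal{O}(t^{-1/2})$ decay of $g$ at infinity. This is what forces me to keep $\Theta_{L_d(N)}-g$ paired under the single factor $e^{-s^2 t}$ throughout the argument, so that the exponential decay supplied by $\mathrm{Re}(s^2)>0$ compensates the slow decay of $g$; splitting and treating the two pieces separately would destroy the needed integrability.
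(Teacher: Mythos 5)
Your proof is correct and follows essentially the same strategy as the paper: verify the ODE, subtract a comparison function capturing the small-$t$ behavior of $\Theta_{L_{d}(N)}(t)-g(t)$ (using Lemma \ref{analysis of g}), extract the logarithm from the Frullani-type integral, and conclude the remaining integral is $o(1)$, with uniqueness coming from the first-order ODE together with the prescribed asymptotics. The only (harmless) difference is your choice of comparison term $(-1)^de^{-t}$, which gives $(-1)^d\log(s^2+1)$ exactly and a clean dominated-convergence remainder, whereas the paper subtracts $\bigl(-(1+e^{-4t})/2\bigr)^d$ and obtains the same asymptotics after a binomial expansion.
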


\begin{proof}
We can show $\mathcal{H}_N(s)$ is differentiable following an argument similar to the proof of Theorem \ref{T3.1}. We can also easily verify that $\mathcal{H}_N(s)$ satisfies the associated first order ODE. Now, rewrite $\mathcal{H}_N(s)$ as follows:
\begin{align*}
 \mathcal{H}_N(s)=&-\int^{\infty}_0\left\{e^{-s^2t}(\Theta_{L_{d}(N)}(t)-g(t))-(-1)^de^{-t}\right\}\dfrac{\dt}{t}\\
=&-\int^{\infty}_0\left\{e^{-s^2t}\left(\Theta_{L_{d}(N)}(t)-g(t)-\bigg(\frac{-(1+e^{-4t})}{2}\bigg)^d\right)\right.\\
&\left.+\left(e^{-s^2t}\bigg(\frac{-(1+e^{-4t})}{2}\bigg)^d-(-1)^de^{-t}\right)\right\}\frac{\dt}{t}\\
=&-\int^{\infty}_0\left\{e^{-s^2t}\left(\Theta_{L_{d}(N)}(t)-g(t)-\bigg(\frac{-(1+e^{-4t})}{2}\bigg)^d\right)\right\}\frac{\dt}{t}+(-1)^d\log(s^2)
\end{align*}
In the last line, we used the fact that $\int_{0}^{\infty}(e^{-s^2t}-e^{-t})\frac{\dt}{t}=-\log(s^2)$ (see for example \cite{chinta2010zeta}). 
We observe using the proof of Lemma \ref{analysis of g},  
\begin{eqnarray*}
\int^{\infty}_0\left\{e^{-s^2t}\left(\Theta_{L_{d}(N)}(t)-g(t)-\bigg(\frac{-(1+e^{-4t})}{2}\bigg)^d\right)\right\}\frac{\dt}{t}= o(1)
\end{eqnarray*} as $s\to \infty$ and this completes the proof.
\end{proof}

From the definition of $V_{i}^{d,N}$, it follows that the following relation holds. 
\begin{lemma}\label{product to sum} The following identity holds:
$$\prod^d_{i=1}(n_i-1)=(-1)^d+\sum_{i=1}^{d}(-2)^{d-i}V_{i}^{d,N}.$$
\end{lemma}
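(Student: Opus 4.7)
The plan is to expand the product $\prod_{i=1}^d(n_i-1)$ by the distributive law, rewrite the expansion in terms of elementary symmetric polynomials of $n_1,\dots,n_d$, and then match those elementary symmetric polynomials with the $V^{d,N}_k$ by unwinding their definition. This is a purely algebraic identity, so no analytic input is needed.

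More precisely, I would first write
\[
\prod_{i=1}^d (n_i - 1) \;=\; \sum_{S\subseteq\{1,\dots,d\}}\Bigl(\prod_{i\in S} n_i\Bigr)\Bigl(\prod_{i\notin S}(-1)\Bigr)
\;=\; \sum_{S\subseteq\{1,\dots,d\}} (-1)^{d-|S|}\prod_{i\in S} n_i,
\]
and then group the subsets $S$ by their cardinality $k=|S|$:
\[
\prod_{i=1}^d (n_i - 1) \;=\; \sum_{k=0}^{d} (-1)^{d-k}\!\!\sum_{0<i_1<\cdots<i_k\leq d}\prod_{j=1}^{k} n_{i_j}.
\]
The $k=0$ term is simply $(-1)^d$ (empty product equal to $1$).

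Next, I would recall from the definition $V^{d,N}_k=\frac{1}{2^{d-k}}\sum_{0<i_1<\cdots<i_k\leq d}\prod^k_{j=1}n_{i_j}$ that
\[
\sum_{0<i_1<\cdots<i_k\leq d}\prod_{j=1}^{k} n_{i_j} \;=\; 2^{d-k}\,V^{d,N}_k,
\]
so for $k\geq 1$ the inner sum above is $2^{d-k}V^{d,N}_k$, and the corresponding term becomes $(-1)^{d-k}\,2^{d-k}\,V^{d,N}_k=(-2)^{d-k}V^{d,N}_k$. Substituting back and separating the $k=0$ term yields
\[
\prod_{i=1}^d (n_i - 1) \;=\; (-1)^d \;+\; \sum_{k=1}^{d} (-2)^{d-k}V^{d,N}_k,
\]
which is the claimed identity.

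There is no real obstacle here: the only thing to be careful about is the bookkeeping of signs and the factor $2^{d-k}$ hidden in the definition of $V^{d,N}_k$. The identity is essentially the standard expansion of a product of binomials $(n_i-1)$ in terms of elementary symmetric polynomials, rewritten using the normalization convention of the paper.
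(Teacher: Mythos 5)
Your proof is correct, and it is exactly the argument the paper has in mind: the paper simply asserts the identity as following from the definition of $V^{d,N}_k$, and your expansion of $\prod_{i=1}^d(n_i-1)$ into elementary symmetric polynomials, grouped by cardinality and matched with the factor $2^{d-k}$ in the definition, fills in that routine computation correctly.
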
 

\begin{theorem}\label{T3.1}
For any $s\in\mathbb{C}$ with $Re(s^2)>0,$ we have 
$$\sum_{\lambda\in \Lambda_{\Ld}}\log(\lambda+s^2)=\sum^d_{i=1}V^{d, N}_i\mathcal{L}^d_{i}(s)+\mathcal{H}_N(s).$$
When $s\to 0$, we will have the identity
$$\sum_{\lambda\in \Lambda_{\Ld}}\log(\lambda)=\sum^d_{i=1}V^{d,N}_i\mathcal{L}^d_{i}(0)+\mathcal{H}_N(0)$$
where
$$
\mathcal{L}^d_{i}(s)=
-\int ^{\infty}_0((-1-e^{-4t})^{d-i} e^{-s^2t}e^{-2it}(I_0(2t))^i-(-2)^{d-i}e^{-t})\frac{\dt}{t}\quad 0<i\leq d,
$$
and
$$\mathcal{H}_N(s)=-\int^{\infty}_0\left\{e^{-s^2t}(\Theta_{L_{d}(N)}(t)-g(t))-(-1)^de^{-t}\right\}\frac{\dt}{t}$$ as in the Proposition \ref{3.3} and \ref{3.4} respectively. 
\end{theorem}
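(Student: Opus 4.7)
The plan is to prove the identity by appealing to the uniqueness criterion in Lemma \ref{3.2}: both sides are shown to satisfy the same first-order ODE in $s$ and to have the same $s\to\infty$ asymptotic, so they must coincide. The $s\to 0$ case will then follow by a continuity/dominated convergence argument, once we observe that each integral $\mathcal{L}_i^d(0)$ and $\mathcal{H}_N(0)$ is absolutely convergent.

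The first step is to set $F(s):=\sum_{i=1}^d V^{d,N}_i\mathcal{L}^d_i(s)+\mathcal{H}_N(s)$ and compute $\partial_s F(s)$. By Propositions \ref{3.3} and \ref{3.4}, $F$ is differentiable and
\[\partial_sF(s)=2s\sum_{i=1}^d V^{d,N}_i\!\int_0^\infty (-1-e^{-4t})^{d-i}e^{-s^2t}e^{-2it}(I_0(2t))^i\,\dt\ +\ 2s\int_0^\infty e^{-s^2t}\bigl(\Theta_{L_d(N)}(t)-g(t)\bigr)\,\dt.\]
Recognizing the first sum as $2s\int_0^\infty e^{-s^2t}g(t)\,\dt$ from the definition \eqref{g(t)} of $g(t)$, Lemma \ref{3.1} yields $\partial_sF(s)=\sum_\lambda 2s/(s^2+\lambda)$, which is exactly the ODE \eqref{ODEf} satisfied by the left-hand side $f(s)=\sum_\lambda \log(\lambda+s^2)$.

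For the asymptotic as $s\to\infty$, Proposition \ref{3.3} gives $\mathcal{L}^d_i(s)=(-2)^{d-i}\log(s^2)+o(1)$ and Proposition \ref{3.4} gives $\mathcal{H}_N(s)=(-1)^d\log(s^2)+o(1)$, so
\[F(s)=\Bigl((-1)^d+\sum_{i=1}^d(-2)^{d-i}V^{d,N}_i\Bigr)\log(s^2)+o(1).\]
By Lemma \ref{product to sum} the bracket equals $\prod_{i=1}^d(n_i-1)$, matching the asymptotic of $f(s)$ recorded in Lemma \ref{3.2}. Uniqueness in that lemma forces $f(s)=F(s)$ for all $s$ with $\operatorname{Re}(s^2)>0$.

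For the second identity, we need to pass to $s=0$. The left-hand side $\sum_\lambda\log(\lambda+s^2)$ is a finite sum with $\lambda>0$ (Dirichlet spectrum), so it is continuous at $s=0$. On the right, the two integrals defining $\mathcal{L}^d_i(0)$ and $\mathcal{H}_N(0)$ converge absolutely: near $t=0$ the subtracted terms $(-2)^{d-i}e^{-t}$ and $(-1)^de^{-t}$ cancel the leading behavior (using Lemma \ref{analysis of g} for $\mathcal{H}_N$ and a direct Taylor expansion of $(-1-e^{-4t})^{d-i}e^{-2it}I_0(2t)^i$ for $\mathcal{L}_i^d$), giving integrands that are $O(1)$ as $t\to 0^+$; as $t\to\infty$, the standard asymptotic $e^{-2t}I_0(2t)\sim(4\pi t)^{-1/2}$ together with the exponential decay of $\Theta_{L_d(N)}(t)-g(t)-(-1)^d e^{-t}$ (governed by the smallest Dirichlet eigenvalue) ensures integrability. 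Dominated convergence then lets us pass the limit $s\to 0$ inside each integral, yielding the stated identity. The main technical point I expect to watch most carefully is the $t\to 0$ bound for $\mathcal{L}_i^d(0)$, since one must verify that the subtraction $(-2)^{d-i}e^{-t}$ matches the small-$t$ limit of $(-1-e^{-4t})^{d-i}e^{-2it}I_0(2t)^i$ at order $1$, so that division by $t$ still yields an integrable function.
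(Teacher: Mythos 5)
Your proposal is correct and follows essentially the same argument as the paper: both sides satisfy the ODE of Lemma \ref{3.1} (after reassembling $g(t)$ from the $\mathcal{L}^d_i$ derivatives via its definition \eqref{g(t)}), the $s\to\infty$ asymptotics are matched through Lemma \ref{product to sum} so the constant of integration vanishes, and the $s\to0$ identity follows by checking that the $s=0$ integrands lie in $L^1((0,\infty),\dt/t)$ and passing to the limit. The only slip is your claim of exponential decay of $\Theta_{L_{d}(N)}(t)-g(t)-(-1)^de^{-t}$ as $t\to\infty$: since $g(t)+(-1)^de^{-t}=\mathcal{O}(t^{-1/2})$ (Lemma \ref{analysis of g}), this difference decays only polynomially, which is nonetheless sufficient for integrability against $\dt/t$.
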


\begin{proof}
 Using differential equations in Lemma \ref{3.1}, Lemma \ref{3.2}, Proposition \ref{3.3}, and Proposition \ref{3.4}, we observe that
\begin{eqnarray}\label{factorization of discrete determinant}
\sum_{\lambda \in \Lambda_{\Ld}}\log(\lambda+s^2)=\sum^d_{i=1}V^{d,N}_i\mathcal{L}^d_{i}(s)+\mathcal{H}_N(s)+{C},
\end{eqnarray}
where ${C}$ is a constant. Using the asymptotic behavior as $s\to \infty$ from Lemma \ref{3.1} Lemma \ref{3.2}, Proposition \ref{3.3}, and Proposition \ref{3.4} and then using Lemma \ref{product to sum}, we conclude that $C=0$.

In order to complete the proof of the theorem, we need to study the relation (\ref{factorization of discrete determinant}) as $s\to 0$. In particular, we need to show the right hand side of (\ref{factorization of discrete determinant}) is finite when $s\to 0$. 

Using the facts (see for example \cite{chinta2010zeta}),
$$\begin{aligned}
e^{-2dt}(I_0(2t))^d-e^{-t}=\mathcal{O}(t)~~~&as~t\rightarrow0\\
e^{-2dt}(I_0(2t))^d=\mathcal{O}(t^{-d/2})~~~&as~t\rightarrow\infty
\end{aligned}$$
we see that the integrand in the definition of $\mathcal{L}^d_{d}(0)$ is in $L^1(0,\infty)$, w.r.t. $dt/t$. Hence, $\mathcal{L}^d_{d}(0)$ exists. When $i<d$,
$$\begin{aligned}
(1+e^{-4t})^{d-i} e^{-2it}(I_0(2t))^i)-2^{d-i}e^{-t}\approx2^{d-i}(e^{-2it}I_0(2t)^{i}-e^{-t})= \mathcal{O}(t)~~~&as~t\rightarrow0\\
(1+e^{-4t})^{d-i} e^{-2it}(I_0(2t))^i)-2^{d-i}e^{-t}\approx e^{-2it}(I_0(2t))^i)=\mathcal{O}(t^{-i/2})~~~&as~t\rightarrow\infty.
\end{aligned}$$

Using these relations, we find that the integrand in $\mathcal{L}^d_{i}(0)$ is $L^1(0,\infty)$, w.r.t. $\dt/t$.

Next, we analyze $\mathcal{H}_N(0)$. From the Lemma \ref{analysis of g}, we have 
$$\Theta_{L_{d}(N)}(t)-g(t)-(-1)^de^{-t}=o(t)~~~as~t\rightarrow0$$ 

Furthermore, as $t\to\infty$, we have
$$\Theta_{L_{d}(N)}(t)=\mathcal{O}(e^{-ct})~~~\text{for some}~c>0$$
and 
$$g(t)+(-1)^de^{-t}=\mathcal{O}(t^{-{1/2}}).$$
This means that $\mathcal{H}_N(0)$ is $L^1(0,\infty)$, w.r.t. $\dt/t$. Now, we can complete the proof by letting the limit $s\to 0$ in (\ref{factorization of discrete determinant}).
\end{proof}

\subsubsection{\textbf{Asymptotic behavior of determinants of discrete Laplacians}}
We recall that discrete hypercubes $L_{d}(N)$ and discrete tori $DT_d(N)$ also depend on another parameter $u$ which we may think as reciprocal of the lattice spacing. Hence, the various functions such as the discrete theta function and the determinant of discrete Laplacians are functions of $u.$ In this section, we will study the behavior
of these functions as $u\to\infty$. More precisely, we will prove one of the main results of this paper which is to obtain the asymptotic expansion of logarithm of the determinant of discrete Laplacian of hypercube as $u\to\infty$. 

\begin{proposition}\label{4.1}
 For each fixed $t>0$ we have the limit
 $$\Theta_{L_{d}(N)}(u^2t)\to\theta_{K_{d}(A)}(t)$$ as $u\to \infty.$
\end{proposition}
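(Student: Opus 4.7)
The plan is to prove the convergence by working directly with the explicit spectra and applying dominated convergence on the counting measure of $(\mathbb{N}\setminus\{0\})^d$.

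First, I would write out both theta functions explicitly. From the known spectrum of the discrete Dirichlet Laplacian,
\[
\Theta_{L_d(N)}(u^2 t) = \sum_{\substack{0<q_j<n_j \\ 1\le j\le d}} \prod_{j=1}^{d} \exp\!\Bigl(-u^2 t\,\bigl(2-2\cos(\pi q_j/n_j)\bigr)\Bigr),
\]
and from the spectrum $\Lambda_{K_d(A)}$ recalled in Section \ref{section 2.1},
\[
\theta_{K_d(A)}(t)=\sum_{(q_1,\dots,q_d)\in(\mathbb{N}\setminus\{0\})^d}\prod_{j=1}^d \exp\!\bigl(-t\,\pi^2 q_j^2/a_j^2\bigr).
\]
To bring these to a common index set I would extend the summand in the first expression to all of $(\mathbb{N}\setminus\{0\})^d$ by declaring it to be zero whenever some $q_j\ge n_j$; note that, since $n_j(u)\to\infty$ with $u$, eventually every fixed multi-index is included.

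Second, I would establish pointwise convergence of each summand. For fixed $(q_1,\dots,q_d)$, the condition $n_j(u)/u\to a_j$ gives $\pi q_j/n_j\to 0$, and a Taylor expansion yields
\[
u^2\bigl(2-2\cos(\pi q_j/n_j)\bigr) = u^2\cdot\frac{\pi^2 q_j^2}{n_j^2}\bigl(1+O(1/n_j^2)\bigr) \longrightarrow \frac{\pi^2 q_j^2}{a_j^2}.
\]
Multiplying the $d$ factors together yields convergence of the summand to the corresponding term of $\theta_{K_d(A)}(t)$.

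Third, and this is the main technical step, I would produce a uniform summable dominator in order to justify exchanging the limit and the sum. The key inequality is $1-\cos x\ge 2x^2/\pi^2$ for $0\le x\le\pi$ (equivalent to $\sin(x/2)\ge x/\pi$ on that interval). Applying this to $x=\pi q_j/n_j\in(0,\pi)$ gives
\[
2-2\cos(\pi q_j/n_j)\ \ge\ \frac{4 q_j^2}{n_j^2}.
\]
Since $n_j/u\to a_j$, for all sufficiently large $u$ we have $n_j\le 2 u a_j$, hence
\[
u^2 t\bigl(2-2\cos(\pi q_j/n_j)\bigr)\ \ge\ \frac{4 u^2 t\, q_j^2}{n_j^2}\ \ge\ \frac{t\, q_j^2}{a_j^2}.
\]
Therefore every summand is dominated, uniformly in $u$, by $\prod_{j=1}^d \exp(-t q_j^2/a_j^2)$, which is summable over $(\mathbb{N}\setminus\{0\})^d$ for any fixed $t>0$.

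Finally, with pointwise convergence and the integrable dominator in hand, the dominated convergence theorem for series delivers $\Theta_{L_d(N)}(u^2 t)\to\theta_{K_d(A)}(t)$. I expect the main (only) obstacle to be the dominating estimate; the convex elementary bound $1-\cos x\ge 2x^2/\pi^2$ on $[0,\pi]$ handles it cleanly because the discrete eigenvalue index $\pi q_j/n_j$ always lies in that interval under the Dirichlet constraint $0<q_j<n_j$.
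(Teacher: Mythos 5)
Your proof is correct, but it takes a genuinely different route from the paper. The paper does not argue on the spectrum directly: it invokes the decompositions of Proposition \ref{proposition2.1} and Proposition \ref{prop22.2}, which express $\theta_{K_d(A)}$ and $\Theta_{L_d(N)}$ as signed combinations of torus theta functions, notes that $(-1-e^{-4u^2t})\to -1$, and then reduces the claim to the convergence $\Theta_{DT_p(n_{i_1},\dots,n_{i_p})}(u^2t)\to\theta_{T_p(a_{i_1},\dots,a_{i_p})}(t)$, which is quoted from Chinta--Jorgenson--Karlsson. Your argument instead works with the explicit Dirichlet eigenvalues, establishes pointwise convergence of each term via $u^2\bigl(2-2\cos(\pi q_j/n_j)\bigr)\to \pi^2 q_j^2/a_j^2$, and justifies the interchange of limit and sum with the uniform dominator coming from $1-\cos x\ge 2x^2/\pi^2$ on $[0,\pi]$ (valid here since $0<\pi q_j/n_j<\pi$), together with $n_j\le 2ua_j$ for large $u$; the zero-extension of the summand to all of $(\mathbb{N}\setminus\{0\})^d$ makes the dominated convergence argument clean, and all steps check out. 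What the paper's route buys is brevity and structural coherence: the key identity relating hypercube and torus theta functions is needed elsewhere in the paper anyway, so the proposition comes almost for free once the torus case is cited. What your route buys is self-containedness and elementarity: it avoids the external reference and the inclusion--exclusion machinery entirely, and the same domination argument transfers directly to other situations where the discrete spectrum is explicit (e.g.\ other boundary conditions or rescalings), at the cost of redoing an estimate that the torus literature already provides.
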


\begin{proof}
In the light of Proposition \ref{proposition2.1} and Proposition \ref{prop22.2} and the fact that 
$$(-1-e^{-4u^2t})\rightarrow-1~~~as~u\rightarrow\infty,$$ it is sufficient to show
$$\Theta_{DT_{p}(n_{i_1},\dots, n_{i_p})}(u^2t)\rightarrow\theta_{T_{p}(a_{i_1},...,a_{i_p})}(t)$$ as $u\to \infty.$
But this statement is already proved in \cite{chinta2010zeta}.
\end{proof}

\begin{proposition}\label{4.2}
As $u\to\infty$,
$$\begin{aligned}
\int^{\infty}_1&\left\{(\Theta_{L_{d}(N)}(u^2t)-g(u^2t))-(-1)^de^{-u^2t}\right\}\frac{\dt}{t}\\
&=\int^{\infty}_1(\theta_{K_{d}(A)}(t))\frac{\dt}{t}-\sum_{i\neq0}(-1)^{d-i}\frac{2}{i}V^d_i(4\pi)^{-i/2}+\mathcal{O}(1).
\end{aligned}$$
\end{proposition}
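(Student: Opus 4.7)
The plan is to split the integrand as $\Theta_{\Ld}(u^2 t)-g(u^2 t)-(-1)^d e^{-u^2 t}$ and analyze each of the three integrals over $[1,\infty)$ separately as $u\to\infty$ by dominated convergence.

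For $\int_1^\infty \Theta_{\Ld}(u^2 t)\frac{\dt}{t}$, Proposition \ref{4.1} already supplies pointwise convergence to $\theta_{\Kd}(t)$. For the dominating function I will use that the spectrum of $\Delta_{\Ld}$ is $\{4\sum_i\sin^2(\pi q_i/(2n_i)):0<q_i<n_i\}$, and combine the elementary inequality $\sin(x)\geq 2x/\pi$ on $[0,\pi/2]$ with $n_i(u)/u\to a_i$ to obtain $\lambda_{\vec q}u^2\geq c\sum_i q_i^2/a_i^2$ for all $u\geq u_0$. Hence
$$0\leq\Theta_{\Ld}(u^2 t)\leq \prod_{i=1}^d\sum_{q\geq 1} e^{-ctq^2/a_i^2},$$
a $u$-independent function of $t$ that decays exponentially and is therefore integrable against $\dt/t$ on $[1,\infty)$. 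Dominated convergence delivers the desired limit $\int_1^\infty \theta_{\Kd}(t)\,\dt/t$.

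For $\int_1^\infty g(u^2 t)\frac{\dt}{t}$, recall $g(u^2 t)=\sum_{p=1}^d V_p^{d,N}(-1-e^{-4u^2 t})^{d-p}(e^{-2u^2 t}I_0(2u^2 t))^p$. The classical expansion $I_0(x)=e^x/\sqrt{2\pi x}\,(1+O(1/x))$, combined with $V_p^{d,N}/u^p\to V_p^d$ and $(-1-e^{-4u^2 t})^{d-p}\to(-1)^{d-p}$, yields the pointwise limit $g(u^2 t)\to\sum_{p=1}^d (-1)^{d-p}V_p^d(4\pi t)^{-p/2}$ together with a uniform bound $|g(u^2 t)|\leq C\sum_{p=1}^d V_p^d\, t^{-p/2}$ valid for $u\geq u_0,\, t\geq 1$, which is integrable against $\dt/t$ on $[1,\infty)$. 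Dominated convergence and $\int_1^\infty t^{-p/2-1}\,\dt=2/p$ then produce the limit $\sum_{p=1}^d(-1)^{d-p}\frac{2}{p}V_p^d(4\pi)^{-p/2}$. The third piece is trivial: $\int_1^\infty e^{-u^2 t}\,\dt/t\leq e^{-u^2}/u^2=o(1)$. Assembling the three limits gives the stated identity with $o(1)$ error (and hence in particular $\mathcal{O}(1)$).

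The main obstacle is the uniformity-in-$u$ step, since the discrete spectrum $\Lambda_{\Ld}$ contains $\prod_i(n_i-1)\sim u^d$ eigenvalues and any termwise bound blows up. The sinusoidal lower bound $\sin(\pi q/(2n))\geq q/n$ is precisely what repackages the discrete trace into a summable series with $u$-independent Gaussian tails. The analogous subtlety for $g$ is that each individual factor $e^{-2u^2 t}I_0(2u^2 t)$ tends to zero at a rate which must balance precisely against the growth $V_p^{d,N}\sim u^p$ in order to extract the finite limit $V_p^d(4\pi t)^{-p/2}$; this forces one to invoke the leading-order asymptotic of $I_0$ rather than a crude pointwise bound.
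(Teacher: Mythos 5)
Your proposal is correct and follows essentially the same route as the paper: split the integral into the $\Theta_{L_d(N)}$, $g$, and $e^{-u^2t}$ pieces, pass to the limit in the first via Proposition \ref{4.1}, use the $I_0$ asymptotics together with $V^{d,N}_p\sim u^pV^d_p$ to evaluate the $g$ piece as $\sum_{p}(-1)^{d-p}\tfrac{2}{p}V^d_p(4\pi)^{-p/2}$, and discard the exponential piece. The only difference is that you make the interchange of limit and integral explicit by exhibiting $u$-independent dominating functions (the $\sin(x)\geq 2x/\pi$ bound for the discrete theta function and the $e^{-x}I_0(x)=O(x^{-1/2})$ bound for $g$), a justification the paper leaves implicit.
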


\begin{proof} Let us write
\begin{eqnarray}
\begin{array}{lll}
&\int^{\infty}_1\left\{(\Theta_{L_{d}(N)}(u^2t)-g(u^2t))-(-1)^de^{-u^2t}\right\}\dfrac{\dt}{t}\\
&=\int^{\infty}_1(\Theta_{L_{d}(N)}(u^2t))\dfrac{\dt}{t}-\int^{\infty}_1g(u^2t)\dfrac{\dt}{t}-1\int^{\infty}_1(-1)^de^{-u^2t}\dfrac{\dt}{t}.\\
\end{array}
\end{eqnarray}
Using Proposition \ref{4.1}, 
\begin{eqnarray}\int^{\infty}_1(\Theta_{L_{d}(N)}(u^2t))\frac{\dt}{t}=\int^{\infty}_1\theta_{K_{d}(A)}(t)\frac{\dt}{t}
\end{eqnarray} as $u\to\infty$. Note that
\begin{eqnarray*}
\hskip5mm V^{d,N}_d\int^{\infty}_1(e^{-u^2t}I_0({2u^2t})^d\frac{\dt}{t}\rightarrow V^d_d\int^{\infty}_1(4\pi t)^{-d/2}\frac{\dt}{t}=\frac{2}{d}V^d_d(4\pi)^{-d/2}
\end{eqnarray*}
when $u\rightarrow\infty$, which immediately implies the following: 
\begin{eqnarray}
\int^{\infty}_1g(u^2t)\frac{\dt}{t}=\sum_{i\neq0}(-1)^{d-i}\frac{2}{i}V^d_{i}(4\pi)^{-i/2}
\end{eqnarray} as $u\rightarrow\infty$.
Also, 
\begin{eqnarray}
\int^{\infty}_1e^{-u^2t}\frac{\dt}{t}=o(1)
\end{eqnarray} as $u\rightarrow\infty$. Using $(2.9), (2.10)$ and $(2.11)$ in $(2.8)$ proves the proposition.
\end{proof}

\begin{proposition}\label{4.3} For $u\to\infty$, the following holds :
$$\int^{1}_0\left\{\Theta_{L_{d}(N)}(u^2t)-g(u^2t)-\frac{1}{2^d}(-1-e^{-4u^2t})^d\right\}\frac{\dt}{t}\rightarrow\int^{1}_0\left\{\theta_{K_{d}(A)}(t)-f(t)\right\}\frac{\dt}{t}.$$
\begin{proof}
For fixed $t<1$, 
$$\Theta_{L_{d}(N)}(u^2t)-g(u^2t)\rightarrow\theta_{L_{d}(A)}(t)-f(t)+(-1)^d\frac{1}{2^d}~~~as~u\to\infty.$$
We also have
$$(-1-e^{-4u^2t})^d\rightarrow (-1)^d~~~as~u\rightarrow\infty.$$ These relations immediately imply the proposition.
\end{proof}
\end{proposition}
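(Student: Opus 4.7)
My plan is to prove Proposition~\ref{4.3} via the Dominated Convergence Theorem applied to the family
\[
\phi_u(t) := \frac{1}{t}\Big[\Theta_{L_{d}(N)}(u^2 t) - g(u^2 t) - \tfrac{1}{2^d}(-1-e^{-4u^2 t})^d\Big]
\]
on $(0,1]$. Concretely, I would first establish pointwise convergence $\phi_u(t) \to \phi_\infty(t) := \tfrac{1}{t}[\theta_{K_{d}(A)}(t) - f(t)]$ for each fixed $t \in (0,1]$; the target $\phi_\infty$ lies in $L^1((0,1],\dt)$ by Corollary~\ref{corollary2.1}, which gives $\theta_{K_{d}(A)}(t) - f(t) = \mathcal{O}(e^{-c/t})$ as $t \to 0^+$. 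It then suffices to exhibit a $u$-independent envelope $F \in L^1((0,1],\dt)$ with $|\phi_u| \leq F$ for all sufficiently large $u$, so that passage to the limit under the integral is legitimate.

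For the pointwise convergence I would combine three inputs. Proposition~\ref{4.1} yields $\Theta_{L_{d}(N)}(u^2 t) \to \theta_{K_{d}(A)}(t)$. Using the classical asymptotic $e^{-2\tau}I_0(2\tau) \sim (4\pi\tau)^{-1/2}$ as $\tau \to \infty$, together with $V_p^{d,N}/u^p \to V_p^d$ (a consequence of $n_i(u)/u \to a_i$), each summand of $g(u^2 t)$ from \eqref{g(t)} satisfies
\[
V_p^{d,N}(-1-e^{-4u^2 t})^{d-p}(e^{-2u^2 t} I_0(2u^2 t))^p \longrightarrow (-1)^{d-p} V_p^d (4\pi t)^{-p/2},
\]
so that $g(u^2 t) \to f(t) - (-1)^d/2^d$, since the $p=0$ term of $f$ is absent from $g$. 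Finally $\tfrac{1}{2^d}(-1-e^{-4u^2 t})^d \to (-1)^d/2^d$, and these three contributions assemble into $\phi_\infty(t)$.

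For the dominating function I would decompose, with $\tau = u^2 t$,
\[
\Theta_{L_{d}(N)}(\tau) - g(\tau) - \tfrac{(-1-e^{-4\tau})^d}{2^d} = \tfrac{1}{2^d}\sum_{p=1}^d \sum_{0 < i_1 < \cdots < i_p \leq d} (-1-e^{-4\tau})^{d-p}\, E_p^{(i_1,\ldots,i_p)}(\tau),
\]
where $E_p^{(i_1,\ldots,i_p)}(\tau) := \Theta_{DT_p(2n_{i_1},\ldots,2n_{i_p})}(\tau) - \prod_{j=1}^{p}(2n_{i_j})(e^{-2\tau} I_0(2\tau))^p$ is the discrepancy between the discrete-torus theta function and its leading Bessel asymptotic; this decomposition follows from Proposition~\ref{prop22.2} together with the identity $\prod(2n_{i_j}) = 2^d V_p^{d,N}/\binom{\cdots}{\cdots}$ summed over subsets. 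Producing the envelope thereby reduces to a uniform (in $N$) bound on $|E_p|/t$ on $(0,1]$, and this is the main obstacle. The tool I would reach for is the Poisson-summation representation of $\Theta_{DT_p}$ used in~\cite{chinta2010zeta}, which writes $E_p(\tau)$ as a sum of Bessel functions $I_{2n_{i_j}k}(2\tau)$ with $k \neq 0$; for $\tau \leq u^2$ and $n_{i_j} \sim a_{i_j} u$ these decay factorially in $u$, while the behavior of the bracket as $\tau \to 0$ is controlled by Lemma~\ref{analysis of g}. The delicate region is $t \in (u^{-2}, \delta]$, where $\tau$ is large but not tending to infinity uniformly; here I expect the envelope to take the shape of a fixed multiple of $\theta_{K_{d}(A)}(t)/t + |f(t)|/t$ plus an exponentially small correction in $u$, obtained by splitting $(0,\delta]$ into regimes $(0,u^{-2}]$ and $(u^{-2},\delta]$ and applying the short-time and large-argument Bessel estimates respectively. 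Once this uniform $F$ is constructed, the Dominated Convergence Theorem delivers the proposition.
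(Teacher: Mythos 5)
Your pointwise limit identification is exactly the one the paper uses: you split the integrand into $\Theta_{L_d(N)}(u^2t)$, $g(u^2t)$ and the $p=0$ term, send $\Theta_{L_d(N)}(u^2t)\to\theta_{K_d(A)}(t)$ via Proposition~\ref{4.1}, and use $V^{d,N}_p/u^p\to V^d_p$ together with $e^{-2\tau}I_0(2\tau)\sim(4\pi\tau)^{-1/2}$ to get $g(u^2t)\to f(t)-(-1)^d/2^d$; your bookkeeping of the $V^d_0=2^{-d}$ term is correct, and your identity rewriting $\Theta_{L_d(N)}-g-2^{-d}(-1-e^{-4\tau})^d$ as a sum of torus discrepancies $E_p$ checks out against Proposition~\ref{prop22.2} and \eqref{g(t)}. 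Where you differ from the paper is in the second half: the paper's proof consists only of the pointwise statements and the assertion that they ``immediately imply'' the proposition, i.e.\ the interchange of limit and integral is not justified there, whereas you correctly recognize that this interchange is the actual content and propose a dominated convergence argument with a $u$-independent envelope. That is the right instinct, and the tools you name are the right ones: the theta-inversion/Bessel representation of $\Theta_{DT_p}$ from \cite{chinta2010zeta}, whose off-lattice terms $I_{2n k}(2\tau)$, $k\neq 0$, decay superexponentially in $n$ for $\tau\lesssim u^2$, is precisely what rescues the regime $t\in(0,u^{-2}]$, where the $O(\tau)$ bound of Lemma~\ref{analysis of g} has an $N$-dependent constant and cannot by itself give a uniform envelope.

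Be aware, though, that your domination step is still a plan rather than a proof: the envelope $F$ is described (``I expect the envelope to take the shape of\dots'') but not constructed, and the intermediate regime $t\in(u^{-2},\delta]$, where $\tau=u^2t$ is large but the Bessel asymptotics must be made uniform in both $\tau$ and $n\sim u$, is exactly where the estimates of \cite{chinta2010zeta} (their uniform bounds on $e^{-2\tau}I_0(2\tau)$ and on the $k\neq0$ tail) have to be quoted or reproved with explicit uniformity. If you carry that out, your argument is strictly more complete than the paper's one-line proof; as written, both your proposal and the paper leave the same step — uniform integrability in $u$ — to the reader, but yours at least identifies it and points to the correct machinery.
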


We recall the following well known fact from \cite{chinta2010zeta}:
\begin{equation}\label{Integralcomputation}
\int^1_0(e^{-u^2t}-1)\frac{\dt}{t}=\Gamma'(1)-\log(u^2)+o(1)~~~as~~~u\rightarrow\infty.
\end{equation}
We will use this fact in the next proposition to prove a modified version of this fact needed for our purpose.

\begin{proposition}\label{4.4} As $u\rightarrow\infty$,
$$\int^1_0\left((1+e^{-4u^2t})^d-2^de^{-t}\right)\frac{\dt}{t}=\log(u^2)-\sum^d_{i=1}\log(4i)\dbinom{d}{i}-\Gamma'(1)+o(1).$$

\begin{proof}
$$
\begin{aligned}
\int^1_0\left((1+e^{-4u^2t})^d-2^de^{-u^2t}\right)\frac{\dt}{t}&=\int^1_0\left((1+e^{-4u^2t})^d-2^d\right)\frac{\dt}{t}\\
&+\int^1_0\left(2^d-2^de^{-u^2t}\right)\frac{\dt}{t}
\end{aligned}
$$
Using (\ref{Integralcomputation}), 
$$\int^1_0\left(2^d-2^de^{-u^2t}\right)\frac{\dt}{t}=2^d(-\Gamma'(1)+\log(u^2))+o(1)$$
as $u\rightarrow\infty$.

Moreover, 
$$
\begin{aligned}
\int^1_0\left((1+e^{-4u^2t})^d-2^d\right)&=\sum_{i=0}^{d}\int^{1}_0\dbinom{d}{i}(e^{-4iu^2t}-1)\frac{\dt}{t}\\
&=\sum_{i=1}^{d}\dbinom{d}{i}(\Gamma'(1)-\log(4iu^2))+o(1)
\end{aligned}$$
as $u\rightarrow\infty$. Now, the proposition follows from combining these two observations.

\end{proof}
\end{proposition}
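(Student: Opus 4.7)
The plan is to reduce the integral to the known asymptotic
\[
\int^1_0(e^{-u^2t}-1)\frac{\dt}{t}=\Gamma'(1)-\log(u^2)+o(1)
\]
by splitting the integrand additively and using the binomial theorem. First I would write
\[
(1+e^{-4u^2t})^d-2^de^{-u^2t}=\bigl((1+e^{-4u^2t})^d-2^d\bigr)+\bigl(2^d-2^de^{-u^2t}\bigr),
\]
where the statement's exponent $e^{-t}$ should read $e^{-u^2t}$ (otherwise the right-hand side cannot contain $\log(u^2)$, as is also evident from the opening lines of the author's argument). The second piece contributes $2^d(\log(u^2)-\Gamma'(1))+o(1)$ directly by the boxed asymptotic above.

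For the first piece, expand $(1+e^{-4u^2t})^d=\sum_{i=0}^d\binom{d}{i}e^{-4iu^2t}$, so the $i=0$ term cancels the constant $2^d$ and leaves
\[
\int_0^1\bigl((1+e^{-4u^2t})^d-2^d\bigr)\frac{\dt}{t}=\sum_{i=1}^d\binom{d}{i}\int_0^1(e^{-4iu^2t}-1)\frac{\dt}{t}.
\]
Now for each fixed $i\geq 1$ the change of variables $s=4it$ (which preserves the measure $\dt/t=\d s/s$) turns the inner integral into $\int_0^{4i}(e^{-u^2s}-1)\frac{\d s}{s}$. Splitting this as $\int_0^1+\int_1^{4i}$, the first piece is $\Gamma'(1)-\log(u^2)+o(1)$ by the known asymptotic, while on $[1,4i]$ one has $e^{-u^2s}\to 0$ uniformly, so $\int_1^{4i}(e^{-u^2s}-1)\frac{\d s}{s}=-\log(4i)+o(1)$. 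Therefore
\[
\int_0^1(e^{-4iu^2t}-1)\frac{\dt}{t}=\Gamma'(1)-\log(u^2)-\log(4i)+o(1).
\]

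Summing over $i$ and using $\sum_{i=1}^d\binom{d}{i}=2^d-1$, the first piece contributes
\[
(2^d-1)\bigl(\Gamma'(1)-\log(u^2)\bigr)-\sum_{i=1}^d\binom{d}{i}\log(4i)+o(1).
\]
Adding the two pieces, the $(2^d-1)$ and $2^d$ contributions telescope to leave exactly $\log(u^2)-\Gamma'(1)-\sum_{i=1}^d\binom{d}{i}\log(4i)+o(1)$, as claimed. The only nontrivial step is the $\int_1^{4i} e^{-u^2 s}\,\d s/s=o(1)$ estimate, but this is immediate from $e^{-u^2 s}\le e^{-u^2}$ on $[1,4i]$; no real obstacle arises. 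The bookkeeping of the two $\log(u^2)$ coefficients is the only place to be careful, since each piece individually produces a $\log(u^2)$ of different magnitude and they must cancel down to a single $\log(u^2)$.
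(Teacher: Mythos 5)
Your proof is correct and follows essentially the same route as the paper: the same additive splitting into $\bigl((1+e^{-4u^2t})^d-2^d\bigr)+\bigl(2^d-2^de^{-u^2t}\bigr)$, the binomial expansion, and the known asymptotic (\ref{Integralcomputation}), with your change of variables $s=4it$ merely making explicit the step the paper states directly as $\Gamma'(1)-\log(4iu^2)+o(1)$. You are also right that the exponent in the displayed statement should be $e^{-u^2t}$ rather than $e^{-t}$, which is exactly what the paper's own proof (and its use in Corollary \ref{Cor2.3}) assumes.
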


We have the following corollary of this proposition.
\begin{corollary}\label{Cor2.3} As $u\rightarrow\infty$,
$$
\begin{aligned}
\int^{1}_0&\left\{\Theta_{L_{d}(N)}(u^2t)-g(u^2t)-(-1)^de^{-u^2t}\right\}\frac{\dt}{t}=\\ &\int^{1}_0\left\{\theta_{L(a_1,...,a_d)}(t)-f(t)\right\}\frac{\dt}{t}-\frac{(-1)^d}{2^d}(\Gamma'(1)-\log(u^2))\\
&-\frac{(-1)^d}{2^d}\sum^d_{i=1}\log(4i)\dbinom{d}{i}+o(1).
\end{aligned}
$$
\end{corollary}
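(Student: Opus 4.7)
The plan is to derive the corollary by adding and subtracting a convenient ``bridge'' term so that the integrand splits into the one handled by Proposition \ref{4.3} plus a scalar multiple of the one handled by Proposition \ref{4.4}. No new analysis is needed; it is an assembly of the two preceding propositions together with an elementary sign identity.

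First, I would observe the identity
\[
\frac{1}{2^d}(-1-e^{-4u^2t})^d = \frac{(-1)^d}{2^d}(1+e^{-4u^2t})^d,
\]
and use it to rewrite
\[
\begin{aligned}
\Theta_{L_{d}(N)}(u^2t)-g(u^2t)-(-1)^de^{-u^2t}
&= \Bigl[\Theta_{L_{d}(N)}(u^2t)-g(u^2t)-\tfrac{1}{2^d}(-1-e^{-4u^2t})^d\Bigr] \\
&\quad + \tfrac{(-1)^d}{2^d}\bigl[(1+e^{-4u^2t})^d - 2^de^{-u^2t}\bigr].
\end{aligned}
\]
Integrating both sides from $0$ to $1$ against $\dt/t$ reduces the corollary to two limits that have already been established.

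Next, I would apply Proposition \ref{4.3} to the first bracket on the right: its integral over $[0,1]$ converges, as $u\to\infty$, to $\int_0^1\{\theta_{K_d(A)}(t)-f(t)\}\,\dt/t$. Then I would apply Proposition \ref{4.4} to the second bracket: its integral equals $\log(u^2) - \sum_{i=1}^d\log(4i)\binom{d}{i} - \Gamma'(1) + o(1)$. Multiplying this by the prefactor $\frac{(-1)^d}{2^d}$ produces exactly the terms $-\frac{(-1)^d}{2^d}(\Gamma'(1)-\log(u^2))$ and $-\frac{(-1)^d}{2^d}\sum_{i=1}^d\log(4i)\binom{d}{i}$ appearing on the right of the corollary.

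Adding the two contributions and collecting gives the claimed asymptotic. The only real obstacle here is not analytic but bookkeeping: one must keep the $(-1)^d/2^d$ factors straight when converting $(-1-e^{-4u^2t})^d$ into $(-1)^d(1+e^{-4u^2t})^d$, so that the sign flips align with the $-(\Gamma'(1)-\log(u^2))$ combination in the statement. Once the decomposition is written down, everything is routine, and the $o(1)$ error on the right is inherited from the $o(1)$ terms in Proposition \ref{4.3} and Proposition \ref{4.4}.
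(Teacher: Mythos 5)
Your proposal is correct and is essentially the paper's own (implicit) argument: the corollary is obtained exactly by splitting the integrand as $\bigl[\Theta_{L_d(N)}(u^2t)-g(u^2t)-\tfrac{1}{2^d}(-1-e^{-4u^2t})^d\bigr]+\tfrac{(-1)^d}{2^d}\bigl[(1+e^{-4u^2t})^d-2^de^{-u^2t}\bigr]$ and invoking Proposition \ref{4.3} for the first piece and Proposition \ref{4.4} for the second. You also correctly use the $2^de^{-u^2t}$ form of Proposition \ref{4.4}, which is what its proof actually establishes (the $e^{-t}$ in its statement is a typo) and is precisely what the decomposition requires.
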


\begin{proposition}
As $u\to \infty,$
$$
\mathcal{H}_N(u)(0)=-\frac{(-1)^d}{2^d}\log(u^2)-\zeta_{K_d(A)}'(0)+\frac{(-1)^d}{2^d}\sum^d_{i=1}\log(4i)\dbinom{d}{i}+o(1).$$
\begin{proof}
Combining the results of Proposition \ref{4.2} and Corollary \ref{Cor2.3}, we have
$$
\begin{aligned}
\mathcal{H}_N(u)(0)=&-\int^{1}_0\left\{\theta_{L(a_1,...,a_d)}(t)-f(t)\right\}\frac{\dt}{t}+\frac{(-1)^d}{2^d}(\Gamma'(1)-\log(u^2))\\
&-\int^{\infty}_1(\theta_{K_{d}(A)}(t))\frac{\dt}{t}+\sum_{i\neq0}(-1)^{d-i}\frac{2}{i}V^d_i(4\pi)^{-i/2}\\
&+\frac{(-1)^d}{2^d}\sum^d_{i=1}\log(4i)\dbinom{d}{i}+o(1).
\end{aligned}
$$ 
But notice that the expression in the first two lines is $\zeta'_{K_d(A)}(0)$ which completes the proof of this proposition.
\end{proof}
\end{proposition}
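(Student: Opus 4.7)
The plan is to exploit the scale invariance of the measure $\dt/t$, split the integral defining $\mathcal{H}_{N(u)}(0)$ at $t=1$ after rescaling $t\mapsto u^2 t$, and combine Corollary \ref{Cor2.3} with Proposition \ref{4.2} to reduce the answer to the Mellin representation (\ref{cont_inter1}) of $\zeta'_{K_d(A)}(0)$.

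First I would observe that since $\dt/t$ is invariant under $t\mapsto u^2 t$, the definition of $\mathcal{H}_{N}$ at $s=0$ can be rewritten as
$$\mathcal{H}_{N(u)}(0) = -\int_0^{\infty}\left\{\Theta_{L_d(N)}(u^2 t) - g(u^2 t) - (-1)^d e^{-u^2 t}\right\}\frac{\dt}{t}.$$
Splitting the range at $t=1$, the short-time part $\int_0^1$ falls under Corollary \ref{Cor2.3} and the long-time part $\int_1^{\infty}$ under Proposition \ref{4.2}. The first contributes the corner term $\tfrac{(-1)^d}{2^d}\sum_{i=1}^d \log(4i)\binom{d}{i}$ together with the logarithmic piece $-\tfrac{(-1)^d}{2^d}\log(u^2)$, a Gamma-derivative piece $\tfrac{(-1)^d}{2^d}\Gamma'(1)$, and the ``short-time'' Mellin tail $-\int_0^1\{\theta_{K_d(A)}(t)-f(t)\}\dt/t$; the second contributes the matching ``long-time'' tail $-\int_1^{\infty}\theta_{K_d(A)}(t)\dt/t$ and the heat-coefficient subtraction $\sum_{i\neq 0}(-1)^{d-i}\frac{2}{i}V_i^d(4\pi)^{-i/2}$.

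Adding the two outputs, the non-corner terms reorganize into exactly the right-hand side of (\ref{cont_inter1}) and therefore equal $-\zeta'_{K_d(A)}(0)$. What is left over is precisely $-\tfrac{(-1)^d}{2^d}\log(u^2)+\tfrac{(-1)^d}{2^d}\sum_{i=1}^d \log(4i)\binom{d}{i}+o(1)$, which is the claimed asymptotic (identifying $\log\det\nolimits_{\zeta}\Delta_{K_d(A)} = -\zeta'_{K_d(A)}(0)$ as usual).

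The only real obstacle is careful sign tracking: between the overall minus sign in the definition of $\mathcal{H}_{N(u)}$, the factor $(-1)^d$ arising in the $e^{-t}$ correction, and the factors $(-1)^{d-i}$ appearing in both $f(t)$ and Corollary \ref{corollary2.1}, one must verify that the four ingredients really do telescope into (\ref{cont_inter1}). No new analytic input is required; this proposition is a combinatorial reassembly of Corollary \ref{Cor2.3}, Proposition \ref{4.2}, and the integral representation (\ref{cont_inter1}) derived in Section \ref{section 2.1}.
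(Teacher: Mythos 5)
Your proposal is correct and follows essentially the same route as the paper: rewrite $\mathcal{H}_{N(u)}(0)$ using the invariance of $\dt/t$ under $t\mapsto u^2t$, split at $t=1$, apply Proposition \ref{4.2} to the long-time piece and Corollary \ref{Cor2.3} to the short-time piece, and reassemble the non-corner terms via (\ref{cont_inter1}). The only nitpick is the phrase that the leftover terms ``reorganize into exactly the right-hand side of (\ref{cont_inter1})'': they equal the \emph{negative} of that right-hand side, which is what yields $-\zeta'_{K_d(A)}(0)$, as your final statement correctly records.
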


In summary, we have proved the following theorem which is the main result of this section:

\begin{theorem}\label{Th1.1}
Let $\Delta_{L_{d}(N)}$ be the discrete Laplacian on the $d$-dimensional discrete hypercube $L_{d}(N)$ with Dirichlet boundary condition.Let $\Delta_{K_{d}(A)}$ be the Laplacian the hypercube $K_{d}(A)$ with Dirichlet boundary condition. Then,
\begin{equation}
\begin{aligned}
 \log \det\Delta_{L_{d}(N)}&=\sum^d_{i=1}V^{d,N}_{i}\mathcal{L}^d_{i}(0)-\frac{(-1)^d}{2^d}\log(u^2)+\log {\det}_{\zeta}\Delta_{K_{d}(A)}\\
 &+\frac{(-1)^d}{2^d}\sum^d_{i=1}\log(4i)\dbinom{d}{i}+o(1)
 \end{aligned}
\end{equation}
as $u\rightarrow\infty$
where
$$
\mathcal{L}^d_{i}(s)=\left\{
\begin{aligned}
&-\int ^{\infty}_0(e^{-s^2t}e^{-2dt}(I_0(2t))^d-e^{-t})\frac{\dt}{t}~~~&i=d,\\
&-\int ^{\infty}_0((-1-e^{-4t})^{d-i} e^{-s^2t}e^{-2it}(I_0(2t))^i-(-2)^{d-i}e^{-t})\frac{\dt}{t}~~~&0<i<d,
\end{aligned}\right.
$$ \text{and} $\log {\det}_{\zeta}\Delta_{K_{d}(A)}$ is the zeta regularized determinant.
\end{theorem}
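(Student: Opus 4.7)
The statement is essentially an immediate consequence of the material already assembled in this subsection, so my plan is organizational: bring together three ingredients.

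First, I would invoke Theorem \ref{T3.1} evaluated at $s=0$, which yields the factorization
$$\log\det\Delta_{L_{d}(N)} = \sum_{\lambda\in\Lambda_{L_{d}(N)}}\log\lambda = \sum_{i=1}^{d} V^{d,N}_{i}\,\mathcal{L}^{d}_{i}(0) + \mathcal{H}_{N}(0),$$
valid because $0\notin\Lambda_{L_d(N)}$ for Dirichlet boundary conditions, and because the $L^1$ integrability at both endpoints of the integrands defining $\mathcal{L}^d_i(0)$ and $\mathcal{H}_N(0)$ was verified in the proof of Theorem \ref{T3.1}. This cleanly isolates the $u$-dependence of $\log\det\Delta_{L_d(N)}$: the explicit factors $V^{d,N}_i$ carry the volume/face growth, while $\mathcal{L}^d_i(0)$ are $u$-independent universal integrals, so all the remaining asymptotic analysis is concentrated in $\mathcal{H}_N(0)$.

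Second, I would apply the proposition immediately preceding the theorem, which established
$$\mathcal{H}_{N(u)}(0) = -\frac{(-1)^d}{2^d}\log(u^2) - \zeta'_{K_d(A)}(0) + \frac{(-1)^d}{2^d}\sum_{i=1}^{d}\log(4i)\dbinom{d}{i} + o(1)$$
as $u\to\infty$. The derivation of this formula is the technical heart of the section: it proceeds by splitting the defining integral of $\mathcal{H}_N(0)$ at $t=1$, handling the $[1,\infty)$ piece via Proposition \ref{4.2} (which rests on the pointwise convergence $\Theta_{L_d(N)}(u^2 t)\to\theta_{K_d(A)}(t)$ of Proposition \ref{4.1}), handling the $(0,1]$ piece via Corollary \ref{Cor2.3} and Proposition \ref{4.4} (which is where the combinatorial sum $\sum_i\log(4i)\dbinom{d}{i}$ is generated by the corner factor $(-1-e^{-4u^2t})^d$), and finally identifying the continuum contribution by comparison with the Mellin representation (\ref{cont_inter1}) of $\zeta'_{K_d(A)}(0)$.

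Third, I would substitute the asymptotic for $\mathcal{H}_N(0)$ into the factorization from Theorem \ref{T3.1} and invoke the definition $\log\det_{\zeta}\Delta_{K_d(A)} = -\zeta'_{K_d(A)}(0)$. The identity (\ref{mainteq1}) then drops out directly. The only point requiring care is the bookkeeping of $\Gamma'(1)$ terms, which appear on the discrete side through Proposition \ref{4.4} and on the continuum side through (\ref{cont_inter1}); these must cancel exactly against each other, and the apparent main obstacle — the real analytic work — was already carried out in Propositions \ref{4.2}, \ref{4.3}, \ref{4.4} and Corollary \ref{Cor2.3}, so that the proof of Theorem \ref{Th1.1} itself reduces to this final assembly.
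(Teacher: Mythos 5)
Your proposal is correct and follows exactly the paper's own route: Theorem \ref{T3.1} at $s=0$, the asymptotic expansion of $\mathcal{H}_{N(u)}(0)$ from the preceding proposition (itself assembled from Propositions \ref{4.2}--\ref{4.4} and Corollary \ref{Cor2.3} together with the Mellin representation (\ref{cont_inter1})), and the identification $\log\det_{\zeta}\Delta_{K_d(A)}=-\zeta'_{K_d(A)}(0)$. The bookkeeping of the $\Gamma'(1)$ terms you flag is handled in the paper precisely as you describe, by absorbing them into the expression recognized as $\zeta'_{K_d(A)}(0)$.
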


\begin{example} Consider the two dimensional rectangle with size $K_{2}(a_1,a_2)$ and $(n_1,n_2)=(na_1,na_2)$, then 
\begin{eqnarray*}
&\log \det\Delta_{L_{2}((n_1,n_2)}=\frac{4G}{\pi}a_1a_2n^2-\log(1+\sqrt{2})(a_1n+a_2n)-1/2\log n\\
&+\log {\det}_{\zeta}\Delta_{K_{2}((a_1,a_2)}-\frac{1}{4}\log(2)
\end{eqnarray*}
where $G$ is the Euler's constant. This result agrees with the known result in \cite{duplanticr1988exact}.
\end{example}

\begin{remark}
In \cite{louis2017asymptotics}, similar problem for the free boundary condition is studied following a different approach than in this paper. Our consideration for the Dirichlet boundary condition is inspired by the desire to understand the partition function on a manifold with boundary. 
\end{remark}
 

\section{Hadamard partie finie regularization on hypercube}
In this section, we modify the techniques developed in \cite{vertman2015regularized} to obtain a relation between the regularized limit of the log-determinant of discrete Laplacian on $L_{d}(N)$ and the log of the zeta regularized determinant of the Laplacian on $K_{d}(A)$ for the Dirichlet boundary condition. Roughly speaking, it goes as follows. We first show that the resolvent trace of the discrete Laplacian with free boundary condition admits a polyhomogeneous expansion, and then calculate the regularized limit. Next, we use the relation between of the spectrum of discrete Laplacian on hypercube with Dirichlet and free boundary condition to relate the corresponding resolvent trace and calculate the regularized limit of the former. This will allow us to establish the relation between of the zeta regularized determinant and regularized limit of the log-determinant of discrete Laplacian for the Dirichlet boundary condition.

\subsection{Regularized limit and regularized integral}
We first recall few definitions which are taken from \cite{vertman2015regularized}. 

Let $E$ be an index set which is a subset of $\mathbb{C}\times \mathbb{N}$ with the property that
\begin{equation}
\{(\alpha,k)\in E| Re(\alpha)\geq s\} \text{ is a finite set for any $s\in\mathbb{R}$.}
\end{equation}

\begin{definition}
A function $f\in C^\infty(\mathbb{R}_+,\mathbb{C})$ is a polyhomogeneous function with respect to $x\rightarrow \infty$, if there exists an index set $E$, such that $f$ has an asymptotic expansion of the form,
\begin{equation}\label{expansion}
f\sim \sum_{(\alpha,k)\in E}a_{\alpha,k} x^{\alpha}\log^k(x)\qquad as~x\rightarrow \infty
\end{equation}
where $a_{\alpha,k}\in\mathbb{C}$, and $\mathbb{R}_+=(0,\infty)$.
\end{definition}
We can also define the polyhomogeneous function with respect to $x\rightarrow 0$. In this case, we choose an index set $\tilde{E}\subset\mathbb{C}\times\mathbb{N}$ with the property,
\begin{equation}
\{(\alpha,k)\in \tilde{E}| Re(\alpha)\leq s\} \text{ is a finite set for any $s\in\mathbb{R}$}
\end{equation} and demand the asymptotic expansion of the form (\ref{expansion}) for $x\to 0$.

\begin{definition}
Let $f$ be a function which is a polyhomogeneous with respect to $x\to\infty$ the \emph{regularized limit} of $f(x)$ is defined by,
\begin{align*}
\LIM_{x\to \infty} f(x) := a_{00}.
\end{align*}
Similarly, if $f$ is a polyhomogeneuos with respect to $x\to0$, then the \emph{regularized limit} of $f(x)$ is defined by
\begin{align*}
\LIM_{x\to 0} f(x) := a_{00}.
\end{align*}
\end{definition}

Using this notion of regularized limit we can define the regularized integral $\avint$ of a function $f$ as follows:
\begin{eqnarray}
\avint^{\infty}_1f(x)\, \d x:=\LIM_{R\rightarrow\infty}\int^R_1f(x)\,\d x \hskip2mm\text{and}\hskip2mm \avint^{1}_{0}f(x)\, \d x:=\LIM_{\varepsilon\rightarrow 0}\int_{\varepsilon}^1f(x)\, \d x.
\end{eqnarray}
If both regularized integrals of $f$ exists, then we define 
$$\avint_{0}^{\infty}f(x)\, \d x= \avint^{\infty}_1f(x)\, \d x+\avint^{1}_{0}f(x)\, \d x. $$

We will need the following definition as well.
\begin{definition}
A function $f\in C^\infty(\mathbb{R}_+^2,\mathbb{C})$ is said to admit a partial polyhomogeneous expansion, if there exists index sets $E,E'$ and a constant $N\in\mathbb{Z}$, such that,
\begin{equation}
f(z,n)\sim \sum_{l\geq N}f_{(l)}(z,n),
\end{equation}
where each is homogeneous of degree $l$ jointly in $(z,n)$. With the asymptotic expansion of $f(z,1)$ and $f(1,n)$ are both polyhomogeneous functions as $z\rightarrow \infty$ or $n\rightarrow \infty$, with the index sets $E,E'$. 
\end{definition}

\subsection{Resolvent trace and polyhomogeneous expansion}

Let $\Delta_{L_d(N)}^F$ denote the discrete Laplacian on the discrete hypercube $L_{d}(n,\dots,n)$ with the free boundary condition. Later we will consider Dirichlet boundary condition. Let $\tilde{\Delta}_{L_d(N)}^F$ denote the operator $n^2\Delta_{L_d(N)}^F$. We will call $\Tr(\tilde{\Delta}_{L_d(N)}^F+z^2)^{-d}$ the \textit{resolvent trace} of $\tilde{\Delta}_{L_d(N)}^F$. Let us note that the eigenvalues of $\tilde{\Delta}_{L_d(N)}^F$ are given by 
$$
4n^2\sum_{i=1}^d\sin^2\left(\frac{\pi k_i}{2n}\right)
$$
 where $k_i\in\{1,2,\dots,n-1\}.$ For $x=(x_1,...,x_m)$, 
$$
\omega(n,x):=4n^2\sum_{i=1}^d\sin^2\left(\frac{\pi x_i}{2n}\right).
$$
Given $k$-tuple $J=(j_1,...,j_k)$ of distinct integers with $k\leq d$, we define 
$$
\{x_J=n\}:=\{x\in\mathbb{N}^d_0\cap[0,n]^d|x_{j_i}=...=x_{j_k}=n\}
$$
With this notation at hand, the resolvent trace can be written as follows:
$$
\Tr(\tilde{\Delta}_{L_d(N)}^F+z^2)^{-d}=\sum^d_{k=0}(-1)^k\sum_{|J|=k}\sum_{\{x_J=n\}}(\omega(n,x)+z^2)^{-d}.
$$
Define $S(z,n)$ by 
$$
S(z,n) = \sum^n_{x_1=0}...\sum^n_{x_d=0} \left(\omega(n,x_1,...,x_d) + z^2\right)^{-d}.
$$
Using the Euler Maclaurin summation formula iteratively, we can rewrite $S(z,n)$ as follows.
\begin{lemma} For any $M\in \mathbb{N}$, we have the following identity.
$$\label{S}
S(z,n) = \sum_{\beta\in \{1,2,3,4\}^d} P_{\beta_1,1}
\circ \cdots \circ P_{\beta_d,d} \left(\omega(n,x_1,...,x_d) + z^2\right)^{-d},
$$
where each $P_{\beta_j,j}$ acts in the $x_j$-variable on
$u\in C^\infty[0,\infty)$ by
$$
P_{\beta_j,j} u := \left\{
\begin{aligned}
&\int_0^n u(x_j) \d x_j, &\ \textup{if} \ \beta_j=1, \\
&\sum_{k=1}^M \frac{B_{2k}}{(2k)!} \left(\partial^{(2k-1)}_{x_j}|_{x_j=n} - \partial^{(2k-1)}_{x_j}|_{x_j=0}\right)u, &\ \textup{if} \ \beta_j=2, \\
&\frac{1}{(2M+1)!} \int_0^n B_{2M+1} (x_j - [x_j]) \partial^{(2M+1)}_{x_j} u(x_j) \d x_j, &\ \textup{if} \ \beta_j=3, \\
&\frac{1}{2}\left(u(x_j=n) + u(x_j=0)\right), &\ \textup{if} \ \beta_j=4.
\end{aligned} \right.
$$
Here, $B_i(x)$ is the $i$-th Bernoulli polynomial and $B_i$ is the $i$-th Bernoulli number.
\end{lemma}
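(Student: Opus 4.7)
The plan is to derive this identity by iteratively applying the classical one-dimensional Euler-Maclaurin summation formula in each of the $d$ variables. Recall that for any smooth function $u$ on $[0,n]$ and any $M\in\mathbb{N}$, the Euler-Maclaurin formula reads
\begin{equation*}
\sum_{x=0}^{n} u(x) = \int_0^n u(x)\,\d x + \tfrac{1}{2}(u(0)+u(n)) + \sum_{k=1}^{M}\tfrac{B_{2k}}{(2k)!}\bigl(u^{(2k-1)}(n)-u^{(2k-1)}(0)\bigr) + \tfrac{1}{(2M+1)!}\int_0^n B_{2M+1}(x-[x])\, u^{(2M+1)}(x)\,\d x,
\end{equation*}
which, when read off variable-by-variable in $x_j$, is exactly $(P_{1,j}+P_{2,j}+P_{3,j}+P_{4,j})u$ for the four operators in the statement.

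Fix $z$ with $\mathrm{Re}(z^2)>0$, so that $f(x_1,\dots,x_d):=(\omega(n,x)+z^2)^{-d}$ is smooth in each variable on $[0,n]$ with bounded mixed partial derivatives on the compact cube $[0,n]^d$. I would first apply the one-variable Euler-Maclaurin formula to the innermost sum $\sum_{x_d=0}^n f$ with the other variables frozen as parameters, obtaining the four-term decomposition $\sum_{x_d=0}^n f = \sum_{\beta_d\in\{1,2,3,4\}} P_{\beta_d,d}\, f$. Each summand $P_{\beta_d,d}\, f$ is still smooth in $(x_1,\dots,x_{d-1})$: integration in $x_d$ against $\d x_d$ or against $B_{2M+1}(x_d-[x_d])$ preserves smoothness in the other variables by differentiation under the integral sign, while evaluation at a fixed $x_d\in\{0,n\}$ or application of $\partial_{x_d}^{(2k-1)}$ at such points preserves smoothness trivially. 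Iterating the same procedure on $x_{d-1}, x_{d-2},\dots,x_1$ in turn produces $4^d$ terms indexed by $\beta\in\{1,2,3,4\}^d$; since operators acting on different variables commute, the resulting composition may be arranged in any order, in particular as $P_{\beta_1,1}\circ\cdots\circ P_{\beta_d,d}$.

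The main obstacle is purely one of bookkeeping: at each step of the iteration one must justify the interchange of sums, integrals, derivatives, and endpoint evaluations. These are all standard consequences of Fubini's theorem and differentiation under the integral sign, valid because $f$ and its mixed partial derivatives are continuous and bounded on $[0,n]^d$ (the denominator $\omega(n,x)+z^2$ is uniformly bounded away from zero on the cube thanks to $\mathrm{Re}(z^2)>0$). Beyond these standard analytic tools, the proof adds no new ingredient to the classical one-variable Euler-Maclaurin formula.
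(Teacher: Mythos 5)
Your proposal is correct and is exactly the argument the paper intends: the identity is obtained by applying the one-variable Euler--Maclaurin formula (whose four terms are precisely $P_{1,j},P_{2,j},P_{3,j},P_{4,j}$) iteratively in each coordinate of the smooth function $(\omega(n,x)+z^2)^{-d}$, producing the $4^d$ terms indexed by $\beta\in\{1,2,3,4\}^d$. The smoothness and Fubini-type justifications you note are the only analytic inputs, and they hold since the denominator is bounded away from zero on $[0,n]^d$.
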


We note that the identity in the lemma depends on the choice of $M\in\mathbb{N}$ which we will always choose to be sufficiently large. This lemma plays an important role in the analysis of $S(z,n)$. With the help of this lemma and following the ideas in \cite{vertman2015regularized}, we will show that $S(z,n)$ admits a polyhomogeneous expansion. More precisely, we have the following proposition.

\begin{proposition}\label{1.2.2.15}
The function $S(n,z)$ admits a partial polyhomogeneous expansion
$$S(n,z) = \sum_{j=0}^{d} h'_{-d-j}(z,n) + H'(z,n),$$
where each $h'_{-d-j}\in C^\infty(\mathbb{R}_+^2)$ is a partial polyhomogeneous function, and it is homogeneous of order $(-d-j)$ jointly in $(z,n)$.The remainder term, which depends on choice of $M$ in the Lemma \ref{S}, satisfies $H'_N(z,n) = O(z^{-2d-2})$, as $z\to \infty$,
uniformly in $n>0$. Furthermore,
$$
h^{'}_{-2d}(z,n) = \frac{1}{2^d}\sum^d_{k=0} \dbinom{d}{k}\frac{1}{(z^2+4kn^2)^d}.
$$
\begin{proof}
The proof of the first part of the proposition is essentially the same as of Proposition 3.2 in \cite{vertman2015regularized}. The only difference is the computation of the term $h^{'}_{-2d}(z,n)$ which is homogeneous of degree -2d. Note that 
$$P_{4,j}(4n^2\sin^2{\frac{\pi x_j}{n}})=\frac{1}{2}(4n^2+0)).$$ 
Using this, an inductive calculation will give 
$$\begin{aligned}
h^{'}_{-2d}(z,n)
=&P_{4,1}\circ...\circ P_{4,d}(\omega(n,x)+z^2)^{-d})\\
=& \frac{1}{2^d}\sum^d_{k=0} \dbinom{d}{k}\frac{1}{(z^2+4kn^2)^d}.
\end{aligned}
$$
\end{proof}
\end{proposition}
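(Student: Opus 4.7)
I would use the Euler--Maclaurin decomposition of $S(z,n)$ from the preceding lemma and classify each contribution by its joint homogeneity degree in $(z,n)$. Under the rescaling $(z,n)\mapsto(\lambda z,\lambda n)$, after the substitution $y_j=x_j/n$ the factor $(\omega(n,x)+z^2)^{-d}$ is homogeneous of degree $-2d$, while the operators $P_{\beta_j,j}$ scale differently: $P_{1,j}$ (integration in $x_j$) contributes a factor of order $\lambda$, $P_{2,j}$ and $P_{3,j}$ involve derivatives $\partial_{x_j}^{(k)}$ which after rescaling contribute $\lambda^{-k}$, and $P_{4,j}$ (corner evaluation) is scale-neutral. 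Hence each summand in the Euler--Maclaurin expansion is jointly homogeneous of some integer degree, and collecting those of degree $-d-j$ for $j=0,\dots,d$ yields the claimed decomposition. For the polyhomogeneous structure and the uniform bound $H'(z,n)=O(z^{-2d-2})$ I would invoke Proposition 3.2 of~\cite{vertman2015regularized} essentially verbatim; the only adaptation is that here we must retain the corner contributions coming from the multi-index $\beta=(4,\dots,4)$, which do not appear in Vertman's torus setup.

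The new content is the explicit computation of $h'_{-2d}$. The scaling count shows that the only $\beta\in\{1,2,3,4\}^d$ producing a term of joint degree exactly $-2d$ is $\beta=(4,4,\dots,4)$: any $P_{1,j}$ contributes a positive power of $\lambda$, and any $P_{2,j}$ or $P_{3,j}$ contributes a strictly negative power of $\lambda$ from the derivatives, so these cannot cancel to zero. Therefore $h'_{-2d}(z,n)=P_{4,1}\circ\cdots\circ P_{4,d}\bigl(\omega(n,x)+z^2\bigr)^{-d}$, and iterating $P_{4,j}u=\tfrac{1}{2}\bigl(u|_{x_j=n}+u|_{x_j=0}\bigr)$ gives
\[
h'_{-2d}(z,n)=\frac{1}{2^d}\sum_{x\in\{0,n\}^d}\bigl(\omega(n,x)+z^2\bigr)^{-d}.
\]
Since $\sin^2(\pi\cdot 0/2n)=0$ and $\sin^2(\pi\cdot n/2n)=1$, a corner $x\in\{0,n\}^d$ with exactly $k$ coordinates equal to $n$ satisfies $\omega(n,x)=4kn^2$, and there are $\binom{d}{k}$ such corners. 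Grouping by $k$ yields
\[
h'_{-2d}(z,n)=\frac{1}{2^d}\sum_{k=0}^d\binom{d}{k}\frac{1}{(z^2+4kn^2)^d},
\]
which is manifestly jointly homogeneous of degree $-2d$ in $(z,n)$.

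\textbf{Main obstacle.} The nontrivial work is not the corner computation above but the bookkeeping behind the polyhomogeneous expansion and the uniform remainder estimate---specifically, verifying that once $M$ is taken sufficiently large the $P_{3,j}$-remainders contribute $O(z^{-2d-2})$ uniformly in $n>0$ after all $(2M+1)$st derivatives have acted on $(\omega(n,x)+z^2)^{-d}$. This analysis is carried out in~\cite{vertman2015regularized}, and transferring it to the present setting is routine since the algebraic structure of $(\omega(n,x)+z^2)^{-d}$ is unchanged; with that in hand, the explicit evaluation of the $\beta=(4,\dots,4)$ corner term above completes the proof.
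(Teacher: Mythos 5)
Your proposal follows essentially the same route as the paper: defer the polyhomogeneous structure and the uniform remainder bound to Proposition 3.2 of \cite{vertman2015regularized}, and identify $h'_{-2d}$ with the pure corner term $P_{4,1}\circ\cdots\circ P_{4,d}(\omega(n,x)+z^2)^{-d}$, which you then evaluate correctly by grouping the $2^d$ corners according to the number $k$ of coordinates equal to $n$ (using $\omega(n,x)=4kn^2$ at such a corner), exactly as in the paper.

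One step of your justification, however, does not hold as stated. You argue that no mixed multi-index $\beta$ can land in joint degree $-2d$ because positive powers from $P_{1,j}$ and negative powers from $P_{2,j}$ ``cannot cancel to zero''; but they can: a single integration contributes $+1$ while the $k=1$ Bernoulli boundary term of a $P_{2,j'}$ involves a first-order derivative and contributes $-1$, so e.g.\ $\beta=(1,2,4,\dots,4)$ has net joint degree exactly $-2d$ by pure scaling count. The reason such terms do not pollute $h'_{-2d}$ is not degree counting but vanishing: the $x_j$-dependence of $\omega$ is $2n^2\bigl(1-\cos(\pi x_j/n)\bigr)$, whose odd-order $x_j$-derivatives are proportional to $\sin(\pi x_j/n)$ and hence vanish at $x_j\in\{0,n\}$; by Fa\`a di Bruno every term of $\partial^{2k-1}_{x_j}(\omega+z^2)^{-d}$ contains at least one odd-order derivative of $\omega$, so all $P_{2,j}$ contributions are identically zero. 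With this observation (or by citing the corresponding step in Vertman's analysis) the identification $h'_{-2d}=P_{4,1}\circ\cdots\circ P_{4,d}(\omega+z^2)^{-d}$ is rigorous and the rest of your computation goes through unchanged.
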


Next, we analyze the behavior of the resolvent trace which will be the key to prove the main result of this section.

\begin{proposition}
The resolvent trace admits a partial polyhomogeneous expansion
$$\Tr(\tilde{\Delta}_{L_d(N)}^F+z^2)^{-d} = \sum_{j=0}^{d} h_{-d-j}(z,n) + H(z,n),$$
where each $h_{-d-j}\in C^\infty(\mathbb{R}_+^2)$ is homogeneous of order $(-d-j)$ jointly in $(z,n)$,
$h_{-d-j}(z, 1)$ and $h_{-d-j}(1, n)$ admit an asymptotic expansion of the form
\eqref{expansion} as $z, n\to \infty$, respectively.
The remainder term satisfies $H_N(z,n) = O(z^{-2d-2})$, as $z\to \infty$,
uniformly in $n>0$. Moreover, 
$$
h_{-2d}(z,n) = \sum_j(-1)^j\dbinom{d}{j}\frac{1}{2^d}\frac{1}{(z^2+4n^2j)^d}.\label{eq.2.15}
$$
\begin{proof}
The proof of this proposition follows the arguments of Theorem 3.3 of \cite{vertman2015regularized} except for the expression of 
$h_{-2d}(z,n)$. Using the argument in Proposition \ref{1.2.2.15}, we have 
$$
\begin{aligned}
h_{-2d}(z,n)=&\sum^d_{k=0}(-1)^k\sum_{|J|=k}\frac{1}{2^{d-k}}\sum^{d-k}_{j=0} \dbinom{d-k}{j}\frac{1}{(z^2+4(j+k)n^2)^d}\\
=&\sum^d_{k=0}(-1)^k\dbinom{d}{k}\frac{1}{2^{d-k}}\sum^{d-k}_{j=0} \dbinom{d-k}{j}\frac{1}{(z^2+4(j+k)n^2)^d}\\
=&\sum^d_{k=0}(-1)^k\dbinom{d}{k}\frac{1}{2^{d-k}}\sum^{d}_{j=k} \dbinom{d-k}{j-k}\frac{1}{(z^2+4n^2j)^d}\\
\end{aligned}
$$
 rearranging the order of summation
$$
\begin{aligned}
=&\sum^{d}_{j=0}\sum^j_{k=0}(-1)^k\dbinom{d}{k}\frac{1}{2^{d-k}} \dbinom{d-k}{j-k}\frac{1}{(z^2+4n^2j)^d}\\
=&\sum^{d}_{j=0}\frac{1}{2^d}\dbinom{d}{j}\sum^j_{k=0}(-2)^k\dbinom{j}{k}\frac{1}{(z^2+4n^2j)^d}\\
=&\sum^d_{j=0}(-1)^j\dbinom{d}{j}\frac{1}{2^d}\frac{1}{(z^2+4n^2j)^d}
\end{aligned}
$$
\end{proof}

\end{proposition}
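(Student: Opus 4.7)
The plan is to reduce the computation to the scalar sum $S(z,n)$ already analyzed in Proposition \ref{1.2.2.15}, via the inclusion-exclusion formula
$$\Tr(\tilde{\Delta}_{L_d(N)}^F+z^2)^{-d}=\sum^d_{k=0}(-1)^k\sum_{|J|=k}\sum_{\{x_J=n\}}(\omega(n,x)+z^2)^{-d}$$
recorded above the statement. For a fixed multi-index $J$ of size $k$, the constraint $x_j=n$ for $j\in J$ contributes $4kn^2$ to $\omega(n,x)$ (since $\sin^2(\pi n/(2n))=1$), so each inner sum is nothing but a $(d-k)$-dimensional version of $S(z,n)$ evaluated at the shifted parameter $z^2+4kn^2$. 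Consequently, applying the lower-dimensional analogue of Proposition \ref{1.2.2.15} to each summand yields a partial polyhomogeneous expansion with the same uniform remainder bound $O(z^{-2d-2})$, and the desired global decomposition $\sum_{j=0}^{d} h_{-d-j}(z,n)+H(z,n)$ follows by a weighted sum over $k$ and $J$ (with $\binom{d}{k}$ identical contributions for each $k$, by symmetry).

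For the explicit formula for $h_{-2d}$, I would extract from each sub-expansion the $(-2d)$-homogeneous piece provided by Proposition \ref{1.2.2.15}, namely $2^{-(d-k)}\sum_{j=0}^{d-k}\binom{d-k}{j}(z^2+4(j+k)n^2)^{-d}$, and combine them to obtain
$$h_{-2d}(z,n)=\sum_{k=0}^d(-1)^k\binom{d}{k}\frac{1}{2^{d-k}}\sum_{j=0}^{d-k}\binom{d-k}{j}\frac{1}{(z^2+4(j+k)n^2)^d}.$$
Reindexing with $m=j+k$ and swapping the order of summation, then invoking the standard identity $\binom{d}{k}\binom{d-k}{m-k}=\binom{d}{m}\binom{m}{k}$ together with the binomial theorem $\sum_{k=0}^m(-2)^k\binom{m}{k}=(-1)^m$, this collapses to $\frac{1}{2^d}\sum_{m=0}^d(-1)^m\binom{d}{m}(z^2+4n^2m)^{-d}$, matching the claim.

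The main obstacle is not analytical but a bookkeeping one: verifying that Proposition \ref{1.2.2.15}, which is stated for the full $d$-dimensional sum $S(z,n)$, transfers verbatim to the lower-dimensional sub-sums with shifted spectral parameter. This requires checking that the joint $(z,n)$-homogeneity of each term in the Euler--Maclaurin expansion survives the substitution $z^2\mapsto z^2+4kn^2$, which it does because $z^2+4kn^2$ is itself jointly homogeneous of degree $2$ in $(z,n)$, so homogeneity propagates through each $P_{\beta_i,i}$ operator; and that the uniform-in-$n$ remainder estimate is preserved under the shift, which is routine since the shift only increases the denominator $z^2+\omega$. Once this transfer is in hand, assembling the pieces is purely a matter of binomial algebra, and the homogeneity of $h_{-d-j}$ together with the stated asymptotic form of $h_{-d-j}(z,1)$ and $h_{-d-j}(1,n)$ is inherited directly from Proposition \ref{1.2.2.15}.
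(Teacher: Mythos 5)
Your proposal is correct and follows essentially the same route as the paper: the paper likewise applies the inclusion--exclusion decomposition of the resolvent trace, invokes the argument of Proposition \ref{1.2.2.15} on each sub-sum (with the $4kn^2$ shift from the fixed coordinates) to get $h_{-2d}(z,n)=\sum_{k=0}^d(-1)^k\binom{d}{k}2^{-(d-k)}\sum_{j=0}^{d-k}\binom{d-k}{j}(z^2+4(j+k)n^2)^{-d}$, and collapses it with the same reindexing, the identity $\binom{d}{k}\binom{d-k}{j-k}=\binom{d}{j}\binom{j}{k}$, and $\sum_k\binom{j}{k}(-2)^k=(-1)^j$. The only cosmetic difference is that you spell out the bookkeeping (joint homogeneity of $z^2+4kn^2$ and uniformity of the remainder under the shift) which the paper leaves implicit by citing Theorem 3.3 of \cite{vertman2015regularized}.
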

The next lemma will be useful for computation the regularized integral of $h_{-2d}(z,1)$. 

\begin{lemma}\label{regularized integral of h} For any positive integer $d$ and positive $\lambda$ the following holds:
$$
-2\avint^{\infty}_{0}\frac{z^{2d-1}}{(z^2+\lambda)^d}\d z=\log\lambda.
$$
\begin{proof} A simple computation shows 
$$-2\avint^{\infty}_{0}\frac{z}{(z^2+\lambda)}\d z=\log\lambda$$ and now the lemma follows from the following identity from section 1.3 of \cite{vertman2015regularized}:
$$
-2\avint^{\infty}_{0}z^{2d-1}\frac{1}{(z^2+\lambda)^d}\d z=-2\avint^{\infty}_{0}\frac{z}{(z^2+\lambda)}\d z.
$$

\end{proof}
\end{lemma}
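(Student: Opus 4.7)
The plan is to verify the $d=1$ case directly from the definitions of $\avint_0^1$ and $\avint_1^\infty$, and then use the reduction identity from \cite{vertman2015regularized} quoted in the statement to reduce the general $d$ case to $d=1$.

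First I would compute $-2\avint_0^\infty z/(z^2+\lambda)\,\d z$ explicitly. An antiderivative is $\tfrac{1}{2}\log(z^2+\lambda)$, so for $0<\varepsilon<1<R$,
\begin{align*}
\int_\varepsilon^1 \frac{z}{z^2+\lambda}\,\d z &= \tfrac{1}{2}\log(1+\lambda)-\tfrac{1}{2}\log(\varepsilon^2+\lambda),\\
\int_1^R \frac{z}{z^2+\lambda}\,\d z &= \tfrac{1}{2}\log(R^2+\lambda)-\tfrac{1}{2}\log(1+\lambda).
\end{align*}
As $\varepsilon\to 0$, the first expression is already smooth in $\varepsilon$ (no logarithmic or negative-power blow-up), so its regularized limit is simply its value at $\varepsilon=0$, namely $\tfrac{1}{2}\log((1+\lambda)/\lambda)$. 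As $R\to\infty$, we have the polyhomogeneous expansion $\tfrac{1}{2}\log(R^2+\lambda)=\log R+\tfrac{1}{2}\log(1+\lambda/R^2)=\log R+O(R^{-2})$, whose constant term $a_{00}$ vanishes, so the regularized limit of the second expression is $-\tfrac{1}{2}\log(1+\lambda)$. Summing gives $\avint_0^\infty z/(z^2+\lambda)\,\d z=-\tfrac{1}{2}\log\lambda$, and multiplying by $-2$ yields the claim for $d=1$.

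For general $d$, I would then invoke the identity
\[
-2\avint^{\infty}_{0}\frac{z^{2d-1}}{(z^2+\lambda)^d}\,\d z \;=\; -2\avint^{\infty}_{0}\frac{z}{z^2+\lambda}\,\d z
\]
from Section 1.3 of \cite{vertman2015regularized}, which combined with the previous step gives $\log\lambda$. Thus the two calculations (the direct one for $d=1$ and the quoted reduction identity) together complete the proof.

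The main subtlety is simply the careful bookkeeping in the $d=1$ case: making sure that both $\avint_0^1$ and $\avint_1^\infty$ really exist as regularized limits, i.e.\ that each partial integral admits a polyhomogeneous expansion in its boundary parameter so that the constant coefficient $a_{00}$ is well-defined. Once that is checked — and it is immediate from the two closed-form antiderivatives above — the rest of the proof is mechanical.
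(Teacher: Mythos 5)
Your proposal is correct and follows the same route as the paper: establish the $d=1$ case by a direct computation of the regularized integral and then invoke the reduction identity from Section 1.3 of \cite{vertman2015regularized} to handle general $d$. The only difference is that you spell out the ``simple computation'' for $d=1$ explicitly, which the paper leaves to the reader.
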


We recall from \cite{louis2017asymptotics} that $\log\det\tilde{\Delta}_{L_d(N)}^F$ admits an asymptotic expansion as $n\to\infty$. Let us denote the constant term in this asymptotic expansion by $\log \det\bar{\Delta}_{L_d(1,\dots,1)}^F$. Now, we are ready to state the main result of this section.

\begin{theorem}\label{th3.6}
The regularized limit of $\log\det\tilde{\Delta}_{L_d(N)}^F$ as $n\rightarrow\infty$ exists. Moreover, we have
$$
\log \det\bar{\Delta}_{L_d(1,\dots,1)}^F=\LIM_{n\rightarrow\infty}\log \det\tilde{\Delta}_{L_d(N)}^F-\frac{1}{2^d}\sum^d_{j=1}\log(4j)(-1)^j\dbinom{d}{j}.
$$
\end{theorem}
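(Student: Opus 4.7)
The plan is to combine the regularized-integral representation of $\log\lambda$ in Lemma~\ref{regularized integral of h} with the polyhomogeneous expansion of the resolvent trace established in the previous proposition, and to extract the LIM in $n$ by tracking which pieces contribute an $n$-independent constant.

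First, I would start by writing $\log\det\tilde{\Delta}^F_{L_d(N)}$ as a regularized integral of the resolvent trace. Since $\tilde{\Delta}^F_{L_d(N)}$ acts on a finite-dimensional space with positive eigenvalues, applying Lemma~\ref{regularized integral of h} eigenvalue-by-eigenvalue and exchanging the finite sum with the regularized $z$-integral yields
$$\log\det\tilde{\Delta}^F_{L_d(N)} = -2\avint_0^\infty z^{2d-1}\Tr\bigl(\tilde{\Delta}^F_{L_d(N)}+z^2\bigr)^{-d}\,\d z.$$
Substituting the decomposition $\Tr(\tilde{\Delta}^F_{L_d(N)}+z^2)^{-d} = \sum_{j=0}^d h_{-d-j}(z,n) + H(z,n)$ splits the right-hand side into $(d+1)$ homogeneous contributions plus a remainder.

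Next, by joint homogeneity, $z^{2d-1} h_{-d-j}(z,n)$ has degree $d-j-1$, and the change of variables $z=nw$ gives
$$-2\avint_0^\infty z^{2d-1} h_{-d-j}(z,n)\,\d z = n^{d-j}\left(-2\avint_0^\infty w^{2d-1} h_{-d-j}(w,1)\,\d w\right).$$
For $0\le j<d$ this is a pure positive power of $n$ (with no $\log n$ factor), so it is annihilated by $\LIM_{n\to\infty}$. For $j=d$ the contribution is $n$-independent; using the explicit formula
$$h_{-2d}(z,1) = \frac{1}{2^d}\sum_{k=0}^{d}(-1)^k\dbinom{d}{k}\frac{1}{(z^2+4k)^d},$$
observing that $\avint_0^\infty z^{-1}\d z = 0$ handles the $k=0$ summand, and applying Lemma~\ref{regularized integral of h} for each $k\ge 1$, I obtain
$$-2\avint_0^\infty z^{2d-1}h_{-2d}(z,1)\,\d z = \frac{1}{2^d}\sum_{k=1}^{d}(-1)^k\dbinom{d}{k}\log(4k).$$

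Finally, since the homogeneous pieces with $j<d$ account for positive powers of $n$ and the $j=d$ piece is the explicit constant just computed, any $n$-independent part beyond this must reside in the regularized integral of $H(z,n)$. Matching with the known existence of Louis's asymptotic expansion of $\log\det\tilde{\Delta}^F_{L_d(N)}$ then forces
$$\LIM_{n\to\infty}\Bigl(-2\avint_0^\infty z^{2d-1}H(z,n)\,\d z\Bigr)=\log\det\bar\Delta^F_{L_d(1,\dots,1)},$$
and the stated identity follows after rearrangement. The hard step is this last identification: one must show that the remainder contributes no additional constant beyond Louis's defined constant term. I would handle this by feeding back the already-computed homogeneous contributions into the asymptotic expansion (coefficient-matching in powers of $n$) and isolating the $a_{0,0}$ coefficient, in the spirit of Vertman's treatment on the torus where the analogous step is carried out via pole bookkeeping for the associated zeta function.
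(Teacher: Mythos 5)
Your proposal follows essentially the same route as the paper: the paper's proof of Theorem \ref{th3.6} consists of invoking the argument of Theorem 3.3 of \cite{vertman2015regularized} (regularized-integral representation of the log-determinant through the resolvent trace, the partial polyhomogeneous expansion, and the observation that only the degree $-2d$ homogeneous piece survives in the comparison) and then evaluating $-2\avint^{\infty}_{0}z^{2d-1}h_{-2d}(z,1)\,\d z$ with Lemma \ref{regularized integral of h}; your computation of that constant, including the remark that the $k=0$ summand is handled by $\avint^{\infty}_{0}z^{-1}\,\d z=0$, coincides with the paper's.

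The genuine gap is your final step. From your decomposition, the $a_{0,0}$ coefficient of $\log\det\tilde{\Delta}^F_{L_d(N)}$ is the explicit $h_{-2d}$ constant plus $\LIM_{n\to\infty}\bigl(-2\avint^{\infty}_{0}z^{2d-1}H(z,n)\,\d z\bigr)$, but ``matching with the known existence of Louis's asymptotic expansion'' cannot force this last quantity to equal $\log\det\bar{\Delta}^F_{L_d(1,\dots,1)}$: existence of an expansion says nothing about the value of its constant term, and if one read $\log\det\bar{\Delta}^F$ literally as the constant term of the expansion of $\log\det\tilde{\Delta}^F_{L_d(N)}$ itself, your own bookkeeping would yield $\log\det\bar{\Delta}^F=\LIM_{n\to\infty}\log\det\tilde{\Delta}^F_{L_d(N)}$ with no correction, contradicting the statement. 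The content of the theorem is that $\log\det\bar{\Delta}^F$, which in this paper plays the role of the continuum zeta-regularized free-boundary determinant on the unit cube (cf.\ Proposition \ref{prop3.3}), differs from the regularized limit of the discrete determinants by exactly the $h_{-2d}$ contribution; establishing this is the analytic core of Vertman's Theorem 3.3, which the paper appeals to. Concretely, one must (i) use $H(z,n)=O(z^{-2d-2})$ uniformly in $n$ to exchange $\LIM_{n}$ with the regularized $z$-integral and identify the limit of $H(z,n)$ with the (suitably corrected) continuum resolvent trace, and (ii) relate the regularized integral of that continuum trace to $-\zeta'(0)$, where the heat-trace coefficients and the pole structure absorb the $j<d$ homogeneous pieces and the $\log n$ anomalies; note also that your claim of ``no $\log n$ factor'' after $z=nw$ is not automatic, since a regularized integral is not scale invariant when the integrand carries a $z^{-1}$ term, which is precisely the $j=d$ case (those anomalies are killed by $\LIM$, but they show the $j=d$ piece is not literally $n$-independent). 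Without this continuum-side comparison, or alternatively without importing Louis's identification of the constant term from \cite{louis2017asymptotics}, the final identification remains unproven, so your sketch reproduces the paper's scheme but leaves open exactly the step the paper delegates to \cite{vertman2015regularized}.
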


\begin{proof}
The proof is essentially same as proof Theorem 3.3 in \cite{vertman2015regularized}. More precisely, we have
$$
\log \det\bar{\Delta}_{L_d(1,\dots,1)}^F=\LIM_{n\rightarrow\infty}\log \det\tilde{\Delta}_{L_d(N)}^F+2\avint^{\infty}_{0}z^{2d-1}h_{-2d}(z,1)\d z.
$$
Using the previous lemma, we compute
$$
-2\avint^{\infty}_{0}z^{2d-1}h_{-2d}(z,1)\d z=\frac{1}{2^d}\sum^d_{j=1}\log(4j)(-1)^j\dbinom{d}{j}
$$ which completes the proof.
\end{proof}

Now, we consider the Dirichlet boundary condition. The relationship between the eigenvalues for Dirichlet boundary condition the eigenvalues of the free boundary condition discussed in section 2 allows us to express the regularized limit of log-determinant of Laplacian on discrete hypercubes with Dirichlet boundary condition in terms of free boundary condition. More precisely, we have the following. 
\begin{proposition}\label{prop3.2}
Let $\tilde{\Delta}^D_{L_d(N)}$, and $\tilde{\Delta}^F_{L_d(N)}$ denote the discrete Laplacian on a discrete hypercube with Dirichlet boundary condition and free boundary condition, we have 
$$
\LIM_{n\rightarrow \infty} \log \det\tilde{\Delta}^D_{L_d(N)} = \sum^{d}_{i=1}(-1)^{d-i}\dbinom{d}{i}\LIM_{n\rightarrow \infty}\log \det \tilde{\Delta}^F_{L_d(N)}
$$
\end{proposition}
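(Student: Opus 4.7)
The plan is to reduce the statement to a spectral inclusion--exclusion between the Dirichlet and free-BC spectra on $L_d(N)$, and then pass to the regularized limit using the polyhomogeneous expansion already established in Theorem \ref{th3.6}.

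First, I would write the spectra explicitly. The eigenvalues of $\tilde{\Delta}^D_{L_d(N)}$ are $\omega(n,q)=4n^2\sum_{j=1}^d\sin^2(\pi q_j/(2n))$ for $q_j\in\{1,\dots,n-1\}$, while those of $\tilde{\Delta}^F_{L_d(N)}$ are given by the same formula with each $q_j$ allowed to take the endpoint values as well. Because $\omega$ separates coordinate-wise, and because the endpoint $q_j=n$ contributes an additive constant $4n^2$ to $\omega$ while the endpoint $q_j=0$ contributes $0$, a coordinate-wise inclusion--exclusion writes the Dirichlet spectrum as a signed combination of free-BC spectra living on $i$-dimensional boundary faces (with an overall additive shift coming from the pinned coordinates). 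Translating this spectral identity into a product-of-eigenvalues identity, one arrives schematically at
\[
\log\det\tilde{\Delta}^D_{L_d(N)} \;=\; \sum_{i=1}^{d}(-1)^{d-i}\binom{d}{i}\log\det\tilde{\Delta}^F_{L_i(N)} \;+\; \mathcal{R}(n),
\]
where the remainder $\mathcal{R}(n)$ gathers the logarithms of the eigenvalue shifts produced by coordinates pinned to $q_j=n$.

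Second, I would argue that $\mathcal{R}(n)$ is polyhomogeneous in $n$ and has vanishing constant coefficient, so it is invisible to $\LIM$. This rests on the same Euler--Maclaurin analysis used in the proof of Proposition \ref{1.2.2.15}: pinning $k$ coordinates produces a summand whose leading and subleading $n$-dependence is of the form $n^\alpha\log^\ell n$ with no $(0,0)$ term after the signs $(-1)^{d-i}\binom{d}{i}$ combine. Together with Theorem \ref{th3.6}, which provides a polyhomogeneous expansion, and hence a regularized limit, for each $\log\det\tilde{\Delta}^F_{L_i(N)}$, this shows that every term on the right-hand side of the displayed identity lies in the class of polyhomogeneous functions. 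Since $\LIM$ is linear on this class, it commutes with the finite signed sum, and applying $\LIM_{n\to\infty}$ yields the claimed identity.

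The main obstacle is the precise bookkeeping in the first step. When a subset of coordinates is pinned to the endpoint $q_j=n$, every pinned factor contributes an additive shift $4n^2$ to $\omega$, so the expressions on the right are not literally $\log\det$ of free-BC Laplacians on lower-dimensional cubes, but of such Laplacians with an overall additive shift. Reconciling these shifts with the clean binomial identity requires exploiting the coordinate symmetry $\omega(n,q)=\omega(n,n-q)$ to pair $q_j=0$ and $q_j=n$ contributions, together with the tensor-product factorization of the resolvent of $\tilde{\Delta}^F_{L_i(N)}$ used throughout Section 3. Carefully tracking the zero mode of the free Laplacian (which must be excised to keep the determinant nonzero, matching the convention on the Dirichlet side) is an additional delicate point. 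Once this combinatorial accounting is carried out, the rest of the proof is a routine application of the already-established polyhomogeneity and the linearity of $\LIM$.
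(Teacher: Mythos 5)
There is a genuine gap, and it originates in your identification of the free spectrum. With the eigenvalue formula $\omega(n,q)=4n^2\sum_j\sin^2(\pi q_j/(2n))$ used in this paper, the Dirichlet spectrum has $q_j\in\{1,\dots,n-1\}$ and the free spectrum has $q_j\in\{0,\dots,n-1\}$: passing from Dirichlet to free adds \emph{only} the index $q_j=0$ in each coordinate. There is no endpoint $q_j=n$ in the free path-graph spectrum (an eigenvalue $4$ per coordinate, i.e.\ $4n^2$ after rescaling, never occurs for a path with free boundary; this is visible in the paper's own resolvent-trace formula, where the inclusion--exclusion over the hyperplanes $\{x_J=n\}$ is precisely what removes those indices from the box $\{0,\dots,n\}^d$). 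Consequently your whole ``pin coordinates to $q_j=n$, pick up additive shifts $4n^2$, collect a remainder $\mathcal{R}(n)$'' bookkeeping is an artifact of a misidentified spectrum, and the step on which your argument then hinges --- that $\mathcal{R}(n)$ is polyhomogeneous with vanishing $(0,0)$ coefficient --- is exactly the nontrivial claim you would have to prove; you only assert it. As written, the proof does not go through.

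With the correct index sets the argument is much simpler and is what the paper does: group the eigenvalues of the free Laplacian on the $d$-cube according to the set of coordinates with $q_j=0$; the eigenvalues with exactly $i$ nonzero coordinates are precisely the Dirichlet eigenvalues of an $i$-dimensional cube, so (with the zero mode excluded on the free side, as you correctly note)
\begin{equation*}
\log\det\Delta^F_{L_d(N)}=\sum_{i=1}^{d}\binom{d}{i}\log\det\Delta^D_{L_i(N)},
\qquad\text{hence by binomial inversion}\qquad
\log\det\Delta^D_{L_d(N)}=\sum_{i=1}^{d}(-1)^{d-i}\binom{d}{i}\log\det\Delta^F_{L_i(N)},
\end{equation*}
an \emph{exact} identity for every $n$, with no remainder and no Euler--Maclaurin input at this stage. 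Applying $\LIM_{n\to\infty}$ and using its linearity (the regularized limits on the right exist by the polyhomogeneous expansion established for the free boundary condition) gives the proposition; note also that the right-hand side should involve the lower-dimensional cubes $L_i(N)$, as above. Your final step (linearity of $\LIM$) is fine; it is the finite-$n$ spectral identity that needs to be fixed as indicated.
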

\begin{proof}
As $\LIM$ is linear, all we need to prove is the following:
$$
\log \det \Delta^D_{L_d(N)} = \sum^{d}_{i=1}(-1)^{d-i}\dbinom{d}{i}\log \det \Delta^F_{L_i(N)}
$$
which follows from the relationship between the eigenvalues. 
\end{proof}

On the continuum side, we have the following analogous relation:
\begin{proposition}\label{prop3.3}
Let $\Delta^D_{L_d(A)}$ denote the Laplacian on hypercube with Dirichlet boundary condition and the $\tilde{\Delta}^F_{L_d(A)}$ as defined in Theorem \ref{th3.6}, we have,
$$
\log {\det}_{\zeta} \Delta^D_{L_d(A)} = \sum^{d}_{i=1}(-1)^{d-i}\dbinom{d}{i}\log {\det}_{\zeta}\bar{\Delta}^F_{L_i(A)}.
$$
\end{proposition}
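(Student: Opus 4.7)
The plan is to mirror, at the continuum level, the inclusion-exclusion argument of Proposition~\ref{prop3.2}: derive the identity first at the level of the heat traces (theta functions) and then transfer it to zeta-regularized determinants via the Mellin transform. This is the continuum analog of ``the relationship between the eigenvalues'' invoked in the proof of Proposition~\ref{prop3.2}.

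First, I would observe that for the interval $[0,a]$ the Neumann spectrum is obtained from the Dirichlet spectrum by adjoining the eigenvalue $0$, so at the level of theta functions $\theta^D_{L_1(a)}(t)=\theta^F_{L_1(a)}(t)-1$. Since the Laplacian on the hypercube factors as a sum of one-dimensional operators under both boundary conditions, both heat traces factor across coordinates. Writing $A=(a,\ldots,a)$ and expanding by the binomial theorem,
$$\theta^D_{L_d(A)}(t)=\prod_{j=1}^d\bigl(\theta^F_{L_1(a)}(t)-1\bigr)=\sum_{i=0}^d(-1)^{d-i}\binom{d}{i}\theta^F_{L_i(A)}(t),$$
with the convention $\theta^F_{L_0}(t):=1$.

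Next, I would pass to the \emph{reduced} Neumann heat trace $\hat\theta^F_{L_i(A)}(t):=\theta^F_{L_i(A)}(t)-1$ for $i\geq 1$ (and $\hat\theta^F_{L_0}\equiv 0$), which is precisely the object whose Mellin transform defines the zeta function attached to $\bar\Delta^F_{L_i(A)}$. Substituting in the display above and using $\sum_{i=0}^d(-1)^{d-i}\binom{d}{i}=(1-1)^d=0$ for $d\geq 1$, the constant contributions cancel and
$$\theta^D_{L_d(A)}(t)=\sum_{i=1}^d(-1)^{d-i}\binom{d}{i}\hat\theta^F_{L_i(A)}(t).$$
Each $\hat\theta^F_{L_i(A)}$ decays exponentially as $t\to\infty$ and has the standard short-time heat expansion, so termwise Mellin transform is legitimate and yields the corresponding identity
$$\zeta^D_{L_d(A)}(s)=\sum_{i=1}^d(-1)^{d-i}\binom{d}{i}\bar\zeta^F_{L_i(A)}(s)$$
between meromorphic zeta functions, both sides holomorphic at $s=0$. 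Differentiating at $s=0$ and invoking $\log{\det}_\zeta\Delta=-\zeta'(0)$ delivers the claim.

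The one delicate point is the bookkeeping of the Neumann zero mode. Taken naively, the theta function identity produces a constant term that does not correspond to any legitimate zeta quantity on the right; it is precisely the combinatorial vanishing $(1-1)^d=0$ that removes this obstruction and allows the identity to descend cleanly to the level of zeta-regularized determinants of $\bar\Delta^F_{L_i(A)}$. Everything else (factorization across coordinates, termwise Mellin transform, differentiation at $s=0$) is completely formal.
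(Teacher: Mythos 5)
Your proof is correct and follows essentially the same route as the paper: the paper's argument is precisely the inclusion--exclusion relation between the Dirichlet spectrum of the $d$-cube and the (zero-mode-removed) free spectra of the lower-dimensional cubes, which you implement at the level of heat traces and then transfer to zeta determinants by Mellin transform. Your explicit handling of the Neumann zero mode and of the termwise Mellin transform simply supplies the details that the paper leaves implicit in ``the relationship between the eigenvalues.''
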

\begin{proof}
The same argument as in Proposition \ref{prop3.2}.
\end{proof}
Using Proposition \ref{prop3.2} and Proposition \ref{prop3.3} we have the following result. 
\begin{theorem} The following holds:
$$
\log {\det}_{\zeta} \Delta^D_{L_d(A)} =\LIM_{n\rightarrow \infty} \log \det \tilde{\Delta}^D_{L_d(N)}-(-1)^d\frac{1}{2^d}\sum^d_{i=1}\log(4i)\dbinom{d}{i}.
$$
\end{theorem}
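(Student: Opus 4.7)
The plan is to combine the boundary-condition reductions in Propositions \ref{prop3.2} and \ref{prop3.3} with the regularized-limit identity for the free boundary condition from Theorem \ref{th3.6}, and then verify a short combinatorial identity. First I would apply Proposition \ref{prop3.2} to write
$$\LIM_{n\to\infty}\log\det\tilde{\Delta}^D_{L_d(N)} = \sum_{i=1}^d (-1)^{d-i}\dbinom{d}{i}\LIM_{n\to\infty}\log\det\tilde{\Delta}^F_{L_i(N)}.$$
Next I would apply Theorem \ref{th3.6} in dimension $i$ to each summand, which gives
$$\LIM_{n\to\infty}\log\det\tilde{\Delta}^F_{L_i(N)} = \log{\det}_\zeta\bar{\Delta}^F_{L_i(A)} + \frac{1}{2^i}\sum_{j=1}^i (-1)^j\log(4j)\dbinom{i}{j}.$$

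Substituting this back, the right-hand side becomes
$$\sum_{i=1}^d (-1)^{d-i}\dbinom{d}{i}\log{\det}_\zeta\bar{\Delta}^F_{L_i(A)} + \sum_{i=1}^d (-1)^{d-i}\dbinom{d}{i}\frac{1}{2^i}\sum_{j=1}^i(-1)^j\log(4j)\dbinom{i}{j}.$$
By Proposition \ref{prop3.3}, the first sum collapses to $\log{\det}_\zeta \Delta^D_{L_d(A)}$. Thus the theorem reduces to the purely combinatorial identity
$$\sum_{i=1}^d (-1)^{d-i}\dbinom{d}{i}\frac{1}{2^i}\sum_{j=1}^i(-1)^j\log(4j)\dbinom{i}{j} = \frac{(-1)^d}{2^d}\sum_{i=1}^d \log(4i)\dbinom{d}{i}.$$

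To prove this identity, I would swap the order of summation so that $j$ becomes the outer index, apply the standard rearrangement $\dbinom{d}{i}\dbinom{i}{j} = \dbinom{d}{j}\dbinom{d-j}{i-j}$, and substitute $k=i-j$ in the inner sum. The inner sum then simplifies via the binomial theorem,
$$\frac{(-1)^{d-j}}{2^j}\sum_{k=0}^{d-j}\dbinom{d-j}{k}\left(-\tfrac{1}{2}\right)^k = \frac{(-1)^{d-j}}{2^j}\cdot\frac{1}{2^{d-j}} = \frac{(-1)^{d-j}}{2^d}.$$
Multiplying by the outer $(-1)^j$ produces the overall factor $(-1)^d/2^d$, and summing over $j$ yields exactly the claimed right-hand side.

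I do not expect a substantive obstacle at this stage: the analytic content, namely the polyhomogeneous expansion of the resolvent trace and the existence of the regularized limit, is already absorbed into Theorem \ref{th3.6}, and the collapse of the alternating binomial sum to $\log{\det}_\zeta\Delta^D_{L_d(A)}$ is already the content of Proposition \ref{prop3.3}. The only delicate point is sign bookkeeping in the combinatorial step, in particular organizing $(-1)^{d-j-k} = (-1)^{d-j}(-1)^k$ so that the binomial theorem applies cleanly to $(1-1/2)^{d-j}$; beyond that, the proof is a direct assembly of the ingredients established in this section.
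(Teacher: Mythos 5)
Your proposal is correct and follows essentially the same route as the paper: reduce via Proposition \ref{prop3.2}, Theorem \ref{th3.6} applied in each dimension $i$, and Proposition \ref{prop3.3}, so that everything hinges on the alternating binomial identity, which you then verify exactly as the paper does (swap the summation order, use $\binom{d}{i}\binom{i}{j}=\binom{d}{j}\binom{d-j}{i-j}$, and apply the binomial theorem to $(1-\tfrac12)^{d-j}$). No gaps; your sign bookkeeping and the resulting factor $(-1)^d/2^d$ agree with the paper's computation.
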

\begin{proof}
We only need to show the second term in the formula above; Using Proposition \ref{prop3.2}, we have
$$
\begin{aligned}
\sum^{k}_{d=1}(-1)^{k-d}\dbinom{k}{d}&\frac{1}{2^d}\sum^d_{i=1}\log(4i)(-1)^i\dbinom{d}{i}\\
&=(-1)^{k}\sum^{k}_{d=1}\dbinom{k}{d}\frac{1}{2^d}\sum^d_{i=1}(-1)^{i-d}\log(4i)\dbinom{d}{d-i}
\end{aligned}
$$
changing the index i and d, we have,
$$
\begin{aligned}
(-1)^{k}\sum^{k}_{d=1}\dbinom{k}{d}&\frac{1}{2^d}\sum^d_{i=1}(-1)^{i-d}\log(4i)\dbinom{d}{d-i}\\
&=(-1)^{k}\sum^{k}_{i=1}\log(4i)\sum^k_{d=i}\dbinom{k}{d}\frac{1}{2^d}(-1)^{i-d}\dbinom{d}{d-i}\\
&=(-1)^{k}\sum^{k}_{i=1}\dbinom{k}{i}\log(4i)\sum^k_{d=i}\frac{1}{2^d}(-1)^{d-i}\dbinom{k-i}{d-i}\\
&=(-1)^{k}\sum^{k}_{i=1}\dbinom{k}{i}\log(4i)\sum^{k-i}_{d=0}\frac{1}{2^{d+i}}(-1)^{d}\dbinom{k-i}{d}\\
&=(-1)^{k}\sum^{k}_{i=1}\dbinom{k}{i}\log(4i)\frac{1}{2^i}\left(1-\frac{1}{2}\right)^{k-i}\\
&=(-1)^{k}\sum^{k}_{i=1}\dbinom{k}{i}\log(4i)\frac{1}{2^k}
\end{aligned}
$$
which finishes the proof.
\end{proof}

\section{Massive Laplacian}
In this section we consider massive Laplacians and tori $T_{d}(A)$ and $DT_d(N)$ discrete tori. In the continuum case, the massive Laplacian $\Delta_{T_{d}(A)}+m^2$ where $m$ is a positive number, will be considered as an operator on the space of square integrable function on the torus with the space of smooth functions as the domain of the operator. 

In the discrete case, we need to be careful about the mass term if we want the limit as mesh approaches to zero to converge to the continuum massive Laplacian.
Let $\tilde{m}(u)$ be a positive function of $u$ such that
$$\lim_{u\to\infty}u\tilde{m}(u)=m.$$
The discrete massive Laplacian, as a linear operator on the finite dimensional vector space namely the space of functions on the discrete torus $DT_{d}(N)$, is defined as follows: $$\Delta^{\tilde{m}}_{DT_{d}(N)}f(x_j):=-\sum_j1/2(f(x_j+1/u)+f(x_j-1/u)-2f(x_j))+\tilde{m}(u)^2f(x_j).$$

\subsection{Zeta regularized determinant of massive Laplacian}
Let us briefly discuss the zeta regularized determinant of massive Laplacian on the torus $T_d(A)$. 

Let us denote the spectrum of the massive Laplacian on the torus $T_{d}(A)$ by $\Lambda_{T_d(A)}^m.$ Since, the spectrum of the Laplacian on torus is known, we have 
$$\Lambda_{T_d(A)}^m=\{{m^2+(2\pi q_1)^2}/{a_1^2}+\cdots+{(2\pi q_d)^2}/{a_d^2}|(q_1,...,q_d)\in\mathbb{Z}^d\}.$$
Hence, the associated theta function is 
$$\theta_{T_d(A,m)}(t)=\sum_{\lambda\in\Lambda_{T_d(A)}^m}e^{-\lambda t}.$$
The asymptotic behavior of the theta function is as follows:
$$
\theta_{T_d(A,m)}(t)=\left\{\begin{aligned}
&V^d_d(4\pi t)^{-d/2}e^{-m^2t}+o(e^{-c/t})~~~&as~t\to0\\
&o(e^{-ct})~~~&as~t\to\infty
\end{aligned}
\right.
$$
here $c\in \mathbb{R}$, $c>0$. 

The zeta function associated to the massive Laplacian is defined as the inverse of the Mellin transformation of the theta function:
$$\zeta(s)= \frac{1}{\Gamma(s)}\int^{\infty}_{0}\theta_{T_{d}(A)}(m,t)t^s\frac{\dt}{t} $$
here the integral is well defined for $Re(s)>n/2$. Moreover, 
$$
\begin{aligned}
 \zeta(s)= &\frac{1}{\Gamma(s)}\int^{1}_{0}\big(\theta_{T_d(A,m)}(t)-V^d_de^{-m^2t}(4\pi t)^{-d/2}\big)t^s\frac{\dt}{t}\\
  &+\frac{1}{\Gamma(s)}\int^{1}_{0}V^d_de^{-m^2t}(4\pi t)^{-d/2}t^s\frac{\dt}{t}\\
  &+\frac{1}{\Gamma(s)}\int^{\infty}_{1}{\theta}_{T_d(A,m)}(t)t^s\frac{\dt}{t}.
\end{aligned}
$$
Hence, 
\begin{equation}\label{eq111}
\begin{aligned}
&-\log\det(\Delta_{T_d(A)}+m^2)=\zeta'(0)\\
&= \int^{1}_{0}({\theta}_{T_d(A,m)}(t)-V^d_de^{-m^2t}(4\pi t)^{-d/2})\frac{\dt}{t} +\int^{\infty}_{1}{\theta}_{T_d(A,m)}(t)\frac{\dt}{t}\\
   &+\dfrac{d}{ds}\bigg|_{s=0}\left(\frac{1}{\Gamma(s)}\int^1_0V^d_de^{-m^2t}(4\pi t)^{-d/2}t^{s-1}\dt\right).
\end{aligned}
\end{equation}

We will address the third term of (\ref{eq111}) in Proposition \ref{04.4}.

\subsection{Theta function and log determinant of massive discrete Laplacian}
The theta function of the massive discrete Laplacian is defined as
$$\Theta_{DT_d(N)}^{\tilde{m}}(t)=\sum_{\lambda\in\Lambda^{\tilde{m}}_{DT_d(N)}}e^{-\lambda t}$$
where $\Lambda^{\tilde{m}}_{DT_d(N)}$ is the spectrum of the discrete massive Laplacian and it is given by $\Lambda^{\tilde{m}}_{DT_d(N)}=\{2d+\tilde{m}^2-2\cos(2\pi q_1/n_1)...-2\cos(2\pi q_d/n_d)|0\leq q_i< n_i\}.$

Next we state two lemmas, whose proofs are essentially the same as proofs of Lemma \ref{3.1} and Lemma \ref{3.2} respectively. These lemmas will be useful in the analysis of logarithm of determinant of massive discrete Laplacian.
\begin{lemma}\label{2.1}
For all $s\in\mathbb{C}$ with $Re(s^2)>0$, we have
$$
\begin{aligned}
\sum_{\lambda\in\Lambda^{\tilde{m}}_{DT_d(N)}}\frac{2s}{s^2+\lambda}&=2sV^{d,N}_d\int^{\infty}_0e^{-s^2t}e^{-\tilde{m}^2t}e^{-2dt}I_0(2t)^d \dt\\
&+2s\int^{\infty}_0e^{-s^2t}[\Theta_{DT_d(N)}^{\tilde{m}}(t)-V^{d,N}_de^{-\tilde{m}^2t}e^{-2dt}I_0(2t)^d]\dt.
\end{aligned}$$
\end{lemma}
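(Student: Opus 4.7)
The plan is to imitate the proof of Lemma \ref{3.1} essentially verbatim, replacing the function $g(t)$ from the Dirichlet hypercube case by the natural ``bulk'' surrogate for the massive discrete torus, namely $V^{d,N}_d e^{-\tilde m^2 t}e^{-2dt}I_0(2t)^d$. The motivation for this choice is the analog of relation (\ref{eqn2.9}) for the massive case, since multiplication by $e^{-\tilde m^2 t}$ only shifts the spectrum by $\tilde m^2$; we still have $\Theta^{\tilde m}_{DT_d(N)}(t)-V^{d,N}_d e^{-\tilde m^2 t}e^{-2dt}I_0(2t)^d = \mathcal{O}(t)$ as $t\to 0$, which will be relevant for later lemmas but is not actually needed for this identity.

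First I would use the tautological decomposition
\[
\Theta^{\tilde m}_{DT_d(N)}(t) = V^{d,N}_d e^{-\tilde m^2 t}e^{-2dt}I_0(2t)^d + \bigl(\Theta^{\tilde m}_{DT_d(N)}(t)-V^{d,N}_d e^{-\tilde m^2 t}e^{-2dt}I_0(2t)^d\bigr),
\]
multiply both sides by $e^{-s^2 t}$, and integrate over $t\in(0,\infty)$. On the left-hand side, interchanging the sum and the integral, which is permitted by Fubini since $\mathrm{Re}(s^2)>0$ and every eigenvalue $\lambda\in\Lambda^{\tilde m}_{DT_d(N)}$ satisfies $\lambda\geq \tilde m^2\geq 0$, yields
\[
\int_0^\infty e^{-s^2 t}\Theta^{\tilde m}_{DT_d(N)}(t)\,\dt = \sum_{\lambda\in\Lambda^{\tilde m}_{DT_d(N)}}\int_0^\infty e^{-(s^2+\lambda)t}\,\dt = \sum_{\lambda\in\Lambda^{\tilde m}_{DT_d(N)}}\frac{1}{s^2+\lambda}.
\]
Multiplying through by $2s$ then delivers the claimed identity.

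The only step which needs a brief sanity check is the finiteness of the two integrals on the right-hand side. For the first, the modified Bessel function satisfies $e^{-2dt}I_0(2t)^d = \mathcal{O}(t^{-d/2})$ as $t\to\infty$ and $e^{-2dt}I_0(2t)^d \to 1$ as $t\to 0$, so $\int_0^\infty e^{-s^2 t}e^{-\tilde m^2 t}e^{-2dt}I_0(2t)^d\,\dt$ converges absolutely under $\mathrm{Re}(s^2)>0$. For the second, the integrand is the difference of the two theta-type quantities (one of which, $\Theta^{\tilde m}_{DT_d(N)}(t)$, decays exponentially as $t\to\infty$) multiplied by $e^{-s^2 t}$, and both pieces are bounded near $t=0$, so absolute convergence again holds. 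No subtle estimates are required; the only ``obstacle'' is merely a bookkeeping choice of what to call the bulk term, which is forced on us by the statement of the lemma. Accordingly the proof is little more than a one-line rewriting of the theta function together with Fubini, and the author's appeal to the proof of Lemma \ref{3.1} is justified.
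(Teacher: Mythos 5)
Your proposal is correct and matches the paper's argument: the paper proves this lemma exactly as you do, by decomposing $\Theta^{\tilde m}_{DT_d(N)}(t)$ into the bulk term $V^{d,N}_d e^{-\tilde m^2 t}e^{-2dt}I_0(2t)^d$ plus a remainder, multiplying by $e^{-s^2t}$, integrating term by term, and multiplying by $2s$ (the paper simply cites the proof of Lemma \ref{3.1}). The only cosmetic remark is that the spectrum $\Lambda^{\tilde m}_{DT_d(N)}$ is a finite multiset, so the interchange of sum and integral is immediate and no appeal to Fubini is really needed.
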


\begin{lemma}\label{2.2}
Let $$f_{\tilde{m}}(s)= \sum_{\lambda\in \Lambda^{\tilde{m}}_{DT_d(N)}}\log(\lambda+s^2)$$
then $f_{\tilde{m}}(s)$ is uniquely determined by the differential equation
$$\partial_sf_{\tilde{m}}(s)=\sum_{\lambda\in \Lambda^{\tilde{m}}_{DT_d(N)}}\frac{2s}{s^2+\lambda}$$
and the asymptotic behavior 
$$f_{\tilde{m}}(s)=V^{d,N}_d\log(s^2)+o(1)$$ as $s\to\infty$.
\end{lemma}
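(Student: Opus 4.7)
The plan is to follow the pattern of the proof of Lemma \ref{3.2} almost verbatim, since the only substantive difference lies in the constant governing the leading asymptotic. First, I would verify that $f_{\tilde{m}}(s)$ satisfies the first-order ODE by differentiating term-by-term. This step is trivial because $\Lambda^{\tilde{m}}_{DT_d(N)}$ is a finite set (the discrete torus $\prod_j \mathbb{Z}/n_j\mathbb{Z}$ has cardinality $\prod_j n_j$), so interchanging differentiation and summation is immediate; each summand contributes $\partial_s \log(\lambda + s^2) = 2s/(s^2+\lambda)$.

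Next, I would establish the asymptotic behavior. The key observation is that the total number of eigenvalues of the discrete massive Laplacian on $DT_d(N)$, counted with multiplicity, equals $\prod_{j=1}^d n_j$, which by the defining formula of $V^{d,N}_k$ is precisely $V^{d,N}_d$. For each fixed $\lambda \in \Lambda^{\tilde{m}}_{DT_d(N)}$, we have the expansion
\[
\log(\lambda + s^2) = \log(s^2) + \log\!\left(1 + \tfrac{\lambda}{s^2}\right) = \log(s^2) + O(s^{-2})
\]
as $s \to \infty$, and summing over the finite spectrum gives $f_{\tilde{m}}(s) = V^{d,N}_d \log(s^2) + O(s^{-2})$, which is more than enough to deliver the claimed $o(1)$ remainder.

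Finally, uniqueness is a formal consequence of the first-order nature of the ODE: if $\tilde{f}$ were another solution with the same asymptotic behavior, then $f_{\tilde{m}} - \tilde{f}$ would be a constant (from the ODE) that also tends to zero (from the matching asymptotics), hence vanishes identically. There is no substantive obstacle in this lemma; the only point requiring attention is the bookkeeping identification of $V^{d,N}_d$ as the eigenvalue count, which is where the proof diverges from Lemma \ref{3.2} (whose constant $\prod_i (n_i - 1)$ reflects the Dirichlet boundary condition imposing vanishing at the endpoints, whereas on the torus the full product $\prod_i n_i$ appears).
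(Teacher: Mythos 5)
Your proposal is correct and follows essentially the same route as the paper, which simply notes that the proof is the same as that of Lemma \ref{3.2}: verify the first-order ODE by term-by-term differentiation over the finite spectrum, obtain the asymptotics from $\log(\lambda+s^2)=\log(s^2)+O(s^{-2})$ summed over the $\prod_j n_j = V^{d,N}_d$ eigenvalues, and conclude uniqueness since the ODE fixes $f_{\tilde m}$ up to a constant which the asymptotic condition forces to vanish. Your explicit identification of $V^{d,N}_d$ with the eigenvalue count on the torus (versus $\prod_i(n_i-1)$ in the Dirichlet case) is exactly the bookkeeping the paper leaves implicit.
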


\begin{proposition}\label{p2.1}
Let 
$$\mathcal{L}_{\tilde{m}}(s)=-\int^{\infty}_0(e^{-s^2t}e^{-\tilde{m}^2t}e^{-2dt}I_0(2t)^d-e^{-t})\frac{\dt}{t}$$
then $\mathcal{L}_{\tilde{m}}$ is uniquely determined by the differential equation
$$\partial_s\mathcal{L}_{\tilde{m}}(s)=2s\int^{\infty}_{0} e^{-s^2t}e^{-\tilde{m}^2t}e^{-2dt}(I_0(2t))^d\,\dt$$
and the asymptotic behavior 
$$\mathcal{L}_{\tilde{m}}(s)=\log(s^2)+o(s)~~~as~s\to\infty.$$
\end{proposition}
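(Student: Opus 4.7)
The plan is to follow exactly the template of Proposition \ref{3.3}, adapted to the massive case on the torus. The argument breaks into four steps: well-definedness of the integral, differentiability under the integral sign, verification of the ODE, and identification of the asymptotic constant via a Frullani-type identity. Uniqueness will then be immediate from the first-order nature of the ODE together with the asymptotic condition.

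First, I would check that the integral defining $\mathcal{L}_{\tilde{m}}(s)$ converges absolutely. Near $t=0$ the Taylor expansion $I_0(2t)=1+t^2+O(t^4)$ combined with $e^{-(s^2+\tilde{m}^2+2d)t}=1+O(t)$ yields $e^{-s^2t}e^{-\tilde{m}^2t}e^{-2dt}I_0(2t)^d - e^{-t}=O(t)$, so after dividing by $t$ the integrand is bounded at the origin. Near $t=\infty$ the standard large-$t$ behavior $e^{-2dt}I_0(2t)^d\sim(4\pi t)^{-d/2}$ (already invoked in the proof of Theorem \ref{T3.1}) forces exponential decay of the first piece thanks to the $e^{-s^2 t}$ factor, while $e^{-t}/t$ is clearly integrable at infinity. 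Differentiating under the integral sign, justified by dominated convergence on compact subsets of $\{\operatorname{Re}(s^2)>0\}$, the factor of $t$ brought down by the chain rule cancels the $1/t$ and gives precisely the stated ODE.

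The main step is the asymptotic. I would split
$$
\mathcal{L}_{\tilde{m}}(s)=-\int_0^\infty e^{-s^2t}e^{-\tilde{m}^2t}e^{-2dt}\bigl(I_0(2t)^d-1\bigr)\frac{\dt}{t}-\int_0^\infty\bigl(e^{-(s^2+\tilde{m}^2+2d)t}-e^{-t}\bigr)\frac{\dt}{t}.
$$
The second integral evaluates to $\log(s^2+\tilde{m}^2+2d)$ by the identity $\int_0^\infty(e^{-\alpha t}-e^{-t})\frac{\dt}{t}=-\log\alpha$ already cited in the paper, and this contributes $\log(s^2)+o(1)$ as $s\to\infty$. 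For the first integral, I would argue it tends to $0$ as $s\to\infty$ by dominated convergence: pointwise the factor $e^{-s^2t}$ sends the integrand to zero, and it is dominated by the $s=0$ version, which is $L^1$ with respect to $\dt/t$ by the same growth estimates used in step one (the $I_0(2t)^d-1$ factor kills the singularity at $t=0$). Combining gives $\mathcal{L}_{\tilde{m}}(s)=\log(s^2)+o(1)$, which is stronger than the $o(s)$ stated.

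The main obstacle is precisely this isolation of the $\log(s^2)$ term; everything else is essentially bookkeeping of decay and integrability at $0$ and $\infty$. The add/subtract decomposition above is the key trick that converts the original integral into a Frullani piece plus a remainder that vanishes in the limit. Once the asymptotic is pinned down, uniqueness follows because any two solutions of a first-order ODE differ by a constant, and the prescribed behavior at infinity determines that constant.
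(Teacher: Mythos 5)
Your proposal is correct and follows essentially the same route as the paper: the same add-and-subtract decomposition isolating $\log(s^2+\tilde{m}^2+2d)$ via the identity $\int_0^\infty(e^{-\alpha t}-e^{-t})\frac{\d t}{t}=-\log\alpha$, with the remaining integral involving $I_0(2t)^d-1$ vanishing as $s\to\infty$. Your added details on integrability at $t=0$ and $t=\infty$ and on differentiation under the integral sign simply make explicit what the paper handles by comparison with the $\tilde{m}=0$ case of Chinta--Jorgenson--Karlsson.
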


\begin{proof} Notice that $\mathcal{L}_{\tilde{m}}(s)$ is bounded by $\mathcal{L}_{\tilde{m}=0}(s)$, which is the case considered in \cite{chinta2010zeta}. This means we can change the order of derivative and the integral here.

It is easy to check $\mathcal{L}_{\tilde{m}}$ satisfies the differential equation. To verify the asymptotic behavior, we first rewrite $\mathcal{L}_{\tilde{m}}$ as
$$
\begin{aligned}
\mathcal{L}_{\tilde{m}}(s)&=-\int ^{\infty}_0 e^{-s^2t}e^{-\tilde{m}^2t}e^{-2dt}((I_0(2t))^d-1)\frac{\dt}{t}+\log(s^2+\tilde{m}^2+2d).
\end{aligned}
$$
Now, the asymptotic behavior follows the observation that the first term approaches zero as $s\to\infty$ and the second term behaves as $\log(s^2)$ when $s\to\infty$. 
\end{proof}

We will also need the following proposition which is similar in spirit to Proposition 2.4.

\begin{proposition}\label{p2.2}
Let 
$$\mathcal{H}_N(s)=-\int^{\infty}_0e^{-s^2t}\big(\Theta_{DT_d(N)}^{\tilde{m}}(t)-V^{d,N}_de^{-\tilde{m}^2t}e^{-2dt}I_0(2t)^d\big)\frac{\dt}{t}$$
then $\mathcal{H}_N(s)$ is uniquely determined by the differential equation
$$\partial_s\mathcal{H}_N(s)=2s\int^{\infty}_0\left\{e^{-s^2t}\big(\Theta_{DT_d(N)}^{\tilde{m}}(t)-V^{d,N}_de^{-\tilde{m}^2t}e^{-2dt}I_0(2t)^d\big)\right\}\,\dt$$
and the asymptotic behavior 
$$\mathcal{H}_N(s)=o(1)~~~as~~s\rightarrow\infty.$$
\end{proposition}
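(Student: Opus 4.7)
The proof proposal follows the same template as Proposition \ref{3.4} and Proposition \ref{p2.1}: establish (i) differentiability of $\mathcal{H}_N$ via a suitable dominated-convergence / differentiation-under-the-integral argument, which immediately gives the ODE; (ii) the asymptotic $\mathcal{H}_N(s)=o(1)$ as $s\to\infty$; and (iii) uniqueness by a standard first-order ODE argument.

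The starting observation is that since each eigenvalue of $\Delta^{\tilde m}_{DT_d(N)}$ is of the form $\tilde m^2+\mu$ with $\mu\in \Lambda_{DT_d(N)}$, we have the factorization
$$\Theta^{\tilde m}_{DT_d(N)}(t)=e^{-\tilde m^2 t}\,\Theta_{DT_d(N)}(t),$$
so the integrand of $\mathcal{H}_N(s)$ equals
$$\psi(t):=e^{-\tilde m^2 t}\bigl[\Theta_{DT_d(N)}(t)-V^{d,N}_d e^{-2dt}I_0(2t)^d\bigr].$$
By \eqref{eqn2.9} the bracket is $\mathcal{O}(t)$ as $t\to 0$, so $\psi(t)/t$ is bounded near $0$. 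As $t\to\infty$, the discrete theta function is a finite sum of exponentials tending to $1$, while $V^{d,N}_d e^{-2dt}I_0(2t)^d=\mathcal{O}(t^{-d/2})$ by \eqref{eqn2.10}, so $\psi(t)/t$ decays at least like $t^{-1}e^{-\tilde m^2 t}$ and is integrable on $(1,\infty)$.

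These two-sided bounds on $\psi(t)/t$ provide, for any $s_0$ with $\mathrm{Re}(s_0^2)>0$ and any neighborhood $U$ of $s_0$ where $\mathrm{Re}(s^2)\geq c>0$, the uniform dominating function $|s|\,e^{-ct}|\psi(t)|$ for the pointwise derivative $\partial_s\bigl(e^{-s^2 t}\psi(t)/t\bigr)=-2s\,e^{-s^2 t}\psi(t)$. Differentiation under the integral is therefore justified, and the stated ODE for $\mathcal{H}_N$ follows immediately.

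For the asymptotic behavior, I rewrite $\mathcal{H}_N(s)=-\int_0^\infty e^{-s^2 t}\,\psi(t)/t\,dt$, observe that for $\mathrm{Re}(s^2)\geq 1$ we have $|e^{-s^2 t}|\leq e^{-t}$, and note that $e^{-t}\,|\psi(t)|/t$ is integrable on $(0,\infty)$ by the bounds above. The Dominated Convergence Theorem then gives $\mathcal{H}_N(s)\to 0$ as $s\to\infty$, i.e., $\mathcal{H}_N(s)=o(1)$. Uniqueness is automatic: any two solutions of the first-order ODE differ by a constant, and the $o(1)$ condition at infinity forces that constant to vanish. The only delicate point is the small-$t$ bound $\psi(t)=\mathcal{O}(t)$, but this is immediate from \eqref{eqn2.9} after factoring out $e^{-\tilde m^2 t}$, so no genuine obstacle arises and the argument is essentially a direct transcription of the proof of Proposition \ref{3.4}.
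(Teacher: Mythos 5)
Your proposal is correct and follows essentially the same route the paper takes (the paper only sketches it by appeal to Proposition \ref{3.4} and the $\tilde m=0$ case of \cite{chinta2010zeta}): differentiate under the integral to get the ODE, use dominated convergence for the $o(1)$ behavior, and conclude uniqueness from the first-order ODE plus the condition at infinity. Your explicit factorization $\Theta^{\tilde m}_{DT_d(N)}(t)=e^{-\tilde m^2 t}\Theta_{DT_d(N)}(t)$, combined with \eqref{eqn2.9} near $t=0$ and the mass-induced decay at $t\to\infty$, cleanly supplies the domination that the paper instead obtains by comparison with the massless case, so no gap remains.
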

Notice here that when $\tilde{m}=0$ we arrive at a case considered in \cite{chinta2010zeta} and as discussed before this allows us to change the order of derivative and integral here. 

The next theorem allows us to express the log determinant of the massive discrete Laplacian in terms of $\mathcal{L}_{\tilde{m}}(0)$ and $\mathcal{H}_N(0)$.

\begin{theorem}\label{T2.2}
For any $s\in\mathbb{C}$ with $Re(s^2)>0$ we have a relation
$$\sum_{\lambda\in\Lambda^{\tilde{m}}_{DT_d(N)}}\log(\lambda+s^2)=
V^{d,N}_d\mathcal{L}_{\tilde{m}}(s)+\mathcal{H}_N(s).$$
Moreover, letting $s\rightarrow0$ we have 
$$\sum_{\lambda\in\Lambda^{\tilde{m}}_{DT_d(N)}}\log\lambda=
V^{d,N}_d\mathcal{L}_{\tilde{m}}(0)+\mathcal{H}_N(0)$$
where
$$\mathcal{L}_{\tilde{m}}(0)=-\int^{\infty}_0(e^{-\tilde{m}^2t}e^{-2dt}I_0(2t)^d-e^{-t})\frac{\dt}{t}$$
and
$$\mathcal{H}_N(0)=-\int^{\infty}_0(\Theta_{DT_d(N)}^{\tilde{m}}(t)-V^{d,N}_de^{-\tilde{m}^2t}e^{-2dt}I_0(2t)^d)\frac{\dt}{t}.$$

\begin{proof}
Same as the proof of Theorem \ref{T3.1}. 
\end{proof}

\end{theorem}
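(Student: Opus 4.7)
The plan is to mirror the strategy used in the proof of Theorem \ref{T3.1}, now with the function $g(t)$ replaced by the single ``bulk'' term $V^{d,N}_d e^{-\tilde{m}^2 t}e^{-2dt}I_0(2t)^d$. First I would observe that Lemma \ref{2.1} yields an integral representation for $\partial_s f_{\tilde{m}}(s)$, while Propositions \ref{p2.1} and \ref{p2.2} identify this representation termwise with $V^{d,N}_d\,\partial_s \mathcal{L}_{\tilde{m}}(s) + \partial_s \mathcal{H}_N(s)$. Hence the functions
$$f_{\tilde{m}}(s)\quad\text{and}\quad V^{d,N}_d \mathcal{L}_{\tilde{m}}(s) + \mathcal{H}_N(s)$$
solve the same first-order ODE in $s$ on the region $\operatorname{Re}(s^2)>0$, so they differ by a constant $C$.

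Next I would pin down $C=0$ using the asymptotic information collected in Lemma \ref{2.2} and Propositions \ref{p2.1}, \ref{p2.2}. As $s\to\infty$, Lemma \ref{2.2} gives $f_{\tilde{m}}(s) = V^{d,N}_d \log(s^2) + o(1)$, while Proposition \ref{p2.1} gives $\mathcal{L}_{\tilde{m}}(s) = \log(s^2) + o(1)$ and Proposition \ref{p2.2} gives $\mathcal{H}_N(s)=o(1)$. Combining these three asymptotics cancels the $V^{d,N}_d\log(s^2)$ terms on both sides and forces $C=0$, establishing the identity in the stated range of $s$.

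The delicate step is taking $s \to 0$. Here the issue is that the individual integrands in the definitions of $\mathcal{L}_{\tilde{m}}(s)$ and $\mathcal{H}_N(s)$ must remain in $L^1(0,\infty)$ with respect to $\mathrm{d}t/t$ at $s=0$. For $\mathcal{L}_{\tilde{m}}(0)$, I would use the standard small-$t$ expansion $e^{-\tilde{m}^2 t}e^{-2dt}I_0(2t)^d - e^{-t} = \mathcal{O}(t)$ as $t\to 0$ (the subtracted $e^{-t}$ cancels the $1$ in $I_0(2t)^d$ and the mass term only improves decay), and the large-$t$ bound $e^{-2dt}I_0(2t)^d = \mathcal{O}(t^{-d/2})$ (with an additional exponential decay $e^{-\tilde m^2 t}$), which gives integrability on both ends. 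For $\mathcal{H}_N(0)$, I would note that the spectrum $\Lambda^{\tilde m}_{DT_d(N)}$ is strictly positive because of the mass, so $\Theta_{DT_d(N)}^{\tilde{m}}(t)$ decays exponentially as $t\to\infty$, while the model term $V^{d,N}_d e^{-\tilde m^2 t}e^{-2dt}I_0(2t)^d$ likewise decays; near $t=0$ the subtraction is designed precisely so that the leading singular behavior cancels, giving an $\mathcal{O}(t)$ integrand.

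The main obstacle, as in Theorem \ref{T3.1}, is verifying these integrability conditions carefully — especially the cancellation of the leading singular term as $t\to 0$, where one needs $\Theta_{DT_d(N)}^{\tilde m}(t) - V^{d,N}_d e^{-\tilde m^2 t}e^{-2dt}I_0(2t)^d = \mathcal{O}(t)$. This follows from the analogue of \eqref{eqn2.9} for the massive case, where the factor $e^{-\tilde m^2 t}$ is a smooth perturbation that does not change the order of the leading-order small-$t$ behavior. Once integrability is established, dominated convergence allows us to pass $s\to 0$ in the identity $f_{\tilde{m}}(s) = V^{d,N}_d \mathcal{L}_{\tilde{m}}(s) + \mathcal{H}_N(s)$, yielding the final formula.
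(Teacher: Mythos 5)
Your proposal is correct and follows essentially the same route as the paper, which proves Theorem \ref{T2.2} by repeating the argument of Theorem \ref{T3.1}: the ODE characterizations and $s\to\infty$ asymptotics from Lemma \ref{2.1}, Lemma \ref{2.2}, Proposition \ref{p2.1} and Proposition \ref{p2.2} force the two sides to agree (with the constant $C=0$ fixed since $V^{d,N}_d=\prod_i n_i$ matches the coefficient of $\log(s^2)$), and the limit $s\to 0$ is justified by the same integrability checks, with the mass term $\tilde m>0$ supplying the exponential decay of $\Theta^{\tilde m}_{DT_d(N)}(t)$ at large $t$ exactly as you note.
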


\subsection{Asymptotic behavior of log determinant}
In this section, we analyze asymptotic behavior of log determinant of the discrete massive Laplacian as the parameter $u$ approaches infinity. 

The following proposition deals with the convergence of theta function of the discrete massive Laplacian.
\begin{proposition}\label{04.1}
 For each fixed $t>0$ we have the limit
 $$\lim_{u\to\infty}\Theta_{DT_d(N)}^{\tilde{m}}(u^2t)= \theta_{T_d(A,m)}(t).$$
\end{proposition}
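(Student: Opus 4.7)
The plan is to reduce the massive case to the massless case, which has already been established in Proposition \ref{4.1} (via the torus result of Chinta--Jorgenson--Karlsson \cite{chinta2010zeta}). The key observation is that since the mass term shifts every eigenvalue by a fixed constant, the theta function factors multiplicatively.

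First, I would write out the spectrum: since $\Lambda^{\tilde{m}}_{DT_d(N)} = \{\tilde{m}^2 + \lambda \mid \lambda \in \Lambda_{DT_d(N)}\}$ and similarly on the continuum side $\Lambda^{m}_{T_d(A)} = \{m^2 + \lambda \mid \lambda \in \Lambda_{T_d(A)}\}$, we immediately obtain
$$
\Theta_{DT_d(N)}^{\tilde{m}}(u^2 t) = e^{-\tilde{m}^2 u^2 t}\,\Theta_{DT_d(N)}(u^2 t), \qquad \theta_{T_d(A,m)}(t) = e^{-m^2 t}\,\theta_{T_d(A)}(t).
$$
Next, from the hypothesis $\lim_{u\to\infty} u\tilde{m}(u) = m$, we have $\tilde{m}^2 u^2 = (u\tilde{m})^2 \to m^2$, so for each fixed $t>0$ the scalar factor converges, $e^{-\tilde{m}^2 u^2 t} \to e^{-m^2 t}$.

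For the remaining factor, I would invoke the massless limit $\Theta_{DT_d(N)}(u^2 t) \to \theta_{T_d(A)}(t)$, which is exactly the statement used in the proof of Proposition \ref{4.1} and is proved in \cite{chinta2010zeta}. Multiplying the two limits gives
$$
\lim_{u\to\infty} \Theta_{DT_d(N)}^{\tilde{m}}(u^2 t) = e^{-m^2 t}\,\theta_{T_d(A)}(t) = \theta_{T_d(A,m)}(t),
$$
as desired. There is no real obstacle here beyond bookkeeping; everything reduces to the pointwise scalar limit and the already-established massless torus convergence, with no need for uniformity in $t$ or exchange of limit and summation.
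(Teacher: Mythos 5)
Your proposal is correct and is essentially the paper's own argument: the paper likewise deduces the proposition from the massless convergence $\Theta_{DT_d(N)}(u^2t)\to\theta_{T_d(A)}(t)$ of \cite{chinta2010zeta} together with $u^2\tilde{m}(u)^2\to m^2$, and your explicit factorization $\Theta_{DT_d(N)}^{\tilde{m}}(u^2t)=e^{-\tilde{m}^2u^2t}\Theta_{DT_d(N)}(u^2t)$ just spells out why those two facts suffice. No gap; the argument matches the paper's.
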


\begin{proof}
The proposition follows from following the well known fact
$$\Theta_{DT_d(n_1,...,n_d)}(u^2t)\rightarrow\theta_{T_d(a_1,...,a_d)}(t)$$
and 
${u}^2\tilde{m}({u})^2\rightarrow m^2$ as $u\rightarrow\infty$.
\end{proof}

\begin{proposition}\label{04.2} We have the following limit:
\begin{equation}
\begin{aligned}\label{eq04.4}
\lim_{u\to\infty}\int^{\infty}_1&(\Theta_{DT_d(N)}^{\tilde{m}}(u^2t)-V^{d,N}_de^{-u^2\tilde{m}^2t}e^{-2du^2t}I_0(2u^2t)^d)\frac{\dt}{t}\\
&=\int^{\infty}_1\theta_{T_d(A,m)}(t)\frac{dt}{t}-\int^{\infty}_1V^d_d(4\pi t)^{-d/2}e^{-m^2t}\frac{\dt}{t}.
\end{aligned}
\end{equation}
\end{proposition}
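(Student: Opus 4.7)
The plan is to establish the limit by splitting the integrand into its two natural pieces and justifying passage to the limit in each separately by dominated convergence. Writing
\[
\int_1^\infty \Theta_{DT_d(N)}^{\tilde m}(u^2 t)\,\frac{\dt}{t} \;-\; V^{d,N}_d\int_1^\infty e^{-u^2\tilde m^2 t}e^{-2du^2 t} I_0(2u^2 t)^d\,\frac{\dt}{t},
\]
I would show each of the two integrals converges individually to its claimed limit, and then combine.

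For the first integral, Proposition \ref{04.1} gives pointwise convergence of the integrand (without the $1/t$ factor) to $\theta_{T_d(A,m)}(t)$. To apply dominated convergence for $t\geq 1$, I would factor $\Theta_{DT_d(N)}^{\tilde m}(u^2 t)=e^{-u^2\tilde m^2 t}\Theta_{DT_d(N)}(u^2 t)$ and use that $\Theta_{DT_d(N)}(u^2 t)$ is decreasing in $t$, so for $t\geq 1$ it is bounded by $\Theta_{DT_d(N)}(u^2)$. By Proposition \ref{04.1} (massless case), $\Theta_{DT_d(N)}(u^2)\to \theta_{T_d(A)}(1)$, hence is uniformly bounded by some constant $C$ for $u$ large. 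Since $u^2\tilde m^2\to m^2>0$, for $u$ large we have $u^2\tilde m^2\geq m^2/2$, so the integrand is dominated by $Ce^{-m^2 t/2}/t$, which is integrable on $[1,\infty)$. Dominated convergence then yields the first limit.

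For the second integral, the pointwise limit relies on the classical asymptotic $I_0(x)\sim e^{x}/\sqrt{2\pi x}$ as $x\to\infty$, giving
\[
e^{-2du^2 t}I_0(2u^2 t)^d = (4\pi u^2 t)^{-d/2}\bigl(1+O(1/(u^2 t))\bigr),
\]
so together with $V^{d,N}_d/u^d\to V^d_d$ and $u^2\tilde m^2\to m^2$, the integrand converges pointwise on $t\geq 1$ to $V^d_d(4\pi t)^{-d/2}e^{-m^2 t}$. For the uniform domination, the standard bound $I_0(x)\leq C e^{x}/\sqrt{x}$ valid for $x\geq 1$ gives $V^{d,N}_d e^{-2du^2 t}I_0(2u^2 t)^d \leq C' t^{-d/2}$, uniformly in $u$ large, since $V^{d,N}_d/u^d$ is bounded. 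Combined with the factor $e^{-u^2\tilde m^2 t}\leq e^{-m^2 t/2}$, the integrand is dominated by $C' e^{-m^2 t/2}/t^{d/2+1}$ on $[1,\infty)$, and dominated convergence applies once more.

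The main obstacle is the second piece: the normalization $V^{d,N}_d$ grows like $u^d$, so one must extract the correct power of $u$ from the $I_0$ asymptotic and verify that the uniform bound holds not only pointwise but throughout $t\in[1,\infty)$ as $u\to\infty$. The positivity of the mass $m>0$ is crucial here to produce the absolutely integrable dominating tail; without it, one would have to separately handle the zero-mode contributions and the argument would require finer control of the discrete spectrum at small eigenvalues.
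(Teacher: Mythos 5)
Your proof is correct and follows essentially the route the paper intends: the paper states this proposition without proof, relying on the same splitting and pointwise convergence of theta functions used in its analogous Proposition in Section 2, and you reproduce exactly that decomposition. Your explicit dominating functions (monotonicity of $\Theta_{DT_d(N)}$ plus the $I_0$ bound, together with $u^2\tilde m^2\geq m^2/2$) supply the limit-integral interchange that the paper leaves implicit, so this is a welcome sharpening rather than a different method.
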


Now, we can analyze the third term in (\ref{eq111}). We will consider this together with the second term of (\ref{eq04.4}) in the following proposition.

\begin{proposition}\label{04.4} We have the following:
$$\begin{aligned}
\frac{\d}{\d s}\bigg|_{s=0}
 &\left(\frac{1}{\Gamma(s)}\int^{\infty}_0V^d_d(4\pi t)^{-d/2}e^{-m^2t}t^s\frac{\dt}{t}\right)\\
&=\left\{
\begin{aligned}
&\frac{V^d_d}{(4\pi/m^2)^{d/2}}(m^2)^{s-1}\Gamma(-d/2)~~~~~\text{for}~~~~~d/2\not\in\mathbb{Z}\\
&(-1)^{d/2}
\frac{2/d+2/(d-2)\cdots+1-\log(m^2)}{(d/2)!}\frac{V^d_d}{(4\pi)^{d/2}}m^d~~\text{for}~d/2\in\mathbb{Z}.
\end{aligned}
\right.
\end{aligned}
$$
\begin{proof}
Notice that we can rewrite the second part of Proposition \ref{04.2}
$$
\int^{\infty}_1V^d_d(4\pi t)^{-d/2}e^{-m^2t}\frac{\dt}{t}
=\frac{\d}{\d s}\bigg|_{s=0}\left(
\frac{1}{\Gamma(s)}\int^{\infty}_1V^d_d(4\pi t)^{-d/2}e^{-m^2t}t^s\frac{\dt}{t}
\right).
$$
Adding this with the third term of (\ref{eq111}), we have the following expression
\begin{equation}\label{eq10}
\frac{\d}{\d s}\bigg|_{s=0}\left(
\frac{1}{\Gamma(s)}\int^{\infty}_0V^d_d(4\pi t)^{-d/2}e^{-m^2t}t^s\frac{\dt}{t}
\right).
\end{equation}
Now, all we need to do is to calculate (\ref{eq10}). Let us assume the case $d/2$ is not an integer. Using $x=m^2t$, we have
$$
\begin{aligned}
(\ref{eq10})&=\frac{\d}{\d s}\bigg|_{s=0}\left(
\frac{1}{\Gamma(s)}\int^{\infty}_0\frac{V^d_d}{(4\pi/m^2)^{d/2}}(m^2)^{-s}x^{-d/2}e^{-x}x^s\frac{\d x}{x}
\right)\\
&=\frac{\d}{\d s}\bigg|_{s=0}\left(
\frac{1}{\Gamma(s)}\frac{V^d_d}{(4\pi/m^2)^{d/2}}(m^2)^{-s}\Gamma(s-d/2)
\right)\\
&=\frac{V^d_d}{(4\pi/m^2)^{d/2}}\Gamma(-d/2).
\end{aligned}
$$
When $d/2$ is an integer, we can use a similar trick, however, we need be more careful as $\Gamma(s-d/2)$ diverges as $s\to0$. In this case, the computation goes as follows: 
$$
\begin{aligned}
(\ref{eq10})
&=\frac{\d}{\d s}\bigg|_{s=0}\left(
\frac{\Gamma(s-d/2)}{\Gamma(s)}\frac{V^d_d}{(4\pi/m^2)^{d/2}}(m^2)^{-s}
\right)\\
&=\frac{\d}{\d s}\bigg|_{s=0}\left(
\frac{1}{(s-d/2)(s-d/2+1)...(s-1)}\frac{V^d_d}{(4\pi)^{d/2}}(m^2)^{-s+d/2}
\right)\\
&=(-1)^{d/2}
\frac{2/d+2/(d-2)\cdots+1-\log(m^2)}{(d/2)!}\frac{V^d_d}{(4\pi)^{d/2}}m^d.
\end{aligned}
$$
\end{proof}
\end{proposition}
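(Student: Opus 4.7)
The plan is to reduce the integrand to a Gamma function by a simple substitution and then compute the $s$-derivative at $s=0$ of a meromorphic product involving $1/\Gamma(s)$, $\Gamma(s-d/2)$, and $(m^2)^{d/2-s}$. Setting $x = m^2 t$ in the integral and using the integral representation of $\Gamma$, I would first write
$$I(s) := \frac{1}{\Gamma(s)}\int_0^\infty V^d_d (4\pi t)^{-d/2}\, e^{-m^2 t}\, t^{s-1}\, dt \;=\; \frac{V^d_d}{(4\pi)^{d/2}}\, (m^2)^{d/2-s}\, \frac{\Gamma(s-d/2)}{\Gamma(s)}.$$
The proposition then amounts to computing $I'(0)$, and the dichotomy in the statement corresponds exactly to whether $\Gamma(s-d/2)$ is regular at $s=0$, i.e.\ whether $d/2$ is a non-negative integer.

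\textbf{Case 1: $d/2 \notin \mathbb{Z}_{\ge 0}$.} Here $\Gamma(s-d/2)(m^2)^{d/2-s}$ is holomorphic at $s=0$, while $1/\Gamma(s)$ has a simple zero there with derivative $1$ (from $\Gamma(s)=s^{-1}-\gamma+O(s)$). By the product rule, differentiation at $s=0$ simply evaluates the remaining factor, yielding
$$I'(0) \;=\; \frac{V^d_d}{(4\pi)^{d/2}}\, (m^2)^{d/2}\, \Gamma(-d/2) \;=\; \frac{V^d_d}{(4\pi/m^2)^{d/2}}\,\Gamma(-d/2),$$
which is the first line of the claimed formula.

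\textbf{Case 2: $d = 2k$ with $k\in\mathbb{Z}_{>0}$.} Now $\Gamma(s-k)$ has a simple pole at $s=0$, so I cannot evaluate naively. Iterating the functional equation $\Gamma(z+1)=z\Gamma(z)$, I would rewrite
$$\frac{\Gamma(s-k)}{\Gamma(s)} \;=\; \prod_{j=0}^{k-1}\frac{1}{s-k+j},$$
which is holomorphic at $s=0$ with value $1/[(-k)(-k+1)\cdots(-1)] = (-1)^k/k!$. Then I would apply the product rule to the holomorphic product of this rational function and $(m^2)^{d/2-s}$. The derivative of $(m^2)^{d/2-s}$ at $s=0$ contributes $-\log(m^2)\, m^d$, and the logarithmic derivative of the rational factor at $s=0$ equals
$$-\sum_{j=0}^{k-1}\frac{1}{j-k} \;=\; \frac{1}{k}+\frac{1}{k-1}+\cdots+1 \;=\; \frac{2}{d}+\frac{2}{d-2}+\cdots+1.$$
Collecting, $I'(0)$ becomes $(-1)^{d/2} V^d_d \, m^d / [(d/2)!\,(4\pi)^{d/2}]$ times $(2/d + 2/(d-2) + \cdots + 1 - \log(m^2))$, as asserted.

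The calculation is essentially elementary once one recognizes the reduction to $\Gamma(s-d/2)/\Gamma(s)$. The only delicate points will be keeping track of the sign $(-1)^{d/2}$ from $(-k)(-k+1)\cdots(-1) = (-1)^{k}k!$, and rewriting the harmonic sum in the symmetric form $\tfrac{2}{d}+\tfrac{2}{d-2}+\cdots+1$ used in the statement.
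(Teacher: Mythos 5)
Your proposal is correct and follows essentially the same route as the paper: substitute $x=m^2t$ to reduce the expression to $\frac{V^d_d}{(4\pi)^{d/2}}(m^2)^{d/2-s}\,\Gamma(s-d/2)/\Gamma(s)$, then evaluate the derivative at $s=0$ using the simple zero of $1/\Gamma(s)$ when $d$ is odd, and the rational rewriting $\Gamma(s-d/2)/\Gamma(s)=\prod_{j=0}^{d/2-1}(s-d/2+j)^{-1}$ together with the product rule when $d$ is even. The paper's proof only adds the bookkeeping step of combining the tail integral from Proposition \ref{04.2} with the third term of (\ref{eq111}) to obtain the full integral over $(0,\infty)$, which you compute directly.
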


\begin{proposition}\label{04.3} The following limit holds:
$$
\begin{aligned}
&\lim_{u\to\infty}\int^{1}_0(\Theta_{DT_d(N)}^{\tilde{m}}(u^2t)-V^{d,N}_de^{-u^2\tilde{m}^2t}e^{-2du^2t}I_0(2u^2t)^d)\frac{\dt}{t}\\
&=\int^{1}_0\left\{\theta_{T_d(A,m)}(t)-V^d_de^{-m^2t}(4\pi t)^{-d/2}\right\}\frac{\dt}{t}.
\end{aligned}$$

\end{proposition}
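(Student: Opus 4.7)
The plan is to argue in direct parallel with Proposition \ref{4.3}: establish pointwise convergence of the integrand for each $t\in(0,1)$, and then pass the limit through the integral via dominated convergence.

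For the pointwise statement, fix $t>0$. Proposition \ref{04.1} already gives $\Theta^{\tilde{m}}_{DT_d(N)}(u^2t)\to\theta_{T_d(A,m)}(t)$ as $u\to\infty$. For the subtracted term, I would combine three ingredients: the normalization $V^{d,N}_d/u^d=\prod_i n_i(u)/u\to\prod_i a_i = V^d_d$; the mass scaling $u\tilde{m}(u)\to m$, which gives $e^{-u^2\tilde{m}^2 t}\to e^{-m^2 t}$; and the standard large-argument asymptotic $e^{-2s}I_0(2s)=(4\pi s)^{-1/2}(1+O(1/s))$ applied at $s=u^2t$, which yields $u^d e^{-2du^2t}I_0(2u^2t)^d\to(4\pi t)^{-d/2}$. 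Multiplying these convergences gives
$$V^{d,N}_d e^{-u^2\tilde{m}^2 t}e^{-2du^2 t}I_0(2u^2 t)^d\to V^d_d e^{-m^2 t}(4\pi t)^{-d/2},$$
so the integrand converges pointwise to the integrand on the right-hand side, which is integrable on $(0,1)$ by the small-$t$ bound $\theta_{T_d(A,m)}(t)-V^d_d e^{-m^2 t}(4\pi t)^{-d/2}=O(e^{-c/t})$ recorded earlier.

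To exchange limit and integral I would produce a $u$-independent integrable majorant. Factoring out the mass, $\Theta^{\tilde{m}}_{DT_d(N)}(u^2t)=e^{-u^2\tilde{m}^2 t}\Theta_{DT_d(N)}(u^2t)$, and using $e^{-u^2\tilde{m}^2 t}\le 1$, the problem reduces to estimating the massless difference $\Theta_{DT_d(N)}(u^2t)-V^{d,N}_d e^{-2du^2 t}I_0(2u^2t)^d$, which is exactly the quantity analysed in \cite{chinta2010zeta}. The key point that makes the bound uniform in $u$ is that the Taylor expansions at $s=0$ of $\Theta^{\tilde{m}}_{DT_d(N)}(s)$ and $V^{d,N}_d e^{-\tilde{m}^2 s}e^{-2ds}I_0(2s)^d$ both equal $V^{d,N}_d$ and have the same first derivative $-V^{d,N}_d(2d+\tilde{m}^2)$ (the former via the trace $\sum_{q_i}\cos(2\pi q_i/n_i)=0$, the latter via $I_0(2s)=1+s^2+O(s^4)$), so the leading and first-order singular contributions as $t\to0$ cancel.

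The main obstacle is making this uniform integrable majorant rigorous near $t=0$, where each of the two terms individually behaves like $V^{d,N}_d(4\pi u^2 t)^{-d/2}\sim t^{-d/2}$ and only their cancellation yields an integrable function. The cleanest route is to split the integral at the scale $t=1/u^2$: on $(0,1/u^2)$ use the second-order Taylor bound established above to control the contribution, and on $(1/u^2,1)$ combine the already-proved pointwise convergence with the uniform-in-$(u,s)$ version of the short-time expansion of $\Theta_{DT_d(N)}(s)$ inherited from \cite{chinta2010zeta}. Once both pieces are controlled, dominated convergence closes the argument.
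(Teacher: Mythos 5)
The paper itself gives no proof of Proposition \ref{04.3} (its massless analogue, Proposition \ref{4.3}, is proved only by exhibiting the pointwise limits and asserting they ``immediately imply'' the claim, the uniform control being implicitly delegated to \cite{chinta2010zeta}), so you are trying to supply exactly the justification the paper leaves out. Your pointwise step is correct and is the same as the paper's: Proposition \ref{04.1}, $V^{d,N}_d/u^d\to V^d_d$, $u\tilde{m}(u)\to m$, and $e^{-2s}I_0(2s)\sim(4\pi s)^{-1/2}$ give the convergence of the integrand, and the factorization $\Theta^{\tilde{m}}_{DT_d(N)}(s)=e^{-\tilde{m}^2s}\Theta_{DT_d(N)}(s)$ with $e^{-u^2\tilde{m}^2t}\le 1$ is a clean way to strip the mass and reduce to the massless comparison.

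The gap is in your control of the region $t\in(0,u^{-2})$. After substituting $s=u^2t$ that contribution is $\int_0^1\bigl|\Theta^{\tilde{m}}_{DT_d(N)}(s)-V^{d,N}_d e^{-\tilde{m}^2s}e^{-2ds}I_0(2s)^d\bigr|\,\mathrm{d}s/s$, and the first-order Taylor matching you invoke (common value $V^{d,N}_d$, common derivative $-(2d+\tilde{m}^2)V^{d,N}_d$) only bounds the integrand by $C\,V^{d,N}_d\,s^2$ on $s\in(0,1]$; integrating against $\mathrm{d}s/s$ gives $O(V^{d,N}_d)=O(u^d)$, not $o(1)$. The same defect persists at any fixed Taylor order: the two traces in fact agree to order roughly $n_{\min}$ (closed walks shorter than the girth of the torus do not wrap), but a bound $O(V^{d,N}_d s^K)$ with $K$ fixed still integrates to $O(u^d)$, so no finite-order cancellation argument can close this. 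What is actually needed is the theta-inversion/Bessel representation $\Theta_{DT_1(n)}(s)=n\sum_{k\in\mathbb{Z}}e^{-2s}I_{kn}(2s)$, which shows the one-dimensional error is $2ne^{-2s}\sum_{k\ge1}I_{kn}(2s)=O(n\,s^{n}/n!)$ for bounded $s$, and, via Gaussian-type bounds on $e^{-2s}I_{kn}(2s)$ for $1\le s\le u^2$, yields a $u$-independent integrable majorant of the form $C\,t^{-d/2}e^{-c/t}$ on all of $(0,1)$ (this is also what your appeal to a ``uniform-in-$(u,s)$ short-time expansion'' on $(u^{-2},1)$ really amounts to). With that majorant in hand, dominated convergence finishes the proof and the splitting at $t=u^{-2}$ becomes unnecessary; without it, your argument does not control the small-$t$ piece.
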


Now, we state the key proposition of this section.
\begin{proposition} When $u\rightarrow\infty$,
$$
\begin{aligned}
&\mathcal{H}_{N(u)}(0)\\
&=\log \det(\Delta_{T_d(A)}+m^2)+\left\{
\begin{aligned}
&\frac{V^d_d}{(4\pi/m^2)^{d/2}}\Gamma(-d/2)~&d~odd\\
&(-1)^{d/2}
\frac{2/d+2/(d-2)\cdots+1-\log(m^2)}{(d/2)!}\frac{V^d_d}{(4\pi)^{d/2}}m^d~&d~even\\
\end{aligned}\right. \\ 
&+o(1).
\end{aligned}
$$
\end{proposition}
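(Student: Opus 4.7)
The plan is to rescale time by $u^2$ in the integral defining $\mathcal{H}_{N(u)}(0)$, split the result at $t=1$, and then apply Propositions \ref{04.2} and \ref{04.3} to the two pieces separately. Under the substitution $t\mapsto u^2 t$ the measure $\dt/t$ is invariant, and since $u\tilde{m}\to m$ we have $u^2\tilde{m}^2 t \to m^2 t$, so the rescaled integrand coincides with the one handled by those propositions. This gives
\[
\mathcal{H}_{N(u)}(0)=-\int_0^{1}\!\bigl(\Theta_{DT_d(N)}^{\tilde{m}}(u^2 t)-V^{d,N}_d e^{-u^2\tilde{m}^2 t}e^{-2du^2 t}I_0(2u^2 t)^d\bigr)\frac{\dt}{t}-\int_1^{\infty}\!(\cdots)\frac{\dt}{t},
\]
and the two pieces can be treated independently.

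Applying Proposition \ref{04.3} to the first piece and Proposition \ref{04.2} to the second yields $\mathcal{H}_{N(u)}(0)=-A-B+C+o(1)$, where
\[
A=\int_0^1\!\bigl(\theta_{T_d(A,m)}(t)-V^d_d e^{-m^2 t}(4\pi t)^{-d/2}\bigr)\frac{\dt}{t}, \quad B=\int_1^{\infty}\!\theta_{T_d(A,m)}(t)\frac{\dt}{t},
\]
and $C=\int_1^{\infty}V^d_d(4\pi t)^{-d/2}e^{-m^2 t}\,\dt/t$. The key matching step is to compare this with the decomposition (\ref{eq111}), which reads $-\log\det(\Delta_{T_d(A)}+m^2)=A+B+T_3$, with $T_3$ the $s$-derivative term supported on $[0,1]$. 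Substituting gives
\[
\mathcal{H}_{N(u)}(0)=\log\det(\Delta_{T_d(A)}+m^2)+T_3+C+o(1).
\]

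The final step is to identify $T_3+C$ as the full-range $s$-derivative evaluated in Proposition \ref{04.4}. Because $\frac{1}{\Gamma(s)}$ vanishes at $s=0$ and $\frac{\d}{\d s}\bigl(t^s/\Gamma(s)\bigr)\big|_{s=0}=1$, the derivative interchanges with the integration on $[1,\infty)$ (no singularity at $s=0$ remains there), so
\[
C=\frac{\d}{\d s}\bigg|_{s=0}\frac{1}{\Gamma(s)}\int_1^{\infty}\!V^d_d(4\pi t)^{-d/2}e^{-m^2 t}\,t^s\,\frac{\dt}{t}.
\]
Adding the $[0,1]$-piece $T_3$ reconstitutes the integral on $[0,\infty)$, which by Proposition \ref{04.4} is exactly the advertised odd-$d$ and even-$d$ correction.

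The main obstacle is technical: verifying that the convergence statements of Propositions \ref{04.2} and \ref{04.3} are strong enough that the remainder after subtracting the limit is truly $o(1)$, not merely pointwise vanishing. This demands uniform-in-$u$ dominating functions, which I expect to follow from the exponential decay of $\Theta_{DT_d(N)}^{\tilde{m}}$ at large $t$ (from the mass) and the standard short-time cancellation between $\Theta^{\tilde{m}}$ and its leading heat-kernel approximation $V^{d,N}_d e^{-\tilde{m}^2 t}e^{-2dt}I_0(2t)^d$, analogous to the estimates in Lemma \ref{analysis of g} for the Dirichlet case.
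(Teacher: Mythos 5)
Your proposal is correct and follows essentially the same route the paper takes: rescaling $t\mapsto u^2 t$ in $\mathcal{H}_{N(u)}(0)$, splitting at $t=1$, applying Propositions \ref{04.3} and \ref{04.2}, matching against the decomposition (\ref{eq111}), and recombining the $[1,\infty)$ Gaussian tail with the third term of (\ref{eq111}) into the full-range $s$-derivative computed in Proposition \ref{04.4}. Your closing worry about uniformity is not an issue at this stage, since Propositions \ref{04.2} and \ref{04.3} are statements about limits of the integrals as functions of $u$, so the remainders are $o(1)$ by definition once those propositions are granted.
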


The following theorem is the main result of this section which gives the asymptotic expansion of log determinant of the discrete massive Laplacian on discrete tori.
\begin{theorem}
 As $u\to\infty$, the following holds:
\begin{equation}
 \log \det{\Delta^{\tilde{m}}_{DT_{d}(N)}}=V^{d,N}_d\mathcal{L}_{\tilde{m}}(0)+\mathcal{H}_{N(u)}(0),
\end{equation}
where $\mathcal{H}_{N(u)}(0)$ is as in the previous proposition and 
$$\mathcal{L}_{\tilde{m}}(0)=-\int^{\infty}_0(e^{-\tilde{m}^2t}e^{-2dt}I_0(2t)^d-e^{-t})\frac{\dt}{t}$$
\end{theorem}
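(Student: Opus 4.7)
The plan is to derive this theorem directly as the $s\to 0$ specialization of Theorem \ref{T2.2}. By definition, the determinant is the product of the eigenvalues, so
$$
\log\det\Delta^{\tilde m}_{DT_d(N)} = \sum_{\lambda\in\Lambda^{\tilde m}_{DT_d(N)}}\log\lambda,
$$
and the second half of Theorem \ref{T2.2} gives precisely the claimed decomposition $V^{d,N}_d\mathcal{L}_{\tilde m}(0) + \mathcal{H}_{N(u)}(0)$, together with the integral expression for $\mathcal{L}_{\tilde m}(0)$ appearing in the statement.

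The only nonroutine point is to verify that both $\mathcal{L}_{\tilde m}(0)$ and $\mathcal{H}_N(0)$ are finite, so that the $s\to 0$ limit of the identity in Theorem \ref{T2.2} makes sense. For $\mathcal{L}_{\tilde m}(0)$, the standard expansion $I_0(2t) = 1 + t^2 + O(t^4)$ as $t\to 0$ gives $e^{-\tilde m^2 t}e^{-2dt}I_0(2t)^d - e^{-t} = O(t)$, while the asymptotic $I_0(2t)\sim e^{2t}/(2\sqrt{\pi t})$ as $t\to\infty$ gives $e^{-\tilde m^2 t}e^{-2dt}I_0(2t)^d = O(t^{-d/2})$; hence the integrand is integrable on $(0,\infty)$ with respect to $dt/t$. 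For $\mathcal{H}_N(0)$, the difference $\Theta^{\tilde m}_{DT_d(N)}(t) - V^{d,N}_d e^{-\tilde m^2 t}e^{-2dt}I_0(2t)^d$ is $O(t)$ as $t\to 0$ (by the same argument as for equation (\ref{eqn2.9}) in the massless case) and decays exponentially as $t\to\infty$, so the integrand is likewise integrable.

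The phrase ``as $u\to\infty$'' in the statement is in place to emphasize that $N = N(u)$ and $\tilde m = \tilde m(u)$ depend on $u$; the explicit asymptotic expansion of $\mathcal{H}_{N(u)}(0)$ is exactly the content of the preceding proposition, which is where the zeta-regularized determinant $\log\det(\Delta_{T_d(A)}+m^2)$ and the dimension-dependent correction terms enter. Thus the theorem is a direct synthesis of Theorem \ref{T2.2} with that asymptotic proposition, and no genuine new obstacle appears; the main care required is bookkeeping, namely distinguishing the exact identity at fixed $u$ from the $u\to\infty$ asymptotic of its second summand.
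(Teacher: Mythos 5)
Your proposal is correct and follows essentially the same route as the paper: the identity is just the $s\to 0$ case of Theorem \ref{T2.2} (whose proof the paper carries out as in Theorem \ref{T3.1}, including the integrability checks you perform), combined with $\log\det = \sum_\lambda\log\lambda$ and the preceding proposition supplying the $u\to\infty$ asymptotics of $\mathcal{H}_{N(u)}(0)$. No substantive difference from the paper's argument.
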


\begin{example}\label{example1} When $d=1$ and $\tilde{m}=m/u$, using a result in \cite{louis2015asymptotics1}, we have
$$
\mathcal{L}_{\tilde{m}}(0)=\log{\left(\frac{2+\tilde{m}^2}{2}+\sqrt{\left(\frac{2+\tilde{m}^2}{2}\right)^2-1}\right)},
$$
Taylor expanding this function with respect to $\tilde{m}$ around $0$, we get
$$
\mathcal{L}_{\tilde{m}}(0)=\tilde{m}+o(\tilde{m}).
$$
Hence, in the one dimensional case, as $u\to\infty$,
\begin{align*}
  &\log\det {\Delta^{\tilde{m}}_{DT_{d}(N)}}
   \\&=am+\log \det(\Delta_{T_d(A)}+m^2)-am+o(1)\\
   &=\log \det(\Delta_{T_d(A)}+m^2)+o(1).
\end{align*}
\end{example}

Next, we will analyze $\mathcal{L}_{\tilde{m}}(0)$ when $d\geq2$. First, we need the following lemma from \cite{lang2013undergraduate} p. 340.
\begin{lemma}
Let $f$ be a function with two variables $t,x$ defined on $[0,\infty)\times[c,d]$. Assume that $f$ and $f_x:=\frac{\partial f}{\partial x}$ exists and continuous. Assume that 
$$
\int^\infty_{0} f_x(t,x)\dt
$$
converges uniformly for $x\in[c,d]$, and that
$$
g(x)=\int^\infty_{0} f(t,x)\dt
$$
converges for all x. Then g is differentiable, and 
$$
g'(x)=\int^\infty_{0} f_x(t,x)\dt
$$
\end{lemma}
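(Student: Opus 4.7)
The plan is to reduce the improper-integral statement to the standard Leibniz rule on compact intervals, and then pass to the limit using a classical theorem about exchanging limits and derivatives (namely: if $g_n \to g$ pointwise on $[c,d]$ and $g_n'$ converges uniformly on $[c,d]$, then $g$ is differentiable and $g'(x) = \lim_n g_n'(x)$).

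First I would introduce the truncated family
\[
g_n(x) := \int_0^n f(t,x)\,\mathrm{d}t, \qquad x\in[c,d],\; n\in\mathbb{N}.
\]
Because $f$ and $f_x$ are continuous on the compact rectangle $[0,n]\times[c,d]$, the standard differentiation-under-the-integral-sign result (proved on compact intervals via the mean value theorem and uniform continuity of $f_x$) gives that each $g_n$ is differentiable on $[c,d]$ with
\[
g_n'(x) = \int_0^n f_x(t,x)\,\mathrm{d}t.
\]

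Next I would use the two convergence hypotheses. By assumption the integral $\int_0^\infty f(t,x)\,\mathrm{d}t$ converges for each $x\in[c,d]$, so $g_n(x) \to g(x)$ pointwise on $[c,d]$. Also by assumption, $\int_0^\infty f_x(t,x)\,\mathrm{d}t$ converges uniformly in $x\in[c,d]$, which is exactly the statement that $g_n'$ converges uniformly on $[c,d]$ to the function $h(x) := \int_0^\infty f_x(t,x)\,\mathrm{d}t$.

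At this point the conclusion follows from the classical theorem cited above: if a sequence of differentiable functions converges at one point of an interval and the sequence of derivatives converges uniformly on that interval, then the original sequence converges uniformly to a differentiable limit whose derivative is the uniform limit of the derivatives. Applying this to $g_n$ and $g_n'$ yields that $g$ is differentiable with $g'(x)=h(x)=\int_0^\infty f_x(t,x)\,\mathrm{d}t$, as required. There is no real obstacle here; the only point requiring care is the invocation of the compact-interval Leibniz rule, which needs joint continuity of $f_x$ on $[0,n]\times[c,d]$, and this is given by hypothesis.
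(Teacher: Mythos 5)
Your proof is correct. The paper does not prove this lemma at all --- it simply quotes it from Lang's \emph{Undergraduate Analysis} (p.~340) --- so there is no in-paper argument to compare with; your reduction to the compact truncations $g_n(x)=\int_0^n f(t,x)\,\mathrm{d}t$ combined with the classical theorem on uniform convergence of derivatives is the standard proof and is sound (equivalently, since each $g_n'$ is continuous, one could finish by writing $g_n(x)=g_n(c)+\int_c^x g_n'(s)\,\mathrm{d}s$, passing to the uniform limit, and applying the fundamental theorem of calculus).
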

When $d\geq2$, using the above lemma, we can show that $\mathcal{L}_{\tilde{m}}(0)$ is $d-1$ times differentiable as a function of $\tilde{m}$. This is the content of the following theorem. 

\begin{theorem}\label{lasttheorem} Let $\tilde{m}=m/u$ and $d\geq 2$. Then,
$$
\begin{aligned}
\mathcal{L}_{\tilde{m}}(0)=&-\int^{\infty}_0(e^{-2dt}I_0(2t)^d-e^{-t})\frac{\dt}{t}\\
&-\tilde{m}\int_0^{\infty}\dfrac{\partial f}{\partial\tilde{m}}(0,t)\,\dt-,\dots,\\
&-\tilde{m}^{d-1}\int_0^{\infty}\dfrac{\partial^{d-1} f}{\partial\tilde{m}^{d-1}}(0,t)\,\dt+o(\tilde{m}^{d-1}).
\end{aligned}
$$
where $f(t, \tilde{m})=(e^{-\tilde{m}^2t}e^{-2dt}I_0(2t)^d-e^{-t})/{t}$. Moreover, the terms with odd order derivative approach to zero as $\tilde{m}\to 0$.
\end{theorem}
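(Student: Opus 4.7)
The strategy is to apply Taylor's theorem in the variable $\tilde m$ to
$$\mathcal{L}_{\tilde m}(0) = -\int_0^\infty f(t,\tilde m)\,\dt,$$
after first establishing enough regularity to differentiate under the integral sign via the quoted lemma on interchanging $d/d\tilde m$ with an improper integral. The first observation is that $f(t,\tilde m)$ depends on $\tilde m$ only through the factor $e^{-\tilde m^2 t}$, so $f$ is an \emph{even} function of $\tilde m$. Consequently $\partial^{k} f/\partial \tilde m^{k}(t,0)\equiv 0$ for every odd $k$, which is precisely the final assertion of the theorem that the odd-order terms vanish as $\tilde m\to 0$.

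To obtain the main expansion I would differentiate under the integral sign up to order $d-1$, iteratively applying the quoted lemma. A direct computation yields, for even $k=2j$,
$$\frac{\partial^{2j}f}{\partial\tilde m^{2j}}(t,0) = \frac{(-1)^j (2j)!}{j!}\, t^{j-1} e^{-2dt} I_0(2t)^d.$$
Using the known asymptotics $e^{-2dt}I_0(2t)^d \sim (4\pi t)^{-d/2}$ as $t\to\infty$ and $I_0(2t)=1+O(t^2)$ as $t\to 0$, this integrand behaves like $t^{j-1-d/2}$ at infinity and like $t^{j-1}$ near the origin. Hence it lies in $L^1(0,\infty)$ exactly when $1\le j<d/2$, i.e.\ when $2j\le d-1$; this is the sharp reason the expansion terminates at order $d-1$. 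For the base case $k=0$, integrability near $t=0$ is ensured by the cancellation $e^{-2dt}I_0(2t)^d-e^{-t}=O(t)$ already built into $f$.

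To actually invoke the quoted lemma iteratively, one also needs the uniform convergence of $\int_0^\infty \partial^{k+1}_{\tilde m} f(t,\tilde m)\,\dt$ for $\tilde m$ in a small neighborhood of $0$. For $\tilde m\ne 0$ the factor $e^{-\tilde m^2 t}$ produces exponential decay at infinity, and a dominated-convergence argument using the $\tilde m=0$ integrand as a majorant on $t\in[1,\infty)$ (after a crude bound on the polynomial prefactor in $t$ arising from differentiating $e^{-\tilde m^2 t}$) supplies the required uniformity on $\tilde m\in[-\delta,\delta]$. Once $\mathcal{L}_{\tilde m}(0)$ is shown to be $C^{d-1}$ at $\tilde m=0$, Taylor's formula with remainder delivers the stated asymptotic expansion.

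The main obstacle is the control of the remainder $o(\tilde m^{d-1})$: at order $k=d$ the integrand $t^{d/2-1}e^{-2dt}I_0(2t)^d$ is on the borderline of integrability at infinity (and strictly non-integrable when $d$ is even), so the remainder cannot be bounded by a single convergent integral of $\partial^d_{\tilde m}f(t,0)$. Instead one must use Taylor's theorem with integral form of the remainder, exploit the surviving factor $e^{-\tilde m^2 t}$ in $\partial^d_{\tilde m} f(t,\tilde m)$ for $\tilde m\ne 0$ to tame the extra power of $t$, and split the integral at a natural scale such as $t\sim 1/\tilde m^2$, estimating the two pieces separately. This borderline case is the most delicate step of the argument.
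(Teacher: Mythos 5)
Your overall route is the same as the paper's: establish $(d-1)$-fold differentiability of $\mathcal{L}_{\tilde m}(0)$ in $\tilde m$ by differentiating under the integral sign via the quoted lemma, use evenness of $f$ in $\tilde m$ to dispose of the odd-order terms (indeed your argument shows they vanish identically, a cleaner form of the paper's ``moreover''), note that the order-$k$ integrals converge exactly when $k<d$, and conclude by Taylor's theorem. The one place where your sketch is too thin is precisely where the paper does its real work: the uniform convergence needed to iterate the differentiation lemma. For the $k$-th derivative the worst term is (up to a constant) $\tilde m^{k}t^{k-1}e^{-\tilde m^2 t}e^{-2dt}I_0(2t)^d$, and the ``crude'' domination you describe --- bounding $e^{-\tilde m^2 t}\le 1$ and $\tilde m\le\delta$ so as to compare with the $\tilde m=0$ profile --- produces a majorant of size $t^{k-1-d/2}$ at infinity, which is \emph{not} integrable once $k\ge d/2$ (already for $d=2$, $k=1$). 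One must exploit the interplay between the prefactor $\tilde m^{j}$ and the factor $e^{-\tilde m^2 t}$, e.g.\ via $(\tilde m^2 t)^{j/2}e^{-\tilde m^2 t}\le C_j$, which turns every term into $C\,t^{k/2-1}e^{-2dt}I_0(2t)^d$, integrable at infinity precisely when $k<d$ and independent of $\tilde m$. This is exactly the content of the paper's Lemma \ref{keylastlemma} (uniform convergence of $\int_0^\infty \tilde m^{a-b}e^{-\tilde m^2 t}e^{-2dt}I_0(2t)^d t^{a-1}\,\dt$ for $a+b<d$, proved there by the substitution $c=A\tilde m^2$). So the mechanism you postpone to the ``remainder'' stage is needed already here; once it is made explicit, your uniformity claim is correct.

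Your final paragraph, by contrast, misidentifies the difficulty. The theorem asserts only a Peano-type remainder $o(\tilde m^{d-1})$, so once $\mathcal{L}_{\tilde m}(0)$ is $(d-1)$ times differentiable near $\tilde m=0$, Taylor's formula in Peano form gives the stated expansion outright: no information about the $d$-th derivative, no integral form of the remainder, and no splitting of the $t$-integral at $t\sim\tilde m^{-2}$ is required. This is exactly why the paper stops at order $d-1$ --- the order-$d$ integrals are the borderline divergent ones and are never used --- so the ``most delicate step'' you flag is in fact vacuous, while the genuinely delicate step is the uniformity lemma discussed above.
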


Before proving the theorem, let us prove a lemma.
\begin{lemma}\label{keylastlemma} The improper integral
$$
\int^{\infty}_0\tilde{m}^{a-b}e^{-\tilde{m}^2t}e^{-2dt}I_0(2t)^dt^{a-1}\dt
$$
converges uniformly for all $a,b\in \mathbb{R}$ such that $a\geq b \geq 0$ and $a+b<d$.
\end{lemma}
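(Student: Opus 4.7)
The plan is to verify uniform convergence in $\tilde{m} \in [0,M]$ (for any fixed $M>0$) by treating the behaviors of the integrand near $t=0$ and $t=\infty$ separately. The two analytic inputs are the elementary bound $I_0(2t) = O(1)$ as $t \to 0$ and the asymptotic $e^{-2dt}I_0(2t)^d = O(t^{-d/2})$ as $t \to \infty$, both of which are already used earlier in the paper.

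For the contribution near zero, on a short interval $[0,\delta]$ the integrand is bounded by $C\,\tilde{m}^{a-b}\,t^{a-1}$. Local integrability forces $a>0$, and the hypothesis $a \geq b$ yields $\tilde{m}^{a-b} \leq M^{a-b}$, so $\int_0^\delta \cdots\,\dt \leq C\,M^{a-b}\,\delta^a/a$, which tends to $0$ as $\delta \to 0$ independently of $\tilde{m}$.

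For the tail on $[T,\infty)$, first observe that $a \geq b$ and $a+b < d$ together force $2a < d$, so that $t^{a-1-d/2}$ is already integrable at infinity even without the exponential factor. Using the large-$t$ estimate, the tail is bounded by $C\,\tilde{m}^{a-b}\int_T^\infty e^{-\tilde{m}^2 t}\,t^{a-1-d/2}\,\dt$. I would then split the range of $\tilde{m}$ at $\tilde{m} = 1/\sqrt{T}$. When $\tilde{m}^2 T \leq 1$, drop the exponential and compute $\int_T^\infty t^{a-1-d/2}\,\dt = T^{a-d/2}/(d/2-a)$, giving the uniform bound $M^{a-b}T^{a-d/2}/(d/2-a)\to 0$. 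When $\tilde{m}^2 T \geq 1$, substitute $u = \tilde{m}^2 t$ to rewrite the tail as $\tilde{m}^{d-a-b}\int_{\tilde{m}^2 T}^\infty u^{a-1-d/2}e^{-u}\,\d u$, and bound $u^{a-1-d/2} \leq 1$ on this range (the exponent is negative) to obtain the bound $\tilde{m}^{d-a-b}e^{-\tilde{m}^2 T}$.

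The main obstacle lies in this second subcase: neither $\tilde{m}^{d-a-b}$ nor $e^{-\tilde{m}^2 T}$ individually decays to zero uniformly in $\tilde{m}$ as $T \to \infty$, so one must exploit the product. The critical point of $\tilde{m} \mapsto \tilde{m}^{d-a-b}e^{-\tilde{m}^2 T}$ is $\tilde{m}^{*} = \sqrt{(d-a-b)/(2T)}$, and by comparing values at $\tilde{m}^{*}$ (when it lies in the interval) and at the endpoints $\tilde{m} = 1/\sqrt{T}$ and $\tilde{m} = M$, one checks that the supremum of this product over $[1/\sqrt{T}, M]$ is of order $T^{-(d-a-b)/2}$, which tends to $0$ because $d-a-b > 0$. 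Combining the three estimates yields uniform smallness of both endpoint tails, and hence uniform convergence of the improper integral in $\tilde{m}$.
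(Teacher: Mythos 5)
There is a genuine gap in your tail estimate. The assertion that $a\geq b$ and $a+b<d$ ``force $2a<d$'' is false: the hypotheses bound $2b$, not $2a$, by $d$ (take $a=3$, $b=0$, $d=4$). Consequently the step in your subcase $\tilde{m}^2T\leq 1$ where you drop the exponential and evaluate $\int_T^\infty t^{a-1-d/2}\,\mathrm{d}t=T^{a-d/2}/(d/2-a)$ is invalid whenever $a\geq d/2$: there the integral without the factor $e^{-\tilde{m}^2t}$ diverges, so no uniform smallness can be extracted this way. These parameter values are not exotic in the intended application of the lemma (in Theorem \ref{lasttheorem} the $k$-th $\tilde{m}$-derivative produces terms with $a=k-j$, $b=j$, $a+b=k\leq d-1$, so e.g.\ $a=d-1$, $b=0$ gives $a\geq d/2$ for every $d\geq 2$). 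In this regime the only source of uniformity is the interplay between the prefactor $\tilde{m}^{a-b}$ and the exponential: substituting $u=\tilde{m}^2t$ for \emph{all} small $\tilde{m}$ (not only when $\tilde{m}^2T\geq1$) turns the tail into $\tilde{m}^{d-a-b}\int_{\tilde{m}^2T}^{\infty}e^{-u}u^{a-d/2-1}\,\mathrm{d}u$, and one must then control the possible blow-up of this integral as its lower limit tends to $0$ (which happens exactly when $a\leq d/2$) against the vanishing factor $\tilde{m}^{d-a-b}$; this is how the paper proceeds, splitting on $A\tilde{m}^2\gtrless1$ rather than on the sign of $a-d/2$.

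A secondary, fixable slip in the same spirit: in your subcase $\tilde{m}^2T\geq1$ the pointwise bound $u^{a-1-d/2}\leq 1$ again presupposes a negative exponent, i.e.\ $a\leq 1+d/2$, which the hypotheses do not guarantee; for larger $a$ you should instead absorb the power into the exponential, e.g.\ $u^{a-1-d/2}e^{-u}\leq C_{a,d}\,e^{-u/2}$. Your treatment of the $t\to0$ end and your optimization of $\tilde{m}^{d-a-b}e^{-\tilde{m}^2T}$ over the large-$\tilde{m}$ range are fine (the latter is correct and is not needed in the paper's version, which restricts to $\tilde{m}\in[0,1]$), but as written the proof does not cover all $(a,b)$ allowed by the statement, so the lemma is not established.
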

\begin{proof}
We need to show that for any $\epsilon>0$, there exists a $A>0$ such that for all $\tilde{m}\in[0,1]$,
\begin{eqnarray}
\Big|\int^{\infty}_A\tilde{m}^{a-b}e^{-\tilde{m}^2t}e^{-2dt}I_0(2t)^dt^{a-1}\dt\Big|<\epsilon.
\end{eqnarray}
Using Lemma 4.1 from \cite{chinta2010zeta}, given $A>0$ there exists $C>0$ such that 

\begin{equation}\label{inequality4.5}
 \Big|\int^{\infty}_A\tilde{m}^{a-b}e^{-\tilde{m}^2t}e^{-2dt}I_0(2t)^dt^{a-1}\dt\Big|
\leq C\Big|\int^{\infty}_A\tilde{m}^{a-b}e^{-\tilde{m}^2t}t^{-d/2}t^{a-1}\dt\Big|
\end{equation}

Let $c=A\tilde{m}^2$. Now, from (\ref{inequality4.5}) we have:
$$
\begin{aligned}
C\Big|\int^{\infty}_A\tilde{m}^{a-b}e^{-\tilde{m}^2t}t^{a-d/2-1}\dt\Big|=&C\Big|\int^{\infty}_c\tilde{m}^{d-a-b}e^{-t}t^{a-d/2-1}\dt\Big|\\
\leq&C\Big|\int^{\infty}_c(\sqrt{\frac{c}{A}})^{d-a-b}e^{-t}t^{a-d/2-1}\dt\Big|\\
\leq&C\Big|\int^{\infty}_c\sqrt{\frac{c}{A}}e^{-t}t^{a-d/2-1}\dt\Big|.
\end{aligned}
$$
We will show that $C\Big|\int^{\infty}_c\sqrt{\frac{c}{A}}e^{-t}t^{a-d/2-1}\dt\Big|$ is bounded. For this we will analyze the the cases $c\geq 1$ and $c<1$. First, consider the $c\geq1$ case. Let $x=\log t$ and $r=\log c$. Then, $e^x>r/2+x^2$ for all $x>c$. Hence,
$$
\begin{aligned}
C\Big|\int^{\infty}_c\sqrt{\frac{c}{A}}e^{-t}t^{a-d/2-1}\dt\Big|\leq&C\Big| e^{r/2}\int^{\infty}_r\sqrt{\frac{1}{A}}e^{-r/2-x^2}e^{x(a-d/2)}\d x\Big|\\
\leq&C\Big|\int^{\infty}_0\sqrt{\frac{1}{A}}e^{-x^2}e^{x(a-d/2)}\d x\Big|\\
\leq&\frac{C'}{\sqrt{A}}.
\end{aligned}
$$

When $c<1$ and $a-d/2>0$ or $d$ odd,
$$
\begin{aligned}
C\Big|\int^{\infty}_c\sqrt{\frac{c}{A}}e^{-t}t^{a-d/2-1}\dt\Big|\leq&C\Big|\int^{\infty}_0\sqrt{\frac{c}{A}}e^{-t}t^{a-d/2-1}\dt\Big|\\
\leq&C\Big|\int^{\infty}_0\sqrt{\frac{1}{A}}e^{-t}t^{a-d/2-1}\dt\Big|\leq\frac{C'}{\sqrt{A}}.
\end{aligned}
$$

When $c<1$ and $a-d/2\leq 0$,
$$
\begin{aligned}
C\Big|\int^{\infty}_c\sqrt{\frac{c}{A}}e^{-t}t^{a-d/2-1}\dt\Big|\leq&C\Big|\int^{\infty}_1\sqrt{\frac{c}{A}}e^{-t}t^{a-d/2-1}\dt\Big|+C\Big|\int^{1}_c\sqrt{\frac{c}{A}}e^{-t}t^{a-d/2-1}\dt\Big|\\
\leq&\frac{C}{\sqrt{A}}+C\Big|\int^{1}_c\sqrt{\frac{1}{A}}t^{-1}\dt\Big|\leq-C\sqrt{\frac{c}{A}}\log c+\frac{C}{\sqrt{A}} \leq\frac{C'}{\sqrt{A}}.
\end{aligned}
$$
Now, (4.5) follows from choosing $A>0$ so that $\frac{C'}{\sqrt{A}}<\epsilon.$
\end{proof}
Now, we prove Theorem \ref{lasttheorem}. We need to show that $\mathcal{L}_{\tilde{m}}(0)$ is $d-1$ times differentiable with respect to $\tilde{m}$, then the theorem will follow. We first observe that $f(t,\tilde{m})$ is continuous on $(0,\infty)\times [0,r]$ for any $r>0$.

If we assume $\mathcal{L}_{\tilde{m}}(0)$ is sufficiently differentiable, then any $k$th derivative is going to be sum of the terms of the form $\int \tilde{m}^{a-b}e^{-\tilde{m}^2t-2dt}I_0(2t)^dt^{a-1}dt$ for $a+b=k$. This means that if we want to show $\mathcal{L}_{\tilde{m}}(0)$ has $k$th order derivative inductively, we only need to check such integrals exist, which is the content of the Lemma \ref{keylastlemma}. This completes proof of Theorem \ref{lasttheorem}.

 Hence, the asymptotic expansion of $V_{d}^{d,N(u)}\mathcal{L}_{{m}/{u}}(0)$ as $u\to\infty$, from Theorem \ref{lasttheorem}, has the following form:
\begin{eqnarray}
V_{d}^{d,N(u)}\mathcal{L}_{{m}/{u}}(0)=u^d\mathrm{Vol}(T_d(A))\mathcal{L}_{{m/u}}(0)=\sum_{k=0}^{d-1}m^ku^{d-k}c_k+o(u).
\end{eqnarray}

\begin{remark}
If we further assume that $$ V_{d}^{d,N(u)}\mathcal{L}_{{m}/{u}}(0)=\sum_{k=0}^{d-1}m^ku^{d-k}c_k+\widetilde{C}\log u+C+o(1)$$ as $u\to\infty$ where $\widetilde{C}$ and $C$ are constants, then the constant term in the asymptotic expansion of $\log\det(\Delta_{DT_d(N(u)}^{m/u})$ is given by $\log\det(\Delta_{T_d(A)}^{m})+C+C'$ where $C'$ is such that $\mathcal{H}_{N(u)}(0)=\log\det(\Delta_{T_d(A)}^{m})+C'.$ This means that, in this case, $\log\det(\Delta_{T_d(A)}^{m})$ can be read off from the asymptotic expansion of $\log\det(\Delta_{DT_d(N(u)}^{m/u})$ as $u\to\infty$. We have seen in example \ref{example1} that for $d=1$ our assumption holds. For $d=2$, it is shown in \cite{duplanticr1988exact} (p.361) that the constant term in the asymptotic expansion of $\log\det(\Delta_{DT_2(N(u)}^{m/u})$ is given by
$$
\log\det(\Delta_{T_2(A)}^{m})+\frac{4G}{\pi}V(A)+m^2V(A)\frac{5}{4\pi}\log{2}. 
$$

\end{remark}

\subsection{Two dimension spanning forest}

Consider two dimensional massive Laplacian on the interval and the torus. The eigenvalues of this two is already mentioned in former section. 
Then we have for the discrete hypercube $N'=(n_1,n_2)$, the determinant is:
$$
\det\Delta_{L_2(N')}^m)^4=\prod\left(4+m^2-2\cos(\frac{m_1\pi}{n_1})-2\cos(\frac{m_2\pi}{n_2})\right)
$$
here $m_i$ runs from 1 to $n_i-1$, and $m_i$ can not all be zero.
And for the torus of size $N=(2n_1,2n_2)$, we have:
$$
\det\Delta_{DT_2(N)}^m=\prod_{m_1,m_2}\left(4+m^2-2\cos\left(\frac{m_1\pi}{n_1}\right)-2\cos\left(\frac{m_2\pi}{n_2}\right)\right)
$$
here $m_i$ runs from 0 to $2n_i-1$, and $m_i$ can not all be zero.

We could reach that for the interval with Dirichlet boundary condition:
$$
\det\Delta_{L_2(N')}^m)^4=\prod_{m_1=1}^{m_1=n_1-1}\prod_{m_2=1}^{m_2=n_2-1}\left(4+m^2-2\cos\left(\frac{m_1\pi}{n_1}\right)-2\cos\left(\frac{m_2\pi}{n_2}\right)\right)
$$

For the torus case, we could rewrite it into the form:
$$\begin{aligned}
\det\Delta_{DT_2(N)}^m=
&\prod_{m_1=1}^{m_1=n_1-1}\left(2+m^2-2\cos\left(\frac{m_1\pi}{n_1}\right)\right)^2\prod_{m_2=1}^{m_2=n_2-1}\left(2+m^2-2\cos\left(\frac{m_2\pi}{n_2}\right)\right)^2\\
&\prod_{m_1=1}^{m_1=n_1-1}\left(6+m^2-2\cos\left(\frac{m_1\pi}{n_1}\right)\right)^2\prod_{m_2=1}^{m_2=n_2-1}\left(6+m^2-2\cos\left(\frac{m_2\pi}{n_2}\right)\right)^2\\
&\prod_{m_1=1}^{m_1=n_1-1}\prod_{m_2=1}^{m_2=n_2-1}\left(4+m^2-2\cos\left(\frac{m_1\pi}{n_1}\right)-2\cos\left(\frac{m_2\pi}{n_2}\right)\right)^4(8+m^2)(4+m^2)^2
\end{aligned}
$$

Compare this two formulas, we have:
$$\begin{aligned}
&\frac{\det\Delta_{DT_2(N)}^m}{(\det\Delta_{L_2(N')}^m)^4}=(8+m^2)(4+m^2)^2\times\\
&\prod_{m_1=1}^{m_1=n_1-1}\left(6+m^2-2\cos\left(\frac{m_1\pi}{n_1}\right)\right)^2\prod_{m_2=1}^{m_2=n_2-1}\left(6+m^2-2\cos\left(\frac{m_2\pi}{n_2}\right)\right)^2\times\\
&\prod_{m_1=1}^{m_1=n_1-1}\left(2+m^2-2\cos\left(\frac{m_1\pi}{n_1}\right)\right)^2\prod_{m_2=1}^{m_2=n_2-1}\left(2+m^2-2\cos\left(\frac{m_2\pi}{n_2}\right)\right)^2
\end{aligned}
$$
and we can use the following formula in \cite{louis2015formula} to simplify it:
$$
\prod_{k=1}^{n-1}\left(2x-2\cos\left(\frac{k\pi}{n}\right)\right)=(x+\sqrt{x^2-1})^n+(x-\sqrt{x^2-1})^n-2.
$$

\begin{remark}
Using the same idea, we can give a relationship between the the determinant of massive Laplacian on torus and the hypercube with free boundary condition which generalizes a key result of \cite{louis2017asymptotics} in the sense that if we take $m=0$ our result matches with the result in \cite{louis2017asymptotics}. 
\end{remark}

\section{Acknowledgements}
We want to thank Alberto Cattaneo for many inspiring discussions. We also want to thank an anonymous referee for the suggestions for the improvement.


\bibliographystyle{amsplain}
\bibliography{reference}
Yuhang Hou,
Mathematisches Institute, 
University of Freiburg, 
Freiburg im Breisgau, Germany\\
email: yuhang.hou@math.uni-freiburg.de
\\
Santosh Kandel, 
Institut f\"ur Mathematik,
University of Zurich, Zurich, Switzerland\\
email: skandel1@alumni.nd.edu

\end{document}